\def\maketitle{
	\@author@finish
	\title@column\titleblock@produce
	\suppressfloats[t]}
\newcommand{\suppl}{Supplemental Materials}
\theoremstyle{definition}
\newtheorem{thm}{Theorem}
\newtheorem*{central-fact}{Central Fact}
\theoremstyle{definition}
\newtheorem{lemma}[thm]{Lemma}
\newcommand{\eq}[1]{(\hyperref[eq:#1]{\ref*{eq:#1}})}
\renewcommand{\sec}[1]{\hyperref[sec:#1]{Section~\ref*{sec:#1}}}
\newcommand{\thrm}[1]{\hyperref[thrm:#1]{Theorem~\ref*{thrm:#1}}}
\newcommand{\lemm}[1]{\hyperref[lemm:#1]{Lemma~\ref*{lemm:#1}}}
\newcommand{\prop}[1]{\hyperref[prop:#1]{Proposition~\ref*{prop:#1}}}
\newcommand{\corr}[1]{\hyperref[corr:#1]{Corollary~\ref*{corr:#1}}}
\newcommand{\fig}[1]{\hyperref[fig:#1]{~\ref*{fig:#1}}}
\newcommand{\deff}[1]{\hyperref[deff:#1]{~\ref*{deff:#1}}}
\newcommand{\mB}{\mathcal{B}}
\newcommand{\mC}{\mathcal{C}}
\newcommand{\mD}{\mathcal{D}}
\newcommand{\mO}{\mathcal{O}}
\newcommand{\mS}{\mathcal{S}}
\newcommand{\mV}{\mathcal{V}}
\DeclareMathOperator{\id}{id}
\DeclareMathAlphabet{\mathmybb}{U}{bbold}{m}{n}
\newcommand{\1}{\mathmybb{1}}
\newcommand{\dm}[1]{\ketbra{#1}{#1}}
\newcommand{\bal}{\begin{align}\begin{aligned}}
\newcommand{\eal}{\end{aligned}\end{align}}
\newcommand{\lset}{\left\{ }
\newcommand{\rset}{\right\}}
\newcommand{\SSp}[1]{\mD_{#1}}
\newcommand{\CSp}[2]{\mathrm{CPTP}_{#1 \to #2}}
\newcommand{\FSSp}[1]{\mS_{#1}}
\newcommand{\FCSp}[2]{\mO_{#1 \to #2}}
\newcommand{\mincomp}[2]{\FSSp{#1}\!\otimes_{\text{min}}\!\FSSp{#2}}
\newcommand{\maxcomp}[2]{\FSSp{#1}\!\otimes_{\text{max}}\!\FSSp{#2}}
\newcommand{\sepcomp}[2]{\FSSp{#1}\!\otimes_{\text{sep}}\!\FSSp{#2}}
\newcommand{\affcomp}[2]{\FSSp{#1}\!\otimes_{\text{aff}}\!\FSSp{#2}}
\newcolumntype{L}[1]{>{\raggedright}p{#1}}
\newcolumntype{C}[1]{>{\centering}p{#1}}
\newcolumntype{R}[1]{>{\raggedleft}p{#1}}
\newcolumntype{D}{>{\centering\arraybackslash}X}
\newtcolorbox{mybox}[1]{colback=red!5!white,colframe=red!65!black,fonttitle=\bfseries,title=#1,boxrule=0.7pt, breakable}
\newtheorem{example}{Example}
\begin{document}
\title{Catalytic channels are the only noise-robust catalytic processes}
\author{Jeongrak Son}
\affiliation{School of Physical and Mathematical Sciences, Nanyang Technological
University, 21 Nanyang Link, 637371 Singapore, Republic of Singapore}

\author{Ray Ganardi}
\affiliation{School of Physical and Mathematical Sciences, Nanyang Technological
University, 21 Nanyang Link, 637371 Singapore, Republic of Singapore}

\author{Shintaro Minagawa}
\affiliation{Graduate School of Informatics, Nagoya University, Furo-cho, Chikusa-Ku, Nagoya 464-8601, Japan}
\affiliation{Institute of Systems and Information Engineering, University of Tsukuba, 1-1-1, Tennodai, Tsukuba, Ibaraki 305-8573, Japan}
\affiliation{Aix-Marseille University, CNRS, LIS, 13288 Marseille CEDEX 09, France}

\author{Francesco Buscemi}
\affiliation{Graduate School of Informatics, Nagoya University, Furo-cho, Chikusa-Ku, Nagoya 464-8601, Japan}

\author{Seok Hyung Lie}
\email{seokhyung@unist.ac.kr}
\affiliation{Department of Physics, Ulsan National Institute of Science and Technology (UNIST), Ulsan 44919, Republic of Korea}

\author{Nelly H.Y. Ng}
\email{nelly.ng@ntu.edu.sg}
\affiliation{School of Physical and Mathematical Sciences, Nanyang Technological
University, 21 Nanyang Link, 637371 Singapore, Republic of Singapore}
\affiliation{Centre for Quantum Technologies, Nanyang Technological University, 50 Nanyang Avenue, 639798 Singapore}
\date{\today}

\begin{abstract}
Catalysis refers to the possibility of enabling otherwise inaccessible quantum state transitions by supplying an auxiliary system, provided that the    auxiliary is returned to its initial state at the end of the protocol. 
We show that previous studies on catalysis are largely impractical, because even small errors in the system’s initial state can irreversibly degrade the catalyst. To overcome this limitation, we introduce \textit{robust catalytic transformations} and explore the fundamental extent of their capabilities. 
We demonstrate that robust catalysis is closely tied to the property of resource broadcasting.
In particular, in completely resource non-generating theories, robust catalysis is possible if and only if resource broadcasting is possible. 
We develop a no-go theorem under a set of general axioms, demonstrating that robust catalysis is unattainable for a broad class of quantum resource theories. However, surprisingly, we also identify thermodynamical scenarios where maximal robust catalytic advantage can be achieved. 
Our approach clarifies the practical prospects of catalytic advantage for a wide range of quantum resources, including entanglement, coherence, thermodynamics, magic, and imaginarity. 
\end{abstract}

\maketitle

\emph{Introduction}---%
The preparation of quantum states constitutes a central task of quantum information processing.
To this end, the ability to couple a quantum system to additional auxiliaries provides a way to activate fundamental performance advantages that are otherwise unattainable.
One obvious reason is that auxiliaries may inject extra quantum resources, such as energy, entanglement, or correlations.
A particularly intriguing scenario is when the advantages can be reaped even if the auxiliary system is used \emph{catalytically}, akin to those of chemical processes, where the final state of the auxiliary system is identical to its initial state~\cite{jonathan1999entanglement}.
This is interesting since it means that a one-off investment in the preparation of such catalysts provides a long-term, sustainable advantage.

Catalysis has proven to be surprisingly useful across various fields~\cite{LipkaBartosik2023CatReview,Datta2023CatReview}.
Classical catalytic computation~\cite{Buhrman2014CatalyticSpace, Buhrman2018CatSpace2, Cook2025CatalyticTree3} led to unexpected space-efficient algorithms and a space-time trade-off~\cite{Cook2020CatalyticTree, Cook2024CatalyticTree2, Williams2025SimulatingTime}, while quantum catalysis enables efficient quantum circuit compilations~\cite{Amy2023CatalyticEmbedding, Takeuchi2024MBQCCat, Kissinger2024CatalysingCompleteness, Buhrman2025QuantumCatalyticSpace, Khattar2025ConditionallyClean, Kim2025CatalyticzRotation, Gidney2025Adder}, including the seminal Cirac-Zoller gate~\cite{Cirac1994_Catalysis, Cirac1995CiracZollerGate}.
Catalysts also enhance quantum thermal machines~\cite{Ghosh2017CatalysisEngine, Henao2021catalytic, Henao2023Catalytic, Biswas2024Catalytic, Lobejko2024Catalytic}, without consuming additional resources.

Beyond their practical value, it is inherently interesting and puzzling that the mere presence of a catalyst can yield substantial advantages. 
To rigorously understand these gains, one must first establish the fundamental limits of what is achievable without catalysts.
Quantum resource theories~\cite{Chitambar2019QRTReview} provide a powerful, unified framework for this purpose, allowing systematic analysis of resources such as entanglement~\cite{Horodecki2009EntanglementReview}, coherence~\cite{Streltsov2017CoherenceReview}, and magic~\cite{Bravyi2005Magic}.
By defining a clear dichotomy between free and infeasible operations, resource theories facilitate precise comparisons between non-catalytic and catalytic capabilities for a given task. 
Since the initial example of enhancing entanglement manipulation~\cite{jonathan1999entanglement}, catalytic transformations have been shown to convert the majorization relations into trumping relations characterized by a set of inequalities~\cite{Turgut2007Trumping, Auburn2008Trumping, Klimesh2007Trumping, Brandao2015_2ndlaws}, relate one-shot to asymptotic transformations~\cite{Duan2005_DuanState, Wilming2021ReversibleCatalysis, Shiraishi2021GP, Kondra_2021}, spontaneously break symmetry~\cite{Stephen2025ManybodyCat}, and even enable arbitrary amplification of resources related to sensitivity in quantum metrology~\cite{Ding_2021,Takagi2022CorrCat, Shiraishi_2024,Kondra_2024,Zhang_2024}.

To this day, catalytic advantages have been demonstrated almost exclusively under idealized, noise-free conditions.
Practical catalytic advantage, however, critically depends on resilience to implementation errors.
Progress has been impeded by the fact that even the smallest deviations in the catalyst output state can compromise the catalytic paradigm.
This vulnerability is most evident in embezzlement~\cite{PhysRevA.67.060302,leung2014characteristics,ng2015limits}, where an arbitrarily small change in the catalyst state can enable pathological behavior, making any transformation catalytically possible. 

There are three primary sources of errors in catalytic protocols (see Fig.~\ref{fig:illustration}):
(i) errors in the implementation of the channel $\Lambda$, which can only be mitigated in a limited sense;
(ii) errors in preparing the system state $\rho$;
and (iii) errors in preparing the catalyst state $\tau$. 
The third case is easily managed, since catalyst preparation is a one-off occurrence: 
the data-processing inequality ensures that the final catalyst error remains bounded by the initial error.
However, when fresh errors occur in preparing system states $\rho_{\epsilon}$ during repeated use of a catalyst, those errors can accumulate, causing the catalyst to degrade.
In the worst case, the accumulated error on the catalyst can grow linearly with the number of repetitions. 
Thus, catalyst degradation due to preparation errors in the system is the more serious issue and will be the focus of this work.

A potential way to address system preparation errors would be to consider catalysis not only on the level of specific state transformations on the system, but \emph{catalytic implementations of channels}~\cite{Vidal_catchannel, Boes2018_randomness, Lie2021Generecity, Lie2021CatalyticRandomness}.
In fact, most examples in quantum computing~\cite{Buhrman2014CatalyticSpace, Buhrman2018CatSpace2, Cook2025CatalyticTree3, Cook2020CatalyticTree, Cook2024CatalyticTree2, Williams2025SimulatingTime, Amy2023CatalyticEmbedding, Takeuchi2024MBQCCat, Kissinger2024CatalysingCompleteness, Buhrman2025QuantumCatalyticSpace, Khattar2025ConditionallyClean, Kim2025CatalyticzRotation, Gidney2025Adder, Cirac1995CiracZollerGate} satisfy this notion as they focus on compiling (unitary) channels rather than merely state transformations.
Such channels can also be viewed as a special case of catalysis in dynamical resource theories~\cite{Devetak2008Dynamical, Chiribella2008Supermap, Rosset2018Dynamical, Liu2019Dynamical, Gour2019Dynamical, Takagi2020Dynamical, Gour2020Dynamical}, but systematic investigation in this direction remains largely unexplored~\cite{LipkaBartosik2023CatReview,Datta2023CatReview}. However, catalytic channels have insofar been perceived as an overly conservative way of characterizing catalytic advantage: the preservation of the catalyst regardless of input state means that one cannot make any strategic choices of fine-tuning the catalyst according to knowledge of the input. 

\begin{figure}
	\centering
	\includegraphics[width=\linewidth]{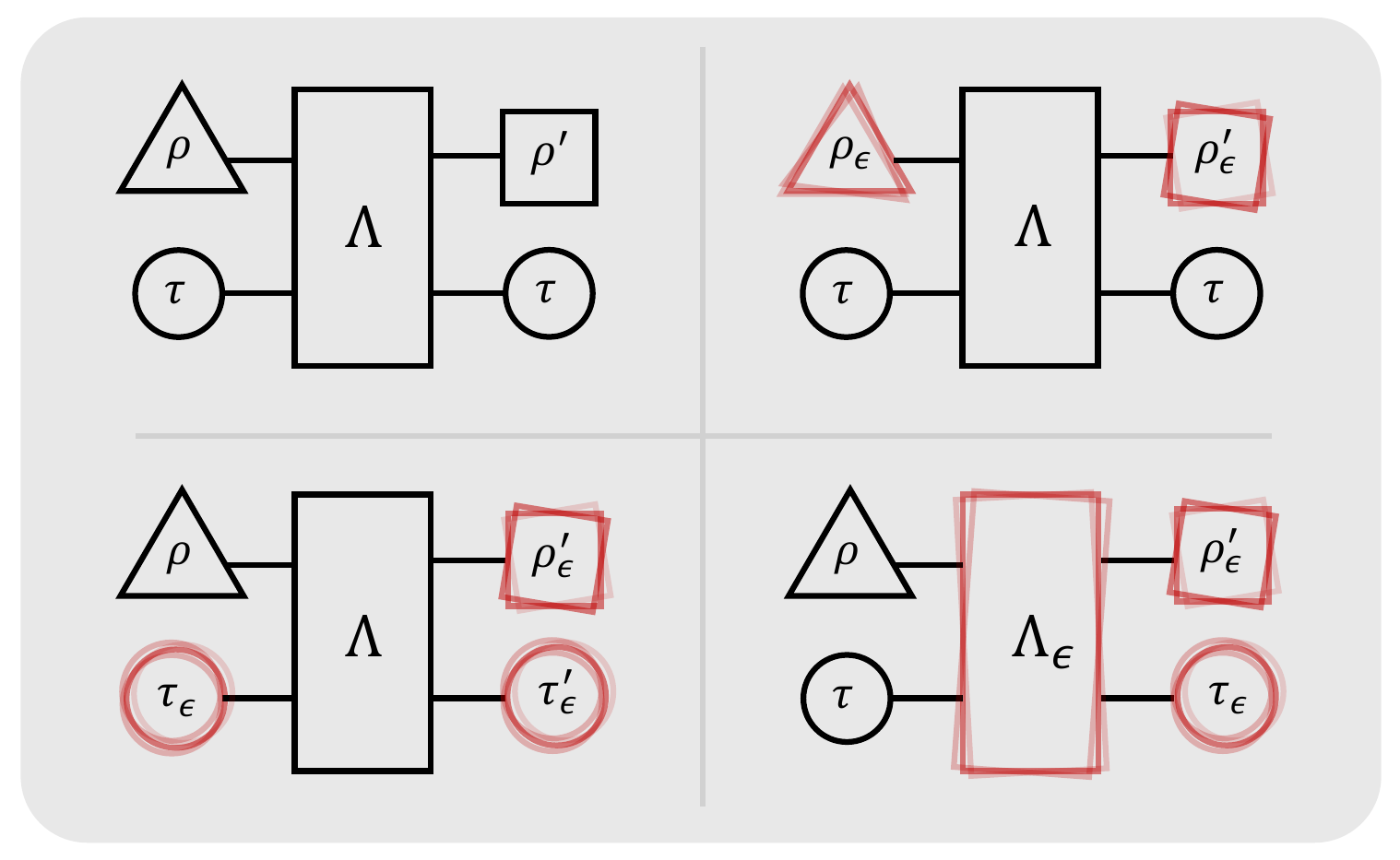}
	\caption{Illustrations of noisy catalysis.
		The top-left quadrant depicts the ideal scenario, where a catalyst $\tau$ facilitates a state transformation from $\rho$ into $\rho'$ and is returned exactly to its initial state.
		The top-right quadrant represents robust catalysis, where the catalyst state remains unaffected despite errors in the system state preparation. 
		The bottom quadrants show how other sources of errors (catalyst state preparation and channel implementation) impact the final states.}\label{fig:illustration}
\end{figure}

Our first result identifies catalytic channels as the only form of catalysis robust to state-preparation noise, even when the noise is assumed to be arbitrarily small. 
In other words, all other catalytic transformations are fragile in realistic settings. 
Therefore, it becomes important to reexamine catalytic advantages through the lens of catalytic channels. 
We proceed to identify general axioms that establish a fundamental no-go theorem for robust catalytic advantage; this result encompasses prominent resource theories such as athermality~\cite{Ng18_Qthermo_book, Lostaglio19_review} and coherence~\cite{Streltsov2017CoherenceReview}.
Nevertheless, we also show that robust catalysis is not entirely precluded.
We formulate a theory in the context of thermodynamics that maximally exhibits a robust catalytic advantage.
In this case, we demonstrate that catalytic state-preparation channels are fully characterized by a single resource measure, thereby enabling practical and significant catalytic benefits.

We have developed two novel techniques to derive these results. 
First, we establish the precise form of equivalence between robust catalysis and resource broadcasting~\cite{Lu2013FisherBroad, Marvian2019Broadcasting, Lostaglio2019Broadcasting, Yang_2021Broad, Zhang2024_magicnobroad}, the latter being a generalization of quantum state broadcasting~\cite{Barnum1996Broad,superbroad1,superbroad2,Barnum2007Broad, Piani2008Broad, Parzygnat2024VirtualBroad}. 
Consequently, all our results (no-go and constructive findings) also apply to resource broadcasting. 
Furthermore, we introduce a framework to analyze how different systems compose in resource theories and identify the extreme cases.
This compositional structure has been almost never explicitly explored in prior work, yet remarkably we find that the potential for catalytic channel advantage is governed by it rather than by the specifics of the resources themselves.
We anticipate both techniques will be of independent interest to the broader resource theory community. 

\medskip
\emph{Definitions and basic properties}---%
We denote quantum systems by $X,Y$, the set of density matrices on $X$ by $\SSp{X}$, and the set of quantum channels (trace-preserving completely positive linear maps) from $X$ to $Y$ by $\CSp{X}{Y}$. 
A \emph{catalytic transformation} is a process where a channel $\Lambda\in\CSp{SC}{S'C}$ is applied to the system and the catalyst $\rho_{S}\otimes\tau_{C}$ and preserves the catalyst state $\Tr_{S'}[\Lambda(\rho_{S}\otimes\tau_{C})] = \tau_{C}$. 
For generality, we consider correlated catalysis~\cite{LipkaBartosik2023CatReview, Datta2023CatReview}, but most of our results apply to strict catalyses where a stronger condition $\Lambda(\rho_{S}\otimes\tau_{C}) = \tilde{\rho}_{S'}\otimes\tau_{C}$ must hold. 
We define a catalytic transformation to be \emph{robust} (against $\epsilon$ initial-state-preparation noise) if for all system states $\sigma_{S}$ with $\Vert \sigma_{S} - \rho_{S}\Vert_{1}\le\epsilon$, the catalyst is recovered exactly, i.e. $\Tr_{S'}[\Lambda(\sigma_{S}\otimes\tau_{C})] = \tau_{C}$, for a parameter $\epsilon>0$.
Note that this definition makes no assumptions about the system's state after the process and the robustness parameter $\epsilon$, capturing the degree of initial state preparation errors, can be arbitrarily small.
Finally, we define \emph{catalytic channels} to be a channel $\tilde{\Lambda}(\bullet_S) \coloneq \Tr_{C}[ \Lambda (\bullet_{S}\otimes\tau_{C})]$ such that the catalyst is preserved for any input state in $S$, i.e. $\Tr_{S'}[\Lambda(\sigma_{S}\otimes\tau_{C})] = \tau_{C}$ for any $\sigma_{S}\in\SSp{S}$.
They are typically studied when the dilation $\Lambda$ is a unitary operation~\cite{Boes2018_randomness, Lie2021Generecity, Lie2021CatalyticRandomness}, but here we allow $\Lambda$ to be any channel.
Note that catalytic channels are by definition robust for any parameter $\epsilon>0$%
\footnote{We stress that catalytic channels should not be confused with \emph{universal catalysis}, where universality is achieved by fine-tuning the free operations with respect to input states~\cite{LipkaBarosik2021_universal, Datta2024_universal}.}.

As previously mentioned, we define the robustness as the exact recovery of the catalyst to prevent embezzlement and the continual degradation of the catalyst after repeated usages. 
In experiments, it would be hard to verify whether the catalyst is truly identical to its initial state.
However, our definition provides a theoretical framework for catalytic advantages that would be obtained with advancements in experimental techniques (i.e. robust catalysis) as opposed to the ones that are unattainable due to their inherent fragility. 
In other words, robustness must be the guiding principle of a practical catalytic advantage in experiment. 

\medskip
\emph{Only catalytic channels are robust catalysis}---%
Let us start with a central fact that even the minimalistic requirements set for robust catalysis single out catalytic channels.
\begin{central-fact}
	For any initial state $\rho_S$ and any $\epsilon>0$, a catalytic transformation with a channel $\Lambda\in\CSp{SC}{S'C}$ and the catalyst $\tau_{C}$ is robust against $\epsilon$ initial-state-preparation noise if and only if the resulting channel $\tilde{\Lambda}(\bullet_S) \coloneq \Tr_{C}[ \Lambda (\bullet_{S}\otimes\tau_{C})]$ is a catalytic channel.  
\end{central-fact}

The proof idea is that the robustness condition restricts the channel action on a full-dimensional subset of inputs (full argument in End Matters).
This Central Fact clarifies that all forms of catalysis, other than catalytic channels, are inevitably fine-tuned to a very specific initial state of the system and risk degrading the catalyst whenever the system's state is not prepared with strictly infinite precision.
It also ties the robustness of catalysis in resource theories of \emph{states} to catalysis in resource theories of \emph{channels}, in which catalytic channels are included.

The Central Fact prompts us to examine whether catalytic channels provide any meaningful advantage. 
Besides the clever yet serendipitous findings in circuit compilation~\cite{Amy2023CatalyticEmbedding, Takeuchi2024MBQCCat, Kissinger2024CatalysingCompleteness, Buhrman2025QuantumCatalyticSpace, Khattar2025ConditionallyClean, Kim2025CatalyticzRotation, Gidney2025Adder, Cirac1995CiracZollerGate}, two universal mechanisms for catalytic channel advantages are known.
First, populations of the catalyst state can always be accessed catalytically and used as a randomness source for implementing convex combinations of operations~\cite{Boes2018_randomness, Lie2021CatalyticRandomness, Lie2021Generecity}. 
Moreover, when the channel that resets the catalyst state is available, any catalytic transformation can be made robust by applying the reset channel at the end; such channels are found to be useful in thermodynamic resource theories~\cite{Korzekwa2022_optimizing, Czartowski2023_thermalrecall, Son2024_CETO, Son2024hierarchy}.
Nevertheless, catalytic channels extend beyond these two scenarios, and the true extent of the robust catalytic advantage warrants further scrutiny.

\medskip
\emph{Technical tools: resource theories, compositions, and resource broadcasting}---%
For the systematic study of robust catalytic advantages, we work in the framework of resource theories. 
A resource theory is characterized by two elements: a set of free states $\mS$ and free operations $\mO$. 
System subscripts specify the system, e.g. $\FSSp{X}$ for free states on $X$ and $\FCSp{X}{Y}$ for free operations. 
We assume free states are a fixed property of each system; for instance, in the athermality theory, qubits with different Hamiltonians are in distinct systems.
To focus on advantages beyond those already studied, we consider free operations that are already convex and thus not trivially enhanced by catalysts.
Hence, we make the following basic assumptions about the set of free states and operations:
\begin{enumerate}[topsep=3pt,itemsep=0ex,leftmargin=*,label=(A\arabic*)]
	\item $\rho_{A}\otimes\rho_{B}\in\FSSp{AB}$ whenever $\rho_{A}\in\FSSp{A}$ and $\rho_{B}\in\FSSp{B}$, 
	\item if $\rho_{AB}\in\FSSp{AB}$, then $\Tr_{B}[\rho_{AB}]\in\FSSp{A}$ and $\Tr_{A}[\rho_{AB}]\in\FSSp{B}$,
	\item $\FSSp{X}$ is a convex set for any system $X$, and
	\item there always exists a full-rank state $\gamma\in\FSSp{}$.
\end{enumerate}
The first three are often used as basic assumptions, while (A4) is justified in \suppl~Sec.~\ref{subsection:non-fullrank}. 
Nevertheless, they cover a wide range of quantum resource theories, including those of entanglement, athermality, coherence, asymmetry, and magic. 
Unless otherwise specified, we consider \emph{completely resource non-generating (CRNG) theories}, where the free operations include all channels $\Lambda\in\CSp{A}{B}$, such that 
\begin{align}\label{eq:CRNG}
\left(\id_{A'}\otimes\Lambda\right)(\gamma_{A'A}) \in\FSSp{A'B},
\end{align}
for any system $A'$ and all $\gamma_{A'A}\in\FSSp{A'A}$.
Here, $\id_{A'}\in\CSp{A'}{A'}$ denotes the identity channel. 
CRNG operations are the full set of channels that cannot generate any resource from free states, even when acting on subsystems. 
Allowing free operations beyond CRNG often trivializes the theory by enabling arbitrary state transformations.
Well-known examples of CRNG operations include separable operations for entanglement~\cite{Vedral1997SEP, Rains1998SEP}, Gibbs-preserving operations for athermality~\cite{Faist2015GP}, and covariant operations for asymmetry~\cite{Keyl1999Cov, Gour2008Cov}.

For catalytic channels to provide advantage in CRNG resource theories, a dilated channel $\Lambda\in\CSp{SC}{S'C}$ that acts on the system and the catalyst must be free, while the resulting catalytic channel $\tilde{\Lambda}\in\CSp{S}{S'}$ must not. 
Therefore, when considering CRNG free operations, it suffices to find a free state $\gamma_{S}$ such that $\tilde{\Lambda}(\gamma_{S})\notin\FSSp{S'}$. 
Conversely, if all catalytic channels $\tilde{\Lambda}$ map $\FSSp{S}$ to $\FSSp{S'}$, then no advantage can be gained robustly\footnote{Strictly speaking, this would prove that any catalytic channel is always resource non-generating (RNG). 
However, if all catalytic channels $\tilde{\Lambda}$ are RNG, then their extensions $\id_{A}\otimes\tilde{\Lambda}$ are also catalytic channels, and therefore RNG, which in turn implies that all $\tilde{\Lambda}$ are in fact \textit{completely} RNG as well.}.
This property has also been studied in terms of \emph{resource broadcasting}~\cite{Lu2013FisherBroad, Marvian2019Broadcasting, Lostaglio2019Broadcasting, Yang_2021Broad, Zhang2024_magicnobroad}, where a free operation $\mB\in\FCSp{A}{AB}$ may ``broadcast'' some resource from $A$ to another system $B$, i.e. $\Tr_{A}[\mB(\rho_{A})]\notin\FSSp{B}$, while leaving $A$ fully intact, i.e., such that $\Tr_{B}[\mB(\rho_{A})] = \rho_{A}$.
The outcome of a catalytic channel applied to a free system state $\gamma_{S}\in\FSSp{S}$, is therefore closely related to resource broadcasting, as one wants $\tilde{\Lambda}(\gamma_{S}) = \Tr_{C}[\Lambda(\gamma_{S}\otimes\tau_{C})]\notin\FSSp{S'}$. 

\begin{thm}\label{thm:broadcast_robcat}
	For any CRNG resource theory that satisfies Assumptions (A1), (A2), and (A3), the existence of a catalytic channel $\tilde{\Lambda}\notin\FCSp{S}{S'}$ is equivalent to the existence of resource broadcasting.
\end{thm}

Thm.~\ref{thm:broadcast_robcat} is proven in the End Matters; here we only comment about a subtle difference between catalytic channels and resource broadcasting. 
In real experimental setups, implementing the broadcasting channel $\mB\in\FCSp{C}{S'C}$ requires an auxiliary state $\gamma_{S'}\in\FSSp{S'}$, which is assumed to be fixed. 
Unlike catalytic channels, the broadcasting process may fail to preserve the original state $\tau_{C}$ when the auxiliary state $\gamma_{S'}$ is somehow perturbed. 

We now address the central question: \emph{exactly when do catalytic channels offer a net advantage?}
It turns out that the answer hinges on an additional degree of freedom that has not been explicitly considered before: how the composite free state set $\mathcal{S}_{AB}$ is related to $\mathcal{S}_{A}$ and $\mathcal{S}_{B}$.
In typical resource theories, this composition is usually defined operationally, depending on the particular resource at hand.
However, theoretically, we can freely choose any composition rule as long as it is consistent with the rest of the theory.
Remarkably, we can show that any choice $\FSSp{AB}$ satisfying Assumptions (A1)-(A3) must lie between \textit{minimal} and \textit{maximal} compositions, denoted $\mincomp{A}{B}$ and $\maxcomp{A}{B}$, which, for given $\FSSp{A}$ and $\FSSp{B}$, are explicitly written as
\begin{align}
	\!\!\!\mincomp{A}{B} &\coloneq \mathrm{conv}\lset \rho_{A}\otimes\rho_{B} \,\vert\, \rho_{A}\in\FSSp{A},\ \rho_{B}\in\FSSp{B} \rset,\label{def:min_comp}\\
	\!\!\!\maxcomp{A}{B} &\coloneq \lset \rho_{AB} \,\vert\, \Tr_{B}[\rho_{AB}]\in\FSSp{A},\Tr_{A}[\rho_{AB}]\in\FSSp{B}\rset.\label{def:max_comp}
\end{align}
Eqs.~\eqref{def:min_comp} and~\eqref{def:max_comp} evoke the tensor product of convex cones (explained in End Matters) and encapsulate the limits of allowed correlations in free states.
For instance, the resource theories of athermality~\cite{Ng18_Qthermo_book, Lostaglio19_review}, where the tensor product of subsystem Gibbs states is a free state, and coherence~\cite{Streltsov2017CoherenceReview}, where diagonal states are free, follow the minimal composition rule.
On the other hand, the maximal composition of athermality includes the thermofield double state (also known as the two-mode squeezed vacuum state)~\cite{Israel1976Thermofield}---a pure entangled state with Gibbs state marginals.
While $\maxcomp{A}{B}$ may appear contrived, it can be understood as a theory concerned with \emph{local}, rather than global, resources.
These extremal compositions streamline the analysis by making the set of CRNG operations identical to RNG operations, eliminating the need to consider resource-generating effects on larger Hilbert spaces as in Eq.~\eqref{eq:CRNG} (see Lemma~\ref{lem:RNGCRNG} in \suppl).
We note that Ref.~\cite{Pinske2024Censorship} defines $\mincomp{A}{B}$ as their composition rule in the context of resource censorship.

\medskip
\emph{When robust catalytic advantage is impossible}---With this newfound categorization of compositions, we establish a no-go theorem for robust catalytic advantages when the composition restricts correlations between the partitions of free states, i.e. when the free state set is minimal. 
\newcommand{\mbM}{\mathbb{M}}
\newcommand{\REM}{R}  
\newcommand{\MREM}{R_{\mbM}}
\begin{thm}\label{thm:mincomp_nocat}
	If a convex CRNG resource theory has minimal composition, it allows neither resource broadcasting nor any robust catalytic advantage.
\end{thm}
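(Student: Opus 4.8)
The plan is to reduce the statement to an impossibility of resource broadcasting and then rule broadcasting out using a resource monotone that interacts favourably with the minimal composition. By Theorem~\ref{thm:robcat_is_catchan}, any $(\rho,\epsilon)$-robust catalysis realises a catalytic channel $\tilde{\Lambda}$, and such a catalysis is useful precisely when $\tilde{\Lambda}\notin\FCSp{S}{S'}$; by \thrm{broadcast_robcat} the existence of such a non-free catalytic channel is equivalent to resource broadcasting. It therefore suffices to show that a convex CRNG theory with minimal composition admits no broadcasting. Concretely, I would argue by contradiction, assuming there is a free operation $\mB\in\FCSp{C}{S'C}$ and a catalyst $\tau_{C}$ with $\Tr_{S'}[\mB(\tau_{C})]=\tau_{C}$ while $\nu_{S'}\coloneq\Tr_{C}[\mB(\tau_{C})]\notin\FSSp{S'}$. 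Note that $\tau_{C}$ must itself be resourceful: were it free, Assumptions (A1)--(A2) together with the freeness of $\mB$ would force $\nu_{S'}\in\FSSp{S'}$.

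The key structural input is that with minimal composition the free composite states are separable, $\FSSp{S'C}=\mincomp{S'}{C}$, so a product state lies in $\FSSp{S'C}$ if and only if both factors are free (its marginals recover the factors and $\FSSp{}$ is convex). Because of this, Lemma~\ref{lem:RNGCRNG} identifies CRNG operations with resource non-generating ones, and $\mB$ maps $\FSSp{C}$ into $\mincomp{S'}{C}$. I would then work with a faithful resource monotone $R$ that is (i) nonincreasing under free operations and under partial traces, and (ii) \emph{superadditive across marginals}, $R(\omega_{S'C})\ge R(\Tr_{C}\omega)+R(\Tr_{S'}\omega)$. The natural candidate is the relative entropy of resource $R(\rho)=\min_{\sigma\in\FSSp{}}D(\rho\|\sigma)$; for the minimal composition I expect superadditivity to follow because the separable structure lets one destroy resource locally without generating any, so that for the optimal separable free reference the cross term reduces to a mutual information that only decreases under the local resource-destroying maps. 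This monogamy of resource is exactly the feature that minimal composition supplies and that is absent for a correlated (e.g.\ maximal) composition.

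With such an $R$ in hand the contradiction closes immediately: monotonicity under the free operation $\mB$ gives $R(\mB(\tau_{C}))\le R(\tau_{C})$, whereas superadditivity together with $\Tr_{S'}[\mB(\tau_{C})]=\tau_{C}$ gives $R(\mB(\tau_{C}))\ge R(\nu_{S'})+R(\tau_{C})$. Hence $R(\nu_{S'})\le 0$, so $\nu_{S'}$ is free by faithfulness, contradicting $\nu_{S'}\notin\FSSp{S'}$. Thus no broadcasting exists; by \thrm{broadcast_robcat} every catalytic channel is then free, and by Theorem~\ref{thm:robcat_is_catchan} no robust catalysis can be useful, establishing both halves of the claim at once.

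I expect the main obstacle to be item (ii): proving the marginal superadditivity of $R$ for a \emph{general} convex CRNG theory with minimal composition, rather than only for theories (such as coherence or asymmetry) that possess an explicit free resource-destroying twirl. The technical heart is to show that the separability of $\mincomp{S'}{C}$ forces the optimal free reference to decorrelate the two marginals enough that the joint resource dominates their sum; I anticipate handling this through a conditional-mutual-information or recoverability argument applied to the optimal separable free state, exploiting that equality in the data-processing inequality across $\Tr_{S'}$ pins down the structure of $\mB(\tau_{C})$.
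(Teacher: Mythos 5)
Your overall architecture coincides with the paper's: reduce the claim via Theorems~\ref{thm:robcat_is_catchan} and~\ref{thrm:broadcast_robcat} to showing that no catalytic channel (equivalently, no broadcasting map) can produce a non-free output, and then close the argument with a faithful resource monotone that is monotone under free operations and super-additive across the $S'\vert C$ cut, so that $R(\tau_C)\geq R(\omega_{S'C})\geq R(\Tr_C[\omega_{S'C}])+R(\tau_C)$ forces the system marginal to be free. The genuine gap is exactly the ingredient you flag as the ``main obstacle'' and then leave unresolved: full marginal super-additivity of the relative entropy of resource, $R(\omega_{S'C})\ge R(\Tr_C[\omega_{S'C}])+R(\Tr_{S'}[\omega_{S'C}])$, for a general convex CRNG theory with minimal composition. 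This is not a routine step. For the relative entropy of entanglement (with its standard composition rule) even additivity on tensor products fails, so strong super-additivity of $R$ cannot be taken for granted and any proof must exploit the minimal composition in an essential way. Your heuristic---``the separable structure lets one destroy resource locally without generating any''---presupposes a free resource-destroying map, which a generic convex CRNG theory need not possess (entanglement famously has none), and the conditional-mutual-information/recoverability sketch is not a proof. In addition, your cancellation of $R(\tau_C)$ from both sides silently requires $R(\tau_C)<\infty$; the paper secures this via Assumption (A4) (existence of a full-rank free state), a point your write-up omits.

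The paper sidesteps the unproven step entirely: it never establishes full super-additivity of $R$, but instead invokes Piani's theorem (Thm.~\ref{thm:Piani} in the Supplemental Materials), which gives the weaker hybrid inequality $R(\rho_{XY})\ge R_{\mathbb{M}}(\Tr_Y[\rho_{XY}])+R(\Tr_X[\rho_{XY}])$, where $R_{\mathbb{M}}$ is the relative entropy of resource under restricted measurements. The hypothesis of Piani's theorem---that post-measurement marginals of free composite states remain free---is precisely what minimal composition supplies: any $\gamma_{S'C}\in\mincomp{S'}{C}$ is a convex mixture of free product states, so conditioning on any POVM outcome on $S'$ leaves a convex combination of free states on $C$. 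Since $R_{\mathbb{M}}$ is still faithful when $\mathbb{M}$ contains informationally complete measurements, the weaker inequality already yields $\tilde{\Lambda}(\gamma_S)\in\FSSp{S'}$, i.e.\ every catalytic channel is RNG (hence CRNG, hence free), which by Theorem~\ref{thrm:broadcast_robcat} rules out broadcasting as well. If you replace your item (ii) by this hybrid inequality and verify its hypothesis from the structure of $\mincomp{S'}{C}$ as above, your argument goes through; as written, the proof is incomplete at its technical heart.
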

The proof boils down to deriving an inequality resembling strong super-additivity, where a composite state's resource exceeds the sum of its marginal resources.
This rules out resource broadcasting; see \suppl~Sec.~\ref{app:proof_thm_mincomp_nocat} for full proof. 
While no-broadcasting was previously established specifically for theories of asymmetry under connected Lie groups~\cite{Marvian2019Broadcasting, Lostaglio2019Broadcasting} and stabilizer operations~\cite{Zhang2024_magicnobroad}, our result is the first known sufficient condition guaranteeing no-broadcasting across generic classes of resource theories.
Furthermore, we demonstrate many significant theories beyond the minimal composition class also lack robust catalytic advantage and disallow resource broadcasting (see End Matters for a list and \suppl~for proofs). 

\medskip
\emph{When robust catalytic advantages are possible}---%
Surprisingly, taking an alternative composition rule may enable useful catalytic channels and resource broadcasting.
Consider the theory of local athermality, where we identify all states that are locally thermal as free states $\mathcal{S}_{SC} = \Bqty{\mu_{SC} \,|\, \Tr_B \mu_{SC} = \gamma_C, \Tr_C \mu_{SC} = \gamma_B }$, where $\gamma_X$ denotes the thermal state of system $X$.
Note that this is the maximal composition $\FSSp{SC} = \maxcomp{S}{C}$ of $\FSSp{S} = \Bqty{\gamma_S}$ and $\FSSp{C} = \Bqty{\gamma_C}$, and it may include entangled states, as long as all the marginals of the state are locally thermal.
We can define an analogue of free energy through the max-relative entropy~\cite{Datta2009_maxrel} $F_{\max}(\rho_S) \coloneq D_{\max}(\rho_S \| \gamma_S)$.
Below, we show that $F_{\textrm{max}}$ characterizes the catalytic state preparation channels that bring robust catalytic advantages.

\begin{thm}\label{thm:robcat_possible}
	In the local athermality theory, catalytic measure-and-prepare channels $\tilde{\Lambda}(\bullet_{S}) = \sigma_{S}\Tr[\bullet_{S}]$ can be implemented with some catalyst $\tau_{C}$ if and only if 
  \begin{align}\label{eq:dmax_cond}
    \sup_{\tau_{C}\in\SSp{C}} F_{\max} (\tau_C)
    \geq
    F_{\max} (\sigma_S).
  \end{align}
\end{thm}
We use the properties of $F_{\max}$ to prove the only if direction, while for the if direction we explicitly construct the catalytic channel $\tilde{\Lambda}$.
The full proof of Thm.~\ref{thm:robcat_possible} along with its generalizations and limitations are provided in \suppl~Section~\ref{section: robcat possible generalization}.
There, we also show that the presence of free entangled states in the maximal composition does not matter: Theorem~\ref{thm:robcat_possible} still holds if one redefines the composition to exclude all entangled states from the maximal composition.
Here, we offer a few remarks. 
First, the dilation $\Lambda\in\FCSp{SC}{SC}$ of the catalytic channel $\tilde{\Lambda}$ can be used to construct a resource broadcasting channel $\mB(\bullet_{C}) = \Lambda(\gamma_{S}\otimes\bullet_{C})\in\FCSp{C}{SC}$ with the free state $\gamma_{S}\in\FSSp{S}$.
Hence, Thm.~\ref{thm:robcat_possible} also fully characterises resource broadcasting in the local athermality theory. 
Additionally, when the smallest eigenvalue of $\gamma_{S}$ is greater than that of $\gamma_{C}$, transformations between any states become feasible as Eq.~\eqref{eq:dmax_cond} becomes trivially true, effectively trivializing the theory.
Note that catalytic channels (that are not measure-and-prepare channels) from $\rho_{S}$ to $\sigma_{S}$ may still exist even when $\sigma_{S}$ violates Eq.~\eqref{eq:dmax_cond}.
Furthermore, we note that powerful examples of catalytic channels in the resource theory of imaginarity are found in Refs.~\cite{Takagi2017Imaginarity, Zhang_2024a}; see \suppl~for discussion.

\medskip
\emph{Discussion}---%
We show that catalytic channels are the only way to achieve catalytic advantage in the presence of unstructured noise. This result decisively establishes catalytic channels as the primary focus for future developments of catalysis, especially in the context of practical, noise-affected implementations. We rigorously identify fundamental limitations governing the (im)possibility of such advantages under CRNG operations. In particular, our no-go theorem shows that practical catalytic advantage cannot be expected for many CRNG theories. 

Our results suggest that the most promising, robust, and practically significant instances of catalysis should emerge in non-CRNG frameworks.
Finding such catalytic examples in commonly studied frameworks, e.g. LOCC or stabilizer operations, would be highly interesting and ripe for experimental demonstrations. 
Nevertheless, any catalytic advantage within such operationally defined theories is necessarily capped by the limitations of their corresponding CRNGs.
Moreover, since no-broadcasting in CRNG implies no-broadcasting for its subsets, our findings establish a general upper bound on resource broadcastability in any well-defined resource theory. 
Conversely, observing resource broadcasting within non-CRNG free operations would indicate robust catalytic advantage.
Another interesting question concerns other types of advantages that catalytic channels can bring.
Beyond resource advantages such as the ability to implement a non-free operation, catalytic channels might provide dimensional advantages~\cite{Boes2018_randomness}, as quantum catalysts could outperform classical randomness.

Robust catalysis also paves an alternative pathway for investigating \emph{channel catalysis}, where a free channel induces a non-free one through the assistance of another channel that functions as a catalyst.
Our work reveals that catalytic channels are the result of specific channel catalyses, where the channel acting as a catalyst is a fixed-outcome preparation channel.
Generic channel catalysis, on the other hand, is not immediately precluded by our no-go theorems, as demonstrated in Ref.~\cite{Vidal_catchannel, Lipka-Bartosik2021Teleportation} for the entanglement theory.
These examples do not fit our definition of robust catalysis as the system-catalyst channel is not free; however, they also represent robust catalytic advantage as the final channel is attainable only when a catalyst is present while this advantage is robust under errors in initial state preparation.
Hence, channel catalysis hints at a new potential for robust catalytic advantages in important theories such as entanglement, coherence, and magic, despite our result ruling out the existence of catalytic channels in those theories.  
We leave this question for future investigations.

Lastly, we identify a hierarchy of composition rules for free states within resource theories, revealing a spectrum from the impossibility to the possibility of robust catalysis. 
This hierarchy effectively delineates the types of correlations permitted under free operations. 
We anticipate that this framework will foster new strategies for extending or even hybridizing resource theories, offering novel approaches to exploring complex resource interactions.

\medskip
\textit{Acknowledgments}---We thank Hayato Arai, Martti Karvonen, Tulja Varun Kondra, Kohdai Kuroiwa, Ryuji Takagi, and Henrik Wilming for helpful discussions.
J.S., R.G. and N.N. are supported through the start-up grant of the Nanyang Assistant Professorship of
Nanyang Technological University, Singapore. S.H.L. acknowledges the start-up grant of Ulsan National Institute of Science and Technology, South Korea. 
S.M. is supported by JASSO Scholarship for Study Abroad under Agreement, ``THERS Make New Standards Program for the Next Generation Researchers,'' JST SPRING, Grant Number JPMJSP2125, and JST ASPIRE, Grant Number JPMJAP2339, the French government under the France 2030 investment plan, as part of the Initiative d'Excellence d'Aix-Marseille Université-A*MIDEX, AMX-22-CEI-01.
F.B. acknowledges support from MEXT Quantum Leap Flagship Program (MEXT QLEAP) Grant No. JPMXS0120319794, from MEXT-JSPS  Grant-in-Aid for Transformative Research Areas  (A) ``Extreme Universe,'' No.~21H05183, and  from JSPS  KAKENHI Grant No.~23K03230.

\newpage

\normalsize
\renewcommand{\theequation}{A\arabic{equation}}
\setcounter{equation}{0}  

\onecolumngrid
\section*{End Matters}\label{sec:end_matter}
\emph{Appendix A: Table summarizing resource theories with/without robust catalysis.}
\begin{center}
	\begin{tabular}{|c|c|c|}
		\hline
		\multirow{2}{*}{} & \multicolumn{2}{c|}{\textbf{Robust Catalysis and Resource Broadcasting}} \\
		\cline{2-3} 
        &  &  \\
		& \textbf{Yes} & \textbf{No} \\
        &  &  \\
		\multirow{6}{*}{\textbf{\makecell{CRNG\\resource\\ theories}}} & Athermality ($T=0$)~\cite{Kuroiwa2020catreplication} & {Athermality ($T>0$)} [Thm.~\ref{thm:mincomp_nocat}] \cite{Wilming2017FreeE} \\
		& Imaginarity~\cite{Takagi2017Imaginarity,Zhang_2024a,Wu2021Imaginarity}  & {MIO Coherence} [Thm.~\ref{thm:mincomp_nocat}] \cite{Xi2015CoherenceSuperAdd}\\
		& Asymmetry (finite groups)~\cite{Marvian2013GCov} &  Entanglement~\cite{Christandl2004Squashed, Piani2009MRelEnt} \\
        & Thm.~\ref{thm:robcat_possible} &PPT entanglement~\cite{arxiv_Ganardi_2023a}\\
		&\multirow{2}{*}{\makecell{Limited subspace theories\\ $[$\suppl~Sec.~\ref{subsection:non-fullrank}$]$}} & Magic~\cite{Zhang2024_magicnobroad} \\
		&   & Asymmetry (connected Lie groups)~\cite{Marvian2019Broadcasting, Lostaglio2019Broadcasting} \\
		&   & Optical nonclassicality~\cite{Ferrari2023CVResources} \\
        &  &  \\
        \hline \hline
        \multirow{2}{*}{} & \multicolumn{2}{c|}{\textbf{Robust Catalysis}} \\
        \cline{2-3}
        &  &  \\
		& \textbf{Yes} & \textbf{No} \\
        & &  \\
        \multirow{3}{*}{\textbf{\makecell{non-CRNG\\resource\\ theories}}} & Elementary thermal operations~\cite{Son2024_CETO, Son2024hierarchy}  &  \multirow{2}{*}{\makecell{Gibbs preserving covariant operations ($T>0$)\\ $[$\suppl~Sec.~\ref{subsec:GPC}$]$}} \\
        &Markovian thermal operations~\cite{Korzekwa2022_optimizing,Czartowski2023_thermalrecall,Son2024hierarchy}&\\
        &Unitary operations~\cite{Boes2018_randomness, Lie2021CatalyticRandomness, Lie2021Generecity}& Thermal operations ($T>0$)~\cite{Lie2025Thermal} \\
        &  &  \\
        \hline
	\end{tabular}
\end{center}

\vspace{0.3cm}

\twocolumngrid
Entries in the no-broadcasting column have been shown to either prohibit broadcasting directly or have a strongly super-additive monotone in the corresponding references.
The references in the other column contain examples of robust catalysis. 

Generally, it is harder to show the non-existence of robust catalysis for non-CRNG operations. 
Hence, it remains unknown whether notable theories such as entanglement under LOCC or magic under stabilizer operations exhibit robust catalysis.
However, from their CRNG counterparts, the possibility of resource broadcasting is already ruled out (see Fig.~\ref{fig:RNGCRNGRCRB}).
It would also be interesting to extend no-broadcasting for theories that are not state-based~\cite{Karvonen2021Nonlocality}.
\begin{figure}[h!]
    \centering    \includegraphics[width=0.9\linewidth]{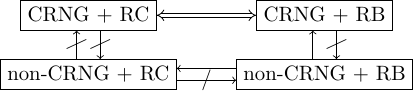}
    \caption{The logical relationship between the existence of robust catalysis (RC) and resource broadcasting (RB). The two-way relationship between RC and RB for completely resource non-generatic (CRNG) theories is shown by Thm.~\ref{thm:broadcast_robcat}. For non-CRNG theories, broadcasting implies robust catalysis, but not the other way around.}
    \label{fig:RNGCRNGRCRB}
\end{figure}

\vspace{0.5cm}

\emph{Appendix B: Strict catalysis and resource broadcasting}---%
So far, we have focused on catalysis wherein the catalyst may retain correlations with the system post-operation. 
Alternatively, we can define strict catalysis, where the catalyst must be recovered uncorrelated, i.e. $\Lambda(\rho_{S}\otimes\tau_{C}) = \rho'_{S'}\otimes\tau_{C}$. 
Strict catalysis has an advantage in the reusability of the catalysts, as they are recovered independently from the system, but its catalytic advantage is usually much more limited compared to the more general correlated catalysis~\cite{LipkaBartosik2023CatReview, Datta2023CatReview}.  
Strict catalysis is defined to be robust against $\epsilon$ initial-state-preparation noise if $\Lambda(\sigma_{S}\otimes\tau_{C}) = \sigma'_{S'}\otimes\tau_{C}$ for all $\sigma_{S}\in\SSp{S}$ such that $\Vert \sigma_{S} - \rho_{S}\Vert_{1}\le\epsilon$, where $\sigma'_{S'}$ is some state that depends on $\sigma_{S}$.
Similarly a strict catalytic channel is defined as a channel $\tilde{\Lambda}(\bullet_S) \coloneq \Tr_{C}[ \Lambda (\bullet_{S}\otimes\tau_{C})]$ such that $\Lambda(\sigma_{S}\otimes\tau_{C}) = \tilde{\Lambda}(\sigma_{S}) \otimes \tau_{C}$ for any $\sigma_{S}\in\SSp{S}$, and a strict resource broadcasting is defined as a process where a free operation $\mB\in\FCSp{A}{AB}$ outputs $\mB(\rho_{A}) = \rho_{A}\otimes\rho'_{B}$, where $\rho'_{B}\notin\FSSp{B}$.

All our theorems apply to the case of strict catalysis and resource broadcasting. 
For the Central Fact and~\ref{thm:broadcast_robcat}, we include the strict catalysis and resource broadcasting cases in the respective proofs. 
Thm.~\ref{thm:mincomp_nocat} already forbids strict robust catalysis and strict resource broadcasting, as these are specific instances of their correlated counterparts.
Moreover, we show that it is impossible to generate non-free strict catalytic channels in any resource theory if the catalyst state is full-rank, even when there exist non-free catalytic channels; see \suppl~for the proof.
Finally, Thm.~\ref{thm:robcat_possible} holds for strict catalytic channels and resource broadcastings, as our construction already recovers the catalyst without correlation.  

\vspace{0.3cm}

\emph{Appendix C: Proof of the Central Fact}---%
If a channel $\tilde{\Lambda}(\bullet_S) \coloneq \Tr_{C}[ \Lambda (\bullet_{S}\otimes\tau_{C})]$ is a catalytic channel, it implements, in particular, a robust catalysis, for any state $\rho_S$ and any robustness parameter $\epsilon>0$. We thus need to prove only the converse, and we do it by contradiction.
	Suppose that $\Lambda$ implements a $(\rho,\epsilon)$-robust catalysis, but is not a catalytic channel. This means that there exists at least one state, say $\eta_{S}$, such that
 \begin{align}
     \tau_C=\Tr_{S'}[ \Lambda (\rho_{S}\otimes\tau_{C})] \neq \Tr_{S'}[ \Lambda (\eta_{S}\otimes\tau_{C})]=:\tau_C'.
 \end{align}
	Let us then consider the state $\tilde\rho_{S} := (1-\frac{\epsilon}2)\rho_{S} + \frac{\epsilon}2\eta_{S}$. By construction,  we have $\Vert \tilde\rho_{S} - \rho_{S}\Vert_{1}=\Vert \frac{\epsilon}2\eta_{S}- \frac{\epsilon}2\rho_{S}\Vert_{1}\le\epsilon$. 
	Nevertheless, by linearity, $\Tr_{S'}[ \Lambda (\tilde\rho_{S}\otimes\tau_{C})] = (1-\frac{\epsilon}{2})\tau_{C} + \frac{\epsilon}{2}\tau'_{C} \neq\tau_{C}$, contradicting the assumption that $\Lambda$ implements an $\epsilon$-robust catalysis for $\rho_S$.
	The same argument work for strict version of robust catalysis, by analysing $\Lambda (\rho_{S}\otimes\tau_{C})$ instead of $\Tr_{S'}[ \Lambda (\rho_{S}\otimes\tau_{C})]$.
\vspace{0.3cm}

\emph{Appendix D: Proof of Theorem~\ref{thm:broadcast_robcat}}---%
Suppose that $\tilde{\Lambda}\notin\FCSp{S}{S'}$ is a catalytic channel with a dilation $\Lambda\in\FCSp{SC}{S'C}$ and a catalyst $\tau_{C}$.
Then there exists a state $\gamma_{S}\in\FSSp{S}$ such that the output $\rho_{SC} = \Lambda(\gamma_{S}\otimes\tau_{C})$ satisfies:
\begin{enumerate}
    \item the catalytic condition, $\Tr_{S}[\rho_{SC}] = \tau_{C}$, and
    \item the output on the system is a non-free state, $\Tr_{C}[\rho_{SC}]\notin\FSSp{S'}$.
\end{enumerate}
Now, define the channel $\mB\in\FCSp{C}{S'C}$ by $\mB(\bullet_{C}) = \Lambda(\gamma_{S}\otimes\bullet_{C})$.
This construction makes $\mB$ a broadcasting channel that maps $\tau_{C}$ into $\rho_{SC}$.
	
For the converse direction, suppose that there exists a broadcasting map $\mB\in\FCSp{C}{S'C}$ that broadcasts $\tau_{C}$. 
Then define the channel $\Lambda = \mB\circ\Tr_{S}\in\FCSp{SC}{S'C}$ which is free by our second basic assumption that partial traces are also free operations.
Consequently, $\tilde{\Lambda}(\rho_{S}) = \Tr_{C}[\Lambda(\rho_{S}\otimes\tau_{C})]$ becomes a catalytic channel such that $\tilde{\Lambda}(\gamma_{S}) = \Tr_{C}[\mB(\tau_{C})]\notin\FSSp{S'}$.

The same proof for both directions can be used for strict catalytic channels and resource broadcasting 
\vspace{0.3cm}

\emph{Appendix E: Minimal/maximal composition and tensor product of convex cones}---%
We begin by reviewing the minimal/maximal tensor products of convex cones~\cite{Aubrun_2021,deBruyn_2022,aubrun2022entanglement}, often discussed in the context of general probabilistic theories~\cite{janotta2014generalized,plavala2023general}.
First, we define convex cones and their duals following Refs.~\cite{boyd2004convex,aliprantis2007cones}.
Let $\mV$ be a vector space.
A non-empty subset $\mC\subset\mV$ is a convex cone if it is convex and closed under positive scalar multiplication.
We also assume that $\mC$ is closed and $\mC\cap(-\mC)=\{0\}$.
Let $\mV^*$ be the set of linear functionals on $\mV$ with the duality $\langle,\rangle:\mV^*\times\mV\to\mathbb R$.
The dual cone of $\mC$ is then given by $\mC^{*} = \Bqty{x^* \in\mV^*\,|\, \langle x^*, z \rangle \geq 0 \textrm{ for all } z \in \mathcal{C}}$.
The minimal and maximal tensor products of two cones $\mathcal{C}_{A}$ and $\mathcal{C}_{B}$ are defined as
\begin{align}
	\!\!\!\mC_{A}\!\otimes_{\textrm{min}}\!\mC_{B}
	&\coloneq
	\textrm{conv} \Bqty{
		z_A \otimes z_B
		\,|\,
		z_A \in \mC_{A},
		z_B \in \mC_{B}
	},\label{def:min_tensor_product}
	\\
	\!\!\!\mC_{A}\!\otimes_{\textrm{max}}\!\mC_{B}
	&\coloneq
	\Bqty{
		z
		\,|\,
		\langle x^*\otimes y^*, z \rangle \geq 0,\,
		x^* \in \mC^{*}_{A},\,
		y^* \in \mC^{*}_{B}
	}.  \label{def:max_tensor_product}
\end{align}

Consider the case where cones are defined on the space of linear Hermitian operators acting on Hilbert spaces, with the duality $\langle,\rangle$ given by the Hilbert-Schmidt inner product.
We now observe that for each free state set $\FSSp{A}$, the associated cone can be defined as $\mC(\FSSp{A}) = \bigcup_{\lambda\geq0}\lambda\FSSp{A}$.
Then, it follows that $\mC(\FSSp{A})\!\otimes_{\textrm{min}}\!\mC(\FSSp{B}) = \mC(\mincomp{A}{B})$, i.e. the minimal tensor product is equivalent to the minimal composition in Eq.~\eqref{def:min_comp}.

For the maximal counterparts, we have $\mC(\FSSp{A})\!\otimes_{\textrm{max}}\!\mC(\FSSp{B}) \subset \mC(\maxcomp{A}{B})$.
To see this, note that choosing $x^{*} = \Tr_{A}\in\mC^{*}(\FSSp{A})$ in Eq.~\eqref{def:max_tensor_product}, ensures the $B$ marginal of $\gamma_{AB}\in\mC(\FSSp{A})\!\otimes_{\textrm{max}}\!\mC(\FSSp{B})$ is always in $\mC(\FSSp{B})$.
Choosing $y^{*} = \Tr_{B}\in\mC^{*}(\FSSp{B})$ implies the same for the $A$ marginal.
The converse of this inclusion does not hold in general, as one can check by constructing counter-examples. 

Equipped with minimal and maximal composition rules for free states, it is tempting to construct minimal and maximal free operations for given subsystem theories, using CRNG operations. 
However, there is no natural inclusion between $\mathrm{CRNG}(\mincomp{A}{B})$ and $\mathrm{CRNG}(\maxcomp{A}{B})$, unlike the neat duality between minimal/maximal tensor products and their dual cones. 
We leave the problem of defining minimal and maximal composite free operations for future studies.

\clearpage
\newpage

\setcounter{page}{1}
\renewcommand{\thepage}{Supplemental Materials -- \arabic{page}/8}

\title{Supplemental Materials for ``Catalytic channels are the only noise-robust catalytic processes''}
\renewcommand{\theequation}{S\arabic{equation}}
\setcounter{equation}{0} 

\maketitle
\onecolumngrid

\section{Sources of catalytic errors}

Any catalytic protocol has three ingredients: the system state $\rho$, the channel $\Lambda$, and the catalyst $\tau$ (see Fig.~\ref{fig:illustration}). 
We briefly discuss possible sources of errors in a catalytic protocol, which may cause a degradation in the catalyst. 
In particular, errors on the catalyst that would accumulate and increase over time would be those of the most challenging nature, as these directly impact the reusability of the catalyst.

\begin{enumerate}
    \item Errors on the initial catalyst $\tau$: 
    A simple computation shows that the catalyst does not degrade upon further iterations.
    Intuitively, this can be understood as follows: besides the initial error in catalyst preparation, there are no additional sources of errors during multiple rounds of catalysis, which use the same catalyst state.   
    In the meantime, the catalytic protocol stabilizes this catalyst state over time. 
    To be more precise, if the catalyst is initially prepared in a state $\tau_{\epsilon}$, which is $\epsilon$-away from the ideal catalyst state $\tau$, then the final state of the catalyst $\tau_{\epsilon}'$ remains at most $\epsilon$-away from $\tau$ via the data processing inequality, 
    \begin{align}
    \norm{\tau_{\epsilon}' - \tau}_1 = \norm{\Tr_S \bqty{\Lambda(\rho \otimes \tau_{\epsilon})} - \Tr_S \bqty{\Lambda(\rho \otimes \tau)}}_1  
    \leq \norm{\tau_{\epsilon} - \tau}_1 \leq \epsilon.
    \end{align}
    
	\item Errors on the initial state $\rho$ or the channel $\Lambda$:
	These errors are introduced afresh in each iteration, raising the possibility of error accumulation on the catalyst, despite the ideal protocol having a net-zero effect on the catalyst. We include a small explicit example below for illustration.
	Conceptually, if one possesses a complete characterization of the errors, e.g. knowing the exact perturbed input state $\rho_\epsilon$, it may still be possible to fine-tune the channel $\Lambda_\epsilon$ accordingly such that $\rho_\epsilon$ and $\Lambda_\epsilon$ stabilize the catalyst. 
	However, assuming full knowledge of such errors and the ability to adjust the channel accordingly is highly impracticable.
\end{enumerate}

\begin{example}[Accumulating errors on the catalyst] 
	It suffices to demonstrate that catalyst continues to degrade after the first round of catalysis. 
	Consider a qutrit system and a qubit catalyst undergoing a joint unitary evolution. 
	Denote the eigenvalues of the system and catalyst to be:
    \begin{equation}
        p_S = (p_1,p_2,p_3), \qquad q_C = (q_1,q_2).
    \end{equation}
    The unitary swaps the eigenstates corresponding to $p_1q_1 \leftrightarrow p_2q_2$ and $p_2q_1 \leftrightarrow p_3q_2$. 
    This operation is catalytic whenever $(p_1+p_2)q_1 = (p_2+p_3)q_2$. 
    Such toy examples are useful for illustrations, and have been used, e.g. in Appendix B of \cite{boes2020passing}. 

    Now, suppose that in the first round, we have a noisy system state $p_\varepsilon = (p_1-\varepsilon,p_2,p_3+\varepsilon)$ for some $\varepsilon >0$. 
    This leads to a final degraded catalyst $q' = (q_1',q_2')$, such that
    \begin{equation}
        q_1' = p_2q_2+(p_3+\varepsilon)q_2+ (p_3+\varepsilon)q_1 = q_1 + \varepsilon.
    \end{equation}
    From normalization, we also have that $q_2' = q_2 - \varepsilon$. In other words, the full amount of error $\varepsilon$ has propagated into the catalyst. 
    Next, suppose that in a second round, we have another noisy system state $p_{-\varepsilon} = (p_1+\varepsilon,p_2,p_3-\varepsilon)$. 
    Under the action of the same catalytic unitary, the catalyst further degrades into $\tilde q_C = (\tilde q_1,\tilde q_2)$, where
    \begin{align}
        \tilde q_1 &= p_2q_2' + (p_3-\varepsilon)q_2' + (p_3-\varepsilon)q_1' = q_1 - \varepsilon (p_2+p_3+q_2+q_1-p_3)\nonumber\\ 
        &= q_1 - \varepsilon(1+p_2).
    \end{align}
    In summary, the error accumulated almost linearly during two rounds of catalysis, as we anticipated. 
\end{example}

\section{For minimal, maximal, and separable compositions, resource non-generating operations (RNG) is completely resource non-generating operations(CRNG)}

\begin{lemma}\label{lem:RNGCRNG}
	If $\FSSp{AB} $ is either $\mincomp{A}{B}$, $\sepcomp{A}{B}$, or $\maxcomp{A}{B}$, then RNG = CRNG.
\end{lemma}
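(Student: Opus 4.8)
The plan is to prove the two inclusions separately. One containment, $\mathrm{CRNG}\subseteq\mathrm{RNG}$, is immediate for \emph{any} composition rule: specializing the defining condition~\eqref{eq:CRNG} to a trivial reference system $A'$ recovers exactly the requirement that $\Lambda$ map $\FSSp{A}$ into $\FSSp{B}$, which is the definition of RNG. The substance of the lemma is therefore the reverse inclusion $\mathrm{RNG}\subseteq\mathrm{CRNG}$, which I would establish case by case for the three composition rules, exploiting in each case the explicit structure of $\FSSp{A'A}$ and $\FSSp{A'B}$. Throughout I would make explicit that a single composition rule is applied uniformly across all systems, so that the rule defining $\FSSp{AB}$ also defines $\FSSp{A'A}$ and $\FSSp{A'B}$.

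For the maximal composition, I would fix an RNG channel $\Lambda\in\CSp{A}{B}$ and an arbitrary $\gamma_{A'A}\in\maxcomp{A'}{A}$, and verify directly that $(\id_{A'}\otimes\Lambda)(\gamma_{A'A})$ satisfies the two marginal conditions defining $\maxcomp{A'}{B}$ in Eq.~\eqref{def:max_comp}. The $B$-marginal is handled using trace preservation of $\Lambda$, giving $\Tr_{A'}[(\id_{A'}\otimes\Lambda)(\gamma_{A'A})]=\Lambda(\Tr_{A'}[\gamma_{A'A}])=\Lambda(\gamma_A)$, which is free because $\gamma_A\in\FSSp{A}$ and $\Lambda$ is RNG. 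The $A'$-marginal is $\Tr_{B}[(\id_{A'}\otimes\Lambda)(\gamma_{A'A})]=\Tr_A[\gamma_{A'A}]\in\FSSp{A'}$, where the membership holds because $\gamma_{A'A}\in\maxcomp{A'}{A}$. Both marginal conditions being satisfied, the output lies in $\FSSp{A'B}$, so $\Lambda$ is CRNG.

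For the minimal and separable compositions, the argument instead rests on a convex/separable decomposition of the input followed by termwise application of the channel. If $\gamma_{A'A}\in\mincomp{A'}{A}$, I would write it as a convex combination $\sum_i p_i\,\alpha^{(i)}_{A'}\otimes\beta^{(i)}_{A}$ of products of free states; applying $\id_{A'}\otimes\Lambda$ yields $\sum_i p_i\,\alpha^{(i)}_{A'}\otimes\Lambda(\beta^{(i)}_{A})$, and since $\Lambda$ is RNG each $\Lambda(\beta^{(i)}_A)\in\FSSp{B}$, so the output is again a convex combination of products of free states, i.e. an element of $\mincomp{A'}{B}$, using convexity (A3). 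For $\sepcomp{A'}{A}$ I would combine the two ideas: the marginal freeness of the output follows exactly as in the maximal case, while separability of the output across $A'\vert B$ follows by applying $\Lambda$ locally to each term of a separable (not necessarily free) decomposition $\gamma_{A'A}=\sum_i p_i\,\alpha^{(i)}_{A'}\otimes\beta^{(i)}_{A}$, since $\sum_i p_i\,\alpha^{(i)}_{A'}\otimes\Lambda(\beta^{(i)}_{A})$ is manifestly separable; hence the output lies in $\maxcomp{A'}{B}\cap\mathrm{SEP}=\sepcomp{A'}{B}$.

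All the computations are short, so the only point requiring genuine care—and the closest thing to an obstacle—is the bookkeeping of \emph{which} structural property of $\FSSp{A'A}$ is invoked in each case. The maximal case uses only that both marginals of the input are free (built into the definition of the maximal composition through Assumption (A2)); the minimal case uses only the extremal product structure together with convexity (A3); and the separable case uses both simultaneously, keeping the marginal argument and the separability-preservation argument logically independent. I expect no difficulty beyond ensuring that trace preservation is correctly used to identify the two marginals of $(\id_{A'}\otimes\Lambda)(\gamma_{A'A})$.
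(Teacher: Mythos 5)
Your proposal is correct and follows essentially the same route as the paper: for the maximal composition you check the two marginal conditions (using trace preservation and RNG-ness), for the minimal composition you apply the channel termwise to a convex product decomposition, and for the separable composition you combine marginal freeness with local preservation of separability, exactly as the paper does. The only difference is that you also spell out the trivial inclusion $\mathrm{CRNG}\subseteq\mathrm{RNG}$, which the paper leaves implicit.
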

\begin{proof}
	We first consider the case $\FSSp{AB} = \mincomp{A}{B}$.
	For any system $R$, a free state $\gamma_{AR}\in\FSSp{AR}$ can be written as $\gamma_{AR} = \sum_{i}p_{i}(\gamma^{(i)}_{A}\otimes\gamma^{(i)}_{R})$, where $\gamma^{(i)}_{R}$ are free states for $R$.
	Let $\Lambda\in\CSp{A}{A'}$ be an RNG channel.
	Then the extension
	\begin{align}
		\Lambda\otimes\id_{R}(\gamma_{AR}) = \sum_{i}p_{i}(\tilde{\gamma}^{(i)}_{A'}\otimes\gamma^{(i)}_{R}),
	\end{align} 	
	where each $\tilde{\gamma}^{(i)}_{A'} = \Lambda(\gamma^{(i)}_{A}) \in\FSSp{A'}$.
	This implies that $\Lambda\otimes\id_{R}$ is an RNG channel and thus $\Lambda$ is a CRNG channel.
	
	Now we prove the case $\FSSp{AB} = \maxcomp{A}{B}$. 
	Again, let $\gamma_{AR}\in\FSSp{AR}$ be any free state. 
	The extension $ \tilde{\gamma}_{AR} = \Lambda\otimes\id_{R}(\gamma_{AR}) $ is also free if and only if its reduced states are free. 
	Since $\Lambda$ is an RNG channel, the $A$ reduced state $\Tr_{R}[\tilde{\gamma}] = \Lambda(\Tr_{R}[\gamma_{AR}])$ is free.
	The $R$ reduced state $\Tr_{A}[\tilde{\gamma}] = \Tr_{A}[\gamma_{AR}]$ is free because $\gamma_{AR}\in\FSSp{AR}$.
	Therefore, $\Lambda$ is a CRNG channel. 
	
	The proof is very similar for $\FSSp{AB} = \sepcomp{A}{B}$. 
	The free state $\gamma_{AR} = \sum_{i}p_{i}(\xi_{A}^{(i)}\otimes\zeta_{R}^{(i)})$ for $\xi_{A}^{(i)}\in\SSp{A}$, $\zeta_{R}^{(i)}\in\SSp{R}$, $p_{i}\geq 0$ for all $i$ and 
	$\sum_{i}p_{i}\xi_{A}^{(i)}\in\FSSp{A}$, $\sum_{i}p_{i}\zeta_{R}^{(i)}\in\FSSp{R}$.
	The final state after the extended channel becomes
	\begin{align}
		\Lambda\otimes\id_{R}(\gamma_{AR}) = \sum_{i}p_{i}(\tilde{\xi}^{(i)}_{A'}\otimes\zeta^{(i)}_{R}),
	\end{align} 
	where $\tilde{\xi}^{(i)}_{A'}\in\SSp{A'}$ for all $i$. 
	Furthermore, since $\Lambda$ is an RNG channel, we have that $\sum_{i}p_{i}\tilde{\xi}_{A'}^{(i)}\in\FSSp{A'}$.
	Hence, $\Lambda\otimes\id_{R}(\gamma_{AR}) $ is separable and its reduced states are free, making it a free state in $\sepcomp{A'}{R}$. 
\end{proof}

\section{Piani's theorem and the proof of Theorem~\ref{thm:mincomp_nocat}}\label{app:proof_thm_mincomp_nocat}

The proof of Thm.~\ref{thm:mincomp_nocat} relies mainly on constructing a faithful and strongly super-additive monotone for theories with minimal composition. 
To proceed, we introduce two specific monotones that are critical to the analysis. 
First, for any convex CRNG resource theory, the relative entropy of resource, is defined as
\begin{align}
    \REM(\rho_{X}) \coloneq \inf_{\sigma_{X}\in\FSSp{X}}S(\rho_{X}\Vert\sigma_{X}), \qquad S(\rho\|\sigma)\coloneq\Tr(\rho\log\rho-\rho\log\sigma),
\end{align}
is a monotone, i.e. $\REM(\Lambda(\rho_{X}))\leq \REM(\rho_{X})$ for any $\rho_{X}$ and $\Lambda\in\FCSp{X}{Y}$. 
It is a faithful measure yielding $\REM(\rho_{X}) \geq 0$, with equality if and only if $\rho_{X}\in\FSSp{X}$.
	
Similarly, the relative entropy of resource under restricted measurements can be defined. 
Consider a quantum measurement $M$, described by positive operator-valued measures (POVMs) $\{E_{i}\}_{i}$ that are positive and sum to identity. 
Let $M(\rho)$ denote the probability vector whose components are the probabilities $\Tr[\rho E_{i}]$ of obtaining an outcome $i$.
The Kullback–Leibler divergence $D_{\rm KL}(M(\rho) \Vert M(\sigma))$ between two outcome probabilities vanishes if and only if $M(\rho) = M(\sigma)$.
To consider multiple measurements, let $\mbM$ be a set of quantum measurements of interest.
The relative entropy of resource under $\mbM$ is then defined as
 \begin{align}
\MREM(\rho_{X}) = \inf_{\sigma_{X}\in\FSSp{X}}\sup_{M\in\mbM}D_{\rm KL}(M(\rho_{X}) \Vert M(\sigma_{X})).
 \end{align}
The following theorem establishes a relationship between the resource monotones of a composite state $\rho_{XY}$ and its marginals.
 
\begin{thm}[Ref.~\cite{Piani2009MRelEnt}, Thm.~1]\label{thm:Piani}
	Suppose that the free state set $\FSSp{X}$ is convex for any system $X$ and that $\rho_{XY}\in\FSSp{XY}$ implies $\Tr_{Y}[\rho_{XY}]\in\FSSp{X}$ for any subsystems $X$ and $Y$.
	Let $\mbM$ be a set of measurements on $X$, and assume that for all $M\in\mbM$ with POVMs $\{E_{i}\}_{i}$ and for all $\gamma_{XY}\in\FSSp{XY}$, the post-measurement $Y$ marginal state $\frac{\Tr_{X}[E_{i}\gamma_{XY}]}{\Tr[E_{i}\gamma_{XY}]}\in\FSSp{Y}$.
	Then, for any $\rho_{XY}\in\SSp{XY}$, 
	\begin{align}
		\REM(\rho_{XY}) \geq \MREM(\Tr_{Y}[\rho_{XY}]) + \REM(\Tr_{X}[\rho_{XY}]).
	\end{align}
\end{thm}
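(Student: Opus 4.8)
The plan is to peel off the two marginal contributions from the global relative entropy by means of a measure-and-keep channel, and then to recombine them using the restricted-measurement hypothesis. First I would fix a free state $\sigma_{XY}\in\FSSp{XY}$ attaining the infimum in $\REM(\rho_{XY})=S(\rho_{XY}\Vert\sigma_{XY})$; such a minimizer exists because $\FSSp{XY}$ is closed and convex, and if $\REM(\rho_{XY})=\infty$ the claimed bound is vacuous. For each measurement $M\in\mbM$ on $X$, with POVM $\{E_i\}_i$, I would introduce the channel
\begin{align*}
\Phi_M(\omega_{XY}) \coloneq \sum_i \dm{i}\otimes\Tr_X\!\big[(E_i\otimes\1_Y)\,\omega_{XY}\big],
\end{align*}
which writes the outcome of $M$ into a classical register while retaining $Y$. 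Its outputs are classical--quantum states $\Phi_M(\rho_{XY})=\sum_i p_i\,\dm{i}\otimes\rho_Y^{(i)}$ and $\Phi_M(\sigma_{XY})=\sum_i q_i\,\dm{i}\otimes\sigma_Y^{(i)}$, where $p_i=\Tr[E_i\rho_X]$ with $\rho_X=\Tr_Y[\rho_{XY}]$ are exactly the entries of $M(\rho_X)$ (and similarly $q_i$ with $\sigma_X=\Tr_Y[\sigma_{XY}]$), while $\rho_Y^{(i)},\sigma_Y^{(i)}$ are the conditional $Y$ marginals.

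Next I would combine the data-processing inequality for the relative entropy under $\Phi_M$ with the standard additive decomposition of the relative entropy of classical--quantum states, obtaining
\begin{align*}
\REM(\rho_{XY}) &\ge S\big(\Phi_M(\rho_{XY})\,\Vert\,\Phi_M(\sigma_{XY})\big) \\
&= D_{\rm KL}\big(M(\rho_X)\,\Vert\,M(\sigma_X)\big) + \sum_i p_i\, S\big(\rho_Y^{(i)}\,\Vert\,\sigma_Y^{(i)}\big).
\end{align*}
The quantum term is where the hypothesis enters: because $\sigma_{XY}$ is free, each conditional marginal $\sigma_Y^{(i)}$ lies in $\FSSp{Y}$, so $S(\rho_Y^{(i)}\Vert\sigma_Y^{(i)})\ge\REM(\rho_Y^{(i)})$; convexity of $\REM$ (which follows from joint convexity of the relative entropy together with convexity of $\FSSp{Y}$) then yields $\sum_i p_i\,\REM(\rho_Y^{(i)})\ge\REM\big(\sum_i p_i\rho_Y^{(i)}\big)=\REM(\Tr_X[\rho_{XY}])$. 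Thus for every fixed $M$,
\begin{align*}
\REM(\rho_{XY}) \ge D_{\rm KL}\big(M(\rho_X)\,\Vert\,M(\sigma_X)\big) + \REM(\Tr_X[\rho_{XY}]).
\end{align*}

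Finally, since neither $\REM(\rho_{XY})$ nor $\REM(\Tr_X[\rho_{XY}])$ depends on $M$, I would take the supremum over $M\in\mbM$ on the right-hand side. The marginal assumption guarantees that $\sigma_X=\Tr_Y[\sigma_{XY}]\in\FSSp{X}$, so this particular free state is an admissible candidate in the infimum defining $\MREM$; hence $\sup_{M}D_{\rm KL}(M(\rho_X)\Vert M(\sigma_X))\ge\inf_{\sigma'\in\FSSp{X}}\sup_{M}D_{\rm KL}(M(\rho_X)\Vert M(\sigma'))=\MREM(\Tr_Y[\rho_{XY}])$, which delivers the stated inequality. I expect the delicate points to be not the data-processing step or the classical--quantum decomposition, which are routine, but rather (i) recognizing that the two-term split is legitimate \emph{only} because the restricted-measurement hypothesis forces the conditional $Y$ marginals of every free state---in particular of the optimizer $\sigma_{XY}$---to remain free, and (ii) correctly sequencing the supremum over measurements with the infimum over free states, so that the fixed marginal $\sigma_X$ reproduces the measured monotone $\MREM$ as a genuine lower bound rather than an upper one.
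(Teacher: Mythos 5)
Your proof is correct, and it is worth noting that the paper itself does not prove this statement at all—it imports it verbatim as Theorem~1 of Ref.~\cite{Piani2009MRelEnt}—so the relevant comparison is with Piani's original argument, which yours reproduces essentially step for step: data processing under the measure-and-keep channel $\Phi_M$, the classical--quantum block decomposition of the relative entropy, freeness of the conditional $Y$ marginals of the free optimizer (the restricted-measurement hypothesis), and convexity of $\REM$ to reassemble $\Tr_X[\rho_{XY}]$, followed by lower-bounding the measured divergence against the admissible free marginal $\sigma_X$ to recover $\MREM$. The single cosmetic blemish is your appeal to an exact minimizer $\sigma_{XY}$ on the grounds that $\FSSp{XY}$ is ``closed and convex'': closedness is not among the stated hypotheses, but this is immaterial, since running the identical argument with a $\delta$-approximate minimizer and letting $\delta\to 0$ yields the same inequality.
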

This theorem serves as a tool to prove Thm.~\ref{thm:mincomp_nocat}.

First, note that $\MREM$ possesses convenient properties: \emph{i)} data processing implies that $\MREM(\rho_{X})\leq\REM(\rho_{X})$ for any $\rho_{X}$ and any $\mbM$. Furthermore, it is known that \emph{ii)} if $\mbM$ includes informationally complete POVMs, $\MREM$ is faithful~\cite{Piani2009MRelEnt}, i.e. $\MREM(\rho_{X}) \geq 0$ with the equality if and only if $\rho_{X}\in\FSSp{X}$, and \emph{iii)} if $\mbM$ encompasses all possible POVMs, the monotonicity $\MREM(\Lambda(\rho_{X}))\leq\MREM(\rho_{X})$ for any $\rho_{X}$ and $\Lambda\in\FCSp{X}{Y}$ holds from data processing inequality.

Now consider a catalyst $\tau_{C}$ and a dilation $\Lambda\in\FCSp{SC}{S'C}$ inducing a catalytic channel $\tilde{\Lambda}$.
For any free system state $\gamma_{S}\in\FSSp{S}$, denote $\chi_{S'C} = \Lambda(\gamma_{S}\otimes\sigma_{C})$, where $\Tr_{C}[\chi_{S'C}] = \tilde{\Lambda}(\gamma_{S})$ and $\Tr_{S'}[\chi_{S'C}] = \tau_{C}$. 
By monotonicity of $\REM$, 
\begin{align}\label{eq:SC_dataprocessing}
	\REM(\tau_{C}) = \REM(\gamma_{S}\otimes\tau_{C}) \geq \REM(\chi_{S'C}),
\end{align}
where the first equality follows from both appending and discarding a free state $\gamma_{S}$ being a free operation.

To apply Thm.~\ref{thm:Piani}, set $\FSSp{S'C}=\mincomp{S'}{C}$ and $\mbM$ to be the set of \emph{all measurements}. 
 \begin{itemize}
     \item The first requirement, that $\FSSp{X}$ is convex for any system $X$ and that $\rho_{XY}\in\FSSp{XY}$ implies $\Tr_{Y}[\rho_{XY}]\in\FSSp{X}$ for any subsystem $X$ and $Y$ are already imposed as axioms for our framework.
     
     \item The second requirement can be shown using the structure of $\mincomp{S'}{C}$: note that any $\gamma_{S'C}\in\mincomp{S'}{C}$ can be written as 
     \begin{align}
     \gamma_{S'C} = \sum_{i}p_{i}(\gamma_{S'}^{(i)}\otimes\tilde{\gamma}_{C}^{(i)}).
     \end{align}
     Then for any POVM element $E_{S'}$ that acts on system $S'$, 
	\begin{align}\label{eq:post_measurement_free}
		\frac{\Tr_{S'}\left[E_{S'}\gamma_{S'C}\right]}{\Tr\left[E_{S'}\gamma_{S'C}\right]} = \frac{\sum_{i}p_{i}\Tr[E_{S'}\gamma_{S'}^{(i)}]\tilde{\gamma}_{C}^{(i)}}{\sum_{i}p_{i}\Tr[E_{S'}\gamma_{S'}^{(i)}]} \eqcolon \sum_{i}\tilde{p}_{i}\tilde{\gamma}_{C}^{(i)},
	\end{align} 
	where $\{\tilde{p}_{i}\}_{i}$ are valid convex coefficients. 
	By convexity, the resulting state $\frac{\Tr_{S'}\left[E_{S'}\gamma_{S'C}\right]}{\Tr\left[E_{S'}\gamma_{S'C}\right]} $ remains in $\FSSp{C}$.
 \end{itemize}
	
Using Thm.~\ref{thm:Piani}, 
\begin{align}
	\REM(\chi_{S'C}) \geq \MREM(\tilde{\Lambda}(\rho_{S})) + \REM(\tau_{C}).
\end{align}
Combined with Eq.~\eqref{eq:SC_dataprocessing}, it follows that $\REM(\tau_{C}) \geq \MREM(\tilde{\Lambda}(\rho_{S})) + \REM(\tau_{C})$, or equivalently,
\begin{align}
	0 \geq \MREM(\tilde{\Lambda}(\gamma_{S})),
\end{align}
whenever $\REM(\tau_{C})<\infty$. 
The latter is guaranteed by the existence of a full rank free state in $\FSSp{C}$ (criterion 4 of our basic assumptions stated in the main text).
By the faithfulness of $\MREM$, we conclude that $\tilde{\Lambda}(\gamma_{S})\in\FSSp{S'}$ for any $\gamma_{S}\in\FSSp{S}$.
In other words, catalytic channels for resource theories with minimal composition are always free operations.

Note that in Ref.~\cite{Takagi2022CorrCat} it has been shown that the super-additive monotone, if exists, also restricts marginal or correlated catalysis that are not robust. 
Our result then implies that theories with the minimal composition cannot be trivialized via (non-robust) marginal or correlated catalysis.

\section{Miscellaneous resource theories}

\subsection{Affine compositions}

Sometimes, free state sets have a stronger condition than being convex. 
There is a subclass of resource theories whose sets of free states are affine, such as the resource theory of athermality, asymmetry, coherence, and imaginarity~\cite{Gour2017Affine}, i.e. the set of free state $\FSSp{S}$ satisfies 
\begin{align}\label{eq: affine free staet set}
	\mathrm{aff}\FSSp{S}\cap \SSp{S} \coloneq \lset\sum_{i}p_{i}\rho_{S}^{(i)} \,\bigg\vert\, \forall i,\ \rho_{S}^{(i)}\in\FSSp{S},\ p_{i}\in\mathbb{R},\ \sum_{i}p_{i} = 1\rset\cap \SSp{S}  =\FSSp{S}.
\end{align}
However, if we impose the minimal composition, even when the free state sets are affine for each subsystem, the composite set might not be affine. 
To accommodate such theories, we consider affine composition of free states: the composite free state set is defined as
\begin{align}\label{def:aff_comp}
	\affcomp{A}{B} = \mathrm{aff}\lset \rho_{A}\otimes\rho_{B} \,\vert\, \rho_{A}\in\FSSp{A},\ \rho_{B}\in\FSSp{B} \rset \cap \SSp{AB},
\end{align}
for the system $AB$, given free state sets for $A$ and $B$.
However, it is important to remark that $\affcomp{A}{B}$ might not satisfy the four basic assumptions in the main text, when $\FSSp{A}$ or $\FSSp{B}$ is not affine. 
Suppose that $\gamma_{AB}\in\affcomp{A}{B}$, i.e. $\gamma_{AB} = \sum_{i}p_{i}\gamma_{A}^{(i)}\otimes\gamma_{B}^{(i)}$, where $\{p_{i}\}_{i}$ is a set of affine coefficients and $\gamma_{X}^{(i)}$ are some free states for system $X = A,B$.
In general, $\Tr_{B}[\gamma_{AB}] = \sum_{i}p_{i}\gamma_{A}^{(i)}\notin\FSSp{A}$ if $\FSSp{A}$ is not affine, breaking the second assumption in the main text. Nevertheless, when the resource theory is affine, the set $\affcomp{A}{B}$ is a valid free state set satisfying all four assumptions, and we establish the result analogous to Thm.~\ref{thm:mincomp_nocat} in the main text. 
\begin{thm}\label{thm:affcomp_nocat}
	If an affine CRNG resource theory has the affine composition, then it does not allow non-free catalytic channels and resource broadcasting.    
\end{thm}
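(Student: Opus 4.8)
The plan is to reproduce the strategy behind Thm.~\ref{thm:mincomp_nocat} almost verbatim, replacing $\mincomp{S'}{C}$ by $\affcomp{S'}{C}$ throughout and reusing Piani's theorem (Thm.~\ref{thm:Piani}) to manufacture a faithful, strongly super-additive monotone. I would again take $\REM$ to be the relative entropy of resource and $\MREM$ its restricted-measurement variant with $\mbM$ the set of all POVMs, so that $\MREM$ is both faithful and monotone. The three hypotheses of Thm.~\ref{thm:Piani} must then be verified for the composite free set $\FSSp{S'C}=\affcomp{S'}{C}$: convexity of each $\FSSp{X}$, stability under partial trace, and the post-measurement condition $\Tr_{S'}[E_{S'}\gamma_{S'C}]/\Tr[E_{S'}\gamma_{S'C}]\in\FSSp{C}$ for every POVM element $E_{S'}$ on $S'$ and every free $\gamma_{S'C}$. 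For an affine theory the first two are immediate, since each $\FSSp{X}$ is affine hence convex, and the affine composition obeys (A2) precisely when the subsystem sets are affine, as noted just before Eq.~\eqref{def:aff_comp}.

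The one genuinely new step is the post-measurement condition, and it is exactly here that affineness (rather than mere convexity) is required. Decomposing $\gamma_{S'C}=\sum_i p_i\,\gamma_{S'}^{(i)}\otimes\tilde\gamma_{C}^{(i)}$ with $p_i\in\mathbb{R}$ and $\sum_i p_i=1$, the computation of Eq.~\eqref{eq:post_measurement_free} still produces a $C$-marginal of the form $\sum_i \tilde p_i\,\tilde\gamma_{C}^{(i)}$ with $\sum_i\tilde p_i=1$, but now the $\tilde p_i$ may be negative, so the convexity argument used in the minimal case breaks down. Instead I would note that this marginal is simultaneously a bona fide density operator, being the normalized post-measurement state of the valid state $\gamma_{S'C}$, and an element of $\mathrm{aff}\FSSp{C}$, being an affine combination of free $C$ states; the affine property $\mathrm{aff}\FSSp{C}\cap\SSp{C}=\FSSp{C}$ then forces it to lie in $\FSSp{C}$. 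Care is needed only to ensure the normalization is well-defined, which holds because the denominator $\Tr[E_{S'}\gamma_{S'C}]$ is an outcome probability of a genuine quantum state and zero-probability outcomes can be discarded.

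With the hypotheses established, the remainder is identical to the minimal-composition argument. For a catalyst $\tau_C$, a free dilation $\Lambda$, and any free input $\gamma_S$, setting $\chi_{S'C}=\Lambda(\gamma_S\otimes\tau_C)$ gives $\REM(\tau_C)=\REM(\gamma_S\otimes\tau_C)\geq\REM(\chi_{S'C})$ by monotonicity, while Thm.~\ref{thm:Piani} gives $\REM(\chi_{S'C})\geq\MREM(\tilde\Lambda(\gamma_S))+\REM(\tau_C)$; since (A4) guarantees a full-rank free state and hence $\REM(\tau_C)<\infty$, the two copies of $\REM(\tau_C)$ cancel to leave $0\geq\MREM(\tilde\Lambda(\gamma_S))$. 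Faithfulness of $\MREM$ then yields $\tilde\Lambda(\gamma_S)\in\FSSp{S'}$ for every free $\gamma_S$, so every catalytic channel is RNG, and resource broadcasting is ruled out through Thm.~\ref{thrm:broadcast_robcat}. I expect the main obstacle to be the handling of the possibly-negative affine weights: replacing the convexity step of the minimal proof by the affine-closure property $\mathrm{aff}\FSSp{C}\cap\SSp{C}=\FSSp{C}$ is what makes the argument go through, and is the reason the theorem must assume the full theory (not merely the composition) to be affine.
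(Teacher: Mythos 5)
Your proposal is correct and follows essentially the same route as the paper: the paper's own proof simply says to repeat the argument of Thm.~\ref{thm:mincomp_nocat} with affine coefficients in place of convex ones, relying on the fact that affine combinations of free states that are valid states are free. Your treatment of the post-measurement step---observing that the $C$-marginal in Eq.~\eqref{eq:post_measurement_free} is both a genuine density operator and an element of $\mathrm{aff}\,\FSSp{C}$, hence free by $\mathrm{aff}\,\FSSp{C}\cap\SSp{C}=\FSSp{C}$---is precisely the detail the paper leaves implicit, handled correctly.
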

\begin{proof}
	The proof is identical to that of Thm.~\ref{thm:mincomp_nocat}, except for replacing convex coefficients $\{p_{i}\}_{i}$ and $\{\tilde{p}_{i}\}_{i}$ by affine coefficients. 
	This replacement does not change the conclusion, since the affine combination of free states are assumed to be free.  
\end{proof}

\subsection{Intersection of multiple completely resource non-generating operations}\label{subsec:GPC}

In some resource theories of interest, the free operation set is not the set of CRNG operations, but given as the intersection of multiple CRNG operation sets for different resources. 
In such cases, if each CRNG operation does not allow broadcasting of a resource, the intersection of them also has no-broadcasting property. 
A prominent example is the Gibbs-preserving covariant operations~\cite{Cwiklinski2015GPC, Gour2018GPC}, which is an intersection of Gibbs-preserving operations (athermality) and covariant operations (asymmetry).
Therefore, Gibbs-preserving covariant operations cannot benefit from robust catalysis. 

\subsection{Resource theory of imaginarity}\label{subsec:imaginarity}
Resource theory of imaginarity is defined by the set of free state
\begin{align}
	\FSSp{X} = \lset \rho_{X} \,\vert\, \rho_{X}\in\SSp{X},\ \bra{i}_{X}\rho_{X}\ket{j}_{X}\in\mathbb{R},\ \forall i,j \rset,
\end{align}
where $\{\ket{i}_{X}\}_{i}$ is a fixed basis for system $X$ prescribed by some restrictions.
When $S$ and $C$ are qubit systems, CRNG operation $\FCSp{SC}{SC}$ includes CNOT gate, which maps $\ket{0i}_{SC}\mapsto\ket{ii}_{SC}$ for $i = 0,1$. 

Suppose that the maximally imaginary state $\ket{\hat{+}}_{C} = \frac{1}{\sqrt{2}}(\ket{0}_{C} + i\ket{1}_{C})$ is given as a catalyst.
Then the strict robust catalysis
\begin{align}
	\Lambda(\dm{0}_{S}\otimes\dm{\hat{+}}_{C}) = \dm{\hat{+}}_{S}\otimes\dm{\hat{+}}_{C}
\end{align}
is implementable using a combination of CNOT and H gates $\Lambda$~\cite{Takagi2017Imaginarity}. The broadcasting version of this channel, $\mB(\dm{\hat{+}}) = \dm{\hat{+}}^{\otimes 2}$ is a special case of resource broadcasting, which is dubbed catalytic replication in Ref.~\cite{Kuroiwa2020catreplication}.
To develop more intuition, we invoke Proposition~1 of Ref.~\cite{Wu2021Imaginarity} stating that any pure state $\ket{\psi}$ can be transformed to an effectively qubit pure state $\ket{\phi} = \alpha\ket{0} + i\beta\ket{1}$ with $\alpha,\beta\in\mathbb{R}$, via some real unitary operation. 
Since the operation is unitary, the inverse of such operation is also a real operation, i.e. any pure state $\ket{\psi}$ is equivalent to some qubit pure state $\ket{\phi}$ in terms of the imaginarity. 
It also means that any pure state cannot have a resource exceeding that of the maximally imaginary pure qubit state $\ket{\hat{+}}$, even if the former state consists of multiple copies of the latter state. 

On the other hand, this non-extensiveness does not indicate that the catalytic replication $\rho\to\rho^{\otimes 2}$ is always possible via real operations. 
In Ref.~\cite{Zhang_2024a}, it is shown that the maximally imaginary state is the only state that admits catalytic replication among qubit states or pure states. 

\subsection{Limited subspace theories}\label{subsection:non-fullrank}
It is easy to construct a theory that admits robust catalysis if we break the fourth assumption in the main text, that is, if all free states are non-full rank.
Suppose that there exists a catalyst state $\tau_{C}$ that is not in the support of any free state in $\FSSp{C}$. 
Then a catalytic channel can be constructed by the following steps.
First, the catalyst part is measured to distinguish whether it is in the support of some free state.
If it is in the support, prepare a system-catalyst free state. 
If it is not, which is the case for $\tau_{C}$ we assumed, prepare $\sigma_{S}\otimes\tau_{C}$, where $\sigma_{S}$ is any system state and $\tau_{C}$ is the catalyst reduced state of the initial state. 

The simplest example is the resource theory of athermality at temperature $T=0$, where the only free state is the ground state $\gamma_{X} = \dm{0}_{X}$ for any system $X$.
If a catalyst $\tau_{C} = \dm{1}_{C}$, the channel 
\begin{align}
	\Lambda(\rho_{SC}) = \dm{00}_{SC}\Tr[\rho_{SC}(\1_{S}\otimes\gamma_{C})] + \sigma_{S}\otimes \dm{1}_C \Tr[\rho_{SC}(\1_{S}\otimes[\1_{C} - \gamma_{C}])] 
\end{align}
transforms $\Lambda(\dm{00}_{SC}) = \dm{00}_{SC}$ and $\Lambda(\rho_{S}\otimes\tau_{C}) = \sigma_{S}\otimes\tau_{C}$, regardless of the choices for $\rho_{S},\sigma_{S}\in\SSp{S}$.
Hence, any state transformation becomes possible with a (strict) robust catalysis. 
A very similar construction was used in Ref.~\cite{Kuroiwa2020catreplication}.

\section{Generalizations and limitations of Theorem~\ref{thm:robcat_possible}}\label{section: robcat possible generalization}

\subsection{Proof of Theorem~\ref{thm:robcat_possible}}

We prove Thm.~\ref{thm:robcat_possible} by proving a slightly more general theorem. 
Before that, we first define another class of composite free state set
\begin{align}\label{def:sep_comp}
	\sepcomp{A}{B}\coloneq (\maxcomp{A}{B}) \cap \mathrm{SEP},
\end{align}
where $\mathrm{SEP}$ represents all separable states across the $A \vert B$ partition.
This set explicitly excludes entanglement between subsystems as a free resource.
We prove that catalytic channels are advantageous even with Eq.~\eqref{def:sep_comp}, thereby demonstrating that the robust catalytic advantage is not merely an artifact of free entanglement.

\begin{thm}\label{thm:robcat_possible_general1}
	Suppose the free state set of system $C$ is a singleton $\FSSp{C} = \{\gamma_{C}\}$ and the composite free state set $\FSSp{SC}$ is either $\maxcomp{S}{C}$ or $\sepcomp{S}{C}$.
	Then there exists a free operation $\Lambda\in\FCSp{SC}{SC}$ and a catalyst $\tau_{C}\in\SSp{C}$ such that 
	\begin{align}
		\tau_{C} &= \Tr_{S}[\Lambda(\rho_{S}\otimes\tau_{C})],\\
		\sigma_{S} &= \Tr_{C}[\Lambda(\rho_{S}\otimes\tau_{C})],
	\end{align}
	for all $\rho_{S}\in\SSp{S}$, if and only if 
	\begin{align}\label{eq:dma_cond_suppl} 
		\sup_{\tau_{C}\in\SSp{C}} D_{\max} (\tau_{C}\Vert\FSSp{C}) \geq D_{\max}(\sigma_{S}\Vert\FSSp{S}).
	\end{align}
	The existence of such $\Lambda$ implies the existence of a catalytic measure-and-prepare channel $\tilde{\Lambda}(\bullet_{S}) = \Tr_{C}[\Lambda(\bullet_{S}\otimes\tau_{C})]$	preparing the state $\sigma_{S}$ and a broadcasting channel $\mB(\bullet_{C}) = \Lambda (\gamma_{S}\otimes\bullet_{C}) \in\FCSp{C}{SC}$ with some $\gamma_{S}\in\FSSp{S}$.
\end{thm}

\begin{proof}
We first show the necessity of Eq.~\eqref{eq:dma_cond_suppl}.
Suppose that a broadcasting channel $\mB(\bullet_{C}) = \Lambda (\gamma_{S}\otimes\bullet_{C}) $ yields $\Tr_{S}[\mB(\tau_{C})] = \tau_{C}$ and $\Tr_{C}[\mB(\tau_{C})] = \sigma_{S}$.
Since $D_{\max}$ is a resource monotone, it never increases after a free operation, i.e.
\begin{align}
	D_{\max}(\tau_{C}\Vert\FSSp{C})\geq D_{\max}(\mB(\tau_{C})\Vert\FSSp{SC})\geq D_{\max}(\sigma_{S}\Vert\FSSp{S}).
\end{align}
Hence, for any $\tau_{C}$, Eq.~\eqref{eq:dma_cond_suppl} is satisfied.

To prove sufficiency, note that $\max_{\tau_{C}\in\SSp{C}}D_{\max} (\tau_{C}\Vert\FSSp{C}) = -\log(\lambda_{1})$ is attained for $\psi_{C} = \dm{\psi}_{C}$ corresponding to the smallest eigenvalue of the catalyst free state $\lambda_{1}$.
Consider the measure-and-prepare channel
\begin{align}
	\Lambda(\varrho_{SC}) \coloneq \Tr[(\1_{S}\otimes\psi_{C})\varrho_{SC}] (\sigma_{S}\otimes\psi_{C})+ \Tr[(\1_{SC}-\1_{S}\otimes\psi_{C})\varrho_{SC}] (\omega_{S}\otimes\zeta_{C}),
\end{align}
which maps $\rho_{S}\otimes\psi_{C}\to\sigma_{S}\otimes\psi_{C}$.
To complete the proof, it remains to verify $\Lambda\in\FCSp{SC}{SC}$.
Using Lemma~\ref{lem:RNGCRNG} in~\suppl, it is sufficient to show that $\Lambda$ is RNG.
Any free state $\gamma_{SC}\in\FSSp{SC}$ has the marginal state $\gamma_{C}$, which implies
\begin{align}
	\Lambda(\gamma_{SC}) = \lambda_{1}(\sigma_{S}\otimes\psi_{C}) + (1-\lambda_{1})(\omega_{S}\otimes\zeta_{C}).
\end{align}
The $C$ reduced state $\Tr_{S}[\Lambda(\gamma_{SC})]$ can always be made free by choosing $(1-\lambda_{1}) \zeta_{C} = \gamma_{C} - \lambda_{1}\psi_{C}$, while the $S$ reduced state $\Tr_{C}[\Lambda(\gamma_{SC})]$ is free if $\lambda_{1}\sigma_{S}+(1-\lambda_{1})\omega_{S}\in\FSSp{S}$.
The latter is equivalent to the fact that there exists $\gamma_{S}\in\FSSp{S}$, such that $\gamma_{S}-\lambda_{1}\sigma_{S}\geq0$, i.e. $D_{\max}(\sigma_{S}\Vert\FSSp{S})\leq-\log(\lambda_{1}) = D_{\max}(\psi_{C}\Vert\FSSp{c})$.
If that is the case, $\Lambda\in\FCSp{SC}{SC}$ when $\FCSp{SC}{SC}$ is defined by the maximal composition $\FCSp{SC}{SC} = \textrm{CRNG}(\maxcomp{S}{C})$.
Furthermore, $\Lambda(\gamma_{C})$ is a separable operation.
Hence, when $\FCSp{SC}{SC}$ is defined by the separable composition $\FCSp{SC}{SC} = \textrm{CRNG}(\sepcomp{S}{C})$, $\Lambda$ is also free. 
\end{proof}

\subsection{Resource broadcasting for commuting sets}

We establish a theorem analogous to Theorem~\ref{thm:robcat_possible_general1} for another class of free states $\FSSp{C}$.
In particular, the necessary and sufficient condition Eq.~\eqref{eq:dmax_cond_commuting} is identical to that of Theorem~\ref{thm:robcat_possible_general1}.

\begin{thm}\label{thm:robcat_possible_general2}
	Suppose that the maximally mixed state $\frac{\1_{C}}{d_{C}}\in\FSSp{C}$, where $d_{C} = \dim(\1_{C})$, and all states in $\FSSp{C}$ commute with each other.
	Additionally assume that $\FSSp{SC} = \maxcomp{S}{C}$ or $\sepcomp{S}{C}$.  
	Then there exists a state $\tau_{C}\in\SSp{C}$ and a broadcasting channel $\mB\in\FCSp{C}{SC}$, such that $\sigma_{S} = \Tr_{C}[\mB(\tau_{C})]$ can be prepared in $S$, if and only if
	\begin{align}\label{eq:dmax_cond_commuting}    
		\sup_{\tau_{C}\in\SSp{C}} D_{\max} (\tau_{C}\Vert\FSSp{C}) \geq D_{\max}(\sigma_{S}\Vert\FSSp{S}).
	\end{align}
\end{thm}

\begin{proof}
        We can show that the condition is necessary by following the proof in Theorem~\ref{thm:robcat_possible}.
        Now, let us show the condition is sufficient.
	Since $D_{\max} (\tau_C \Vert \FSSp{C})$ is a convex function of $\tau_C$, we note that $\sup_{\tau_{C}\in\SSp{C}} D_{\max} (\tau_{C}\Vert\FSSp{C}) = D_{\max} (\psi_{C}\Vert\FSSp{C})$, for some pure state $\psi_{C} = \dm{\psi}_{C}$.
        Furthermore, since all $\FSSp{C}$ states are commuting by assumption, there exists an orthonormal basis $\{\ket{i}_{C}\}_{i}$, in which all states in $\FSSp{C}$ are diagonal.
	We will show that the optimal pure state is $\ket{\psi}_{C} = \sum_{i}\frac{1}{\sqrt{d_{C}}}\ket{i}_{C}$.
	Any state $\gamma_{C}\in\FSSp{C}$ can be written as $\gamma_{C} = \sum_{i}p_{i}\dm{i}_{C}$ and thus 
	\begin{align}\label{eq: overlap with pure catalyst}
		\bra{\psi}_{C}\gamma_{C}\ket{\psi}_{C} = \frac{1}{d_{C}}.
	\end{align}
	This means that $r\gamma_{C} - \dm{\psi}_{C} \geq 0$ only when $r\geq d_{C}$, i.e. $D_{\max} (\psi_{C}\Vert\FSSp{C}) = \log d_{C}$.
	On the other hand, $d_{C}\frac{\1_{C}}{d_{C}} - \tau_{C} \geq 0$ for any state $\tau_{C}\in\SSp{C}$. 
	Which indicates that $D_{\max} (\tau_{C}\Vert\FSSp{C}) \leq \log d_{C}$ for any $\tau_{C}\in\SSp{C}$. 
	Therefore, the maximum for the LHS in Eq.~\eqref{eq:dmax_cond_commuting} is obtained when $\tau_{C} = \dm{\psi}_{C}$.
	
	Now we show that $\dm{\psi}_{C}$ can broadcast any $\sigma_{S}$, such that $\log d_{C} \geq D_{\max}(\sigma_{S}\Vert\FSSp{S})$.
	Consider the same map used in Appendix E in End Matters:
	\begin{align}\label{eq: broadcasting channel for general thm 4}
		\mB(\rho_{C}) \coloneq \Tr[\psi_{C}\rho_{C}] (\sigma_{S}\otimes\psi_{C}) + \Tr[(\1_{C}-\psi_{C})\rho_{C}] (\omega_{S}\otimes\zeta_{C}),
	\end{align}
	which broadcasts $\psi_{C}\to\sigma_{S}\otimes\psi_{C}$.
	Note that this is also a strict broadcasting.
	We would like to show that $\mB$ is RNG. 
	
	By definition of $D_{\max}$, there exists a free state $\tilde{\gamma}_{S}\in\FSSp{S}$ and a state $\omega_{S}\in\SSp{S}$, such that $(d_{C}\tilde{\gamma}_{S} - \sigma_{S}) \propto \omega_{S}$.
	From the normalization of quantum states, we have
	\begin{align}\label{eq: Dmax alternate form for S}
		\tilde{\gamma}_{S} = \frac{1}{d_{C}}\sigma_{S} + (1 - \frac{1}{d_{C}})\omega_{S}.
	\end{align}
	Because of Eq.~\eqref{eq: overlap with pure catalyst}, we have $\Tr[\psi_{C}\gamma_{C}] = \frac{1}{d_{C}}$ and $\Tr[(\1_{C}-\psi_{C})\gamma_{C}] = 1 - \frac{1}{d_{C}}$ for any $\gamma_{C}\in\FSSp{C}$.
	By setting $\zeta_{C} = \frac{\1_{C} - \psi_{C}}{d_{C}-1}$ and $\omega_{S}$ from Eq.~\eqref{eq: Dmax alternate form for S}, we have
	\begin{align}
		\mB(\gamma_{C}) =  \frac{1}{d_{C}}(\sigma_{S}\otimes\psi_{C}) +  (1 - \frac{1}{d_{C}}) (\omega_{S}\otimes\zeta_{C}),
	\end{align}
	such that $\Tr_{C}[\mB(\gamma_{C})] = \tilde{\gamma}_{S}\in\FSSp{S}$ and $\Tr_{S}[\mB(\gamma_{C})] =\frac{\1_{C}}{d_{C}}\in\FSSp{C}$, which proves that $\mB(\gamma_{C})\in\maxcomp{S}{C}$ and $\mB(\gamma_{C})\in\sepcomp{S}{C}$.
\end{proof}

One extreme example is the resource theory of local coherences, where all states that are diagonal in a basis $\{\ket{i}_{C}\}_{i}$ are considered free.
For qubit subsystems, the local free state sets are given as incoherent states $\FSSp{S(C)}=\{p\dyad{0}+(1-p)\dyad{1}|0\leq p \leq 1\}$.
We choose the composite free state set $\FSSp{SC}$ to be $\maxcomp{S}{C}$.
This setting of local coherence has been studied in~\cite{Horova_2022}.

We illustrate how Thm.~\ref{thm:robcat_possible_general2} works in this setup. 
This means that any $\sigma_{S}\in\SSp{S}$ is attainable, since $\FSSp{S}$ and $\FSSp{C}$ are the same.
First note that $\sup_{\tau_{C}\in\SSp{C}} D_{\max} (\tau_{C}\Vert\FSSp{C}) = D_{\max} (\dm{+}_{C}\Vert \FSSp{C}) = \log2$, where $\ket{+} = \frac{1}{\sqrt{2}}(\ket{0}+\ket{1})$.

The channel $\mB$ corresponding to Eq.~\eqref{eq: broadcasting channel for general thm 4} is
\begin{align}
	\mB(\rho_{C}) = \bra{+}_{C}\rho_{C}\ket{+}_{C} (\sigma_{S}\otimes\dm{+}_{C}) + \bra{-}_{C}\rho_{C}\ket{-}_{C} (\omega_{S}\otimes\dm{-}_{C}).
\end{align}
For this case, catalytic replication~\cite{Kuroiwa2020catreplication} is also possible, i.e. $\mB(\dm{+}_{C}) = \dm{+}_{C}\otimes\dm{+}_{C}$ with $\mB\in\FCSp{C}{SC}$.

\subsection{Local resource theories}
We can extend Theorem~\ref{thm:robcat_possible_general2} to other settings that is concerned with local resources such as local magic and local entanglement.
As an example, let us focus on local magic.
Here, the local system is a qubit, with the local free state set given by the convex hull of stabilizer states $\mathcal{S}_X$, and the composite free states are either $\mathcal{S}_{SC} = \maxcomp{S}{C}$ or $\sepcomp{S}{C}$.
Then, any state preparation channel can be implemented as a catalytic channel.
\begin{thm}
  For any state $\sigma_S$, there exists a state $\tau_{C}\in\SSp{C}$ and a broadcasting channel $\mB\in\FCSp{C}{SC}$, such that $\sigma_{S} = \Tr_{C}[\mB(\tau_{C})]$.
\end{thm}
\begin{proof}
  Let us pick a fixed but arbitrary $\sigma$.
  Let $\ket{T} = 1/\sqrt{2} \pqty{\ket{0} + \exp\pqty{i \pi/4} \ket{1}}$, and let $\ket{\tilde{T}} = 1/\sqrt{2} \pqty{\ket{0} - \exp\pqty{i \pi/4} \ket{1}}, \tilde{\sigma}$ be the state opposite to $\ket{T}, \sigma$ on the Bloch sphere.
  Let
  \begin{align}
      \mB(\rho_{C}) &\coloneq \bra{T}_{C}\rho_{C}\ket{T}_C (\sigma_{S}\otimes\ketbra{T}_{C}) + \bra{\tilde{T}}_{C}\rho_{C}\ket{\tilde{T}}_C (\tilde{\sigma}_{S}\otimes\ketbra{\tilde{T}}_{C}).
  \end{align}
  Observe that if $\sin^2 \pqty{\pi/8} \leq p \leq \cos^2 \pqty{\pi/8}$, then $(1-p) \rho + p \tilde{\rho}$ is always a stabilizer state for any state $\rho$.
  Now, if $\sigma$ is a stabilizer state, then we have $\sin^2 \pqty{\pi/8} \leq \bra{T} \sigma \ket{T} \leq \cos^2 \pqty{\pi/8}$ and therefore $\mathcal{B} (\sigma)$ is in $\mathcal{S}_{SC}$.
  Furthermore, $\mathcal{B} (\ketbra{T}_C) = \sigma_S \otimes \ketbra{T}_C$, proving the claim.
\end{proof}

\subsection{No broadcasting under any composition}

In this section, we show that Theorem~\ref{thm:robcat_possible} cannot be generalized to all theories, even under maximal/separable compositions. We formulate this in terms of a no-broadcasting theorem.
\begin{thm}
Consider free state sets $\FSSp{S},\FSSp{C}$ such that:
\begin{enumerate}[leftmargin=*,itemsep=0pt]
    \item The affine hull of $\FSSp{C}$ contains $\SSp{C}$, in other words, any $\tau_{C}\in\SSp{C}$ can be written as an affine combination of elements in $\FSSp{C}$.
    \item $\FSSp{S}$ is affine.
\end{enumerate}
Then broadcasting is impossible under any composition rule, as long as the resource theory satisfies the basic assumption (A2) in main text.
\end{thm}

Before starting the proof, let us give examples of the above two conditions in the theorem. An example of $\FSSp{C}$ satisfying condition 1 would be the set of separable states. Meanwhile, an example of $\FSSp{S}$ satisfying condition 2 would be a singleton set, $ \FSSp{S} = \{\frac{\1_{S}}{d_{S}}\}$.

The proof strategy is to show that for \emph{any} state $\tau_C$ and any free channel $\mB \in \FCSp{C}{SC}$, the state $\Tr_{C} [\mathcal{B}(\tau_C)]$ is always free.
This means that there are states $\sigma_S$ that satisfy
\begin{align}
    \sup_{\tau_{C}\in\SSp{C}} D_{\max} (\tau_{C}\Vert\FSSp{C}) \geq D_{\max}(\sigma_{S}\Vert\FSSp{S}),
\end{align}
and yet $\sigma_S \neq \Tr_C \mathcal{B}(\tau_C)$ for any $\tau_C$ and any free $\mathcal{B}$.

\begin{proof}
Firstly, note that for any free map $\mB \in \FCSp{C}{SC}$ and any free state $\gamma_{C}$, assumption (A2) in the main text implies that we must have $\Tr_{C} [\mathcal{B}(\gamma_C)] \in \FSSp{S}$.
However, by condition 1, any density matrix on $C$ can be written as an affine combination $\tau_{C} = \sum_{i} \alpha_{i} \gamma_{i}$ with real coefficients $\alpha_{i}$ and free states $\gamma_{i}\in\FSSp{C}$.
Next, linearity of $\mB$ gives $\Tr_{C} [\mB (\tau_{C})] = \sum_{i} \alpha_{i} \Tr_{C} \mB (\gamma_i)$.
Since $\mathcal{S}_S$ is an affine set by condition 2, $\Tr_{C} [\mB (\tau_{C})] \in \FSSp{S}$.
\end{proof}

Note that in the above reasoning, the composition rule is left unspecified;
as long as $\Tr_C\circ\mB(\FSSp{C}) \subseteq\FSSp{S}$, the claim holds.
We end this section with a final complementary remark: while $C \to SC$ broadcasting is not allowed in this theorem, $S \to SC$ broadcasting may still be possible, e.g. if the condition in Theorem~\ref{thm:robcat_possible} or Theorem~\ref{thm:robcat_possible_general2} is fulfilled.

\newcommand{\revrelent}[1]{\scalebox{-1}[1]{R}\pqty{#1}}
\section{Strict robust catalysis}\label{sec:strictRC}

We show that strict robust catalysis cannot be advantageous when using a full rank catalyst state. 
We begin by defining the reversed relative entropy of resource, first used in Ref.~\cite{Eisert_2003} in the context of entanglement theory,
\begin{align}
	\revrelent{\rho_{S}} \coloneq \inf_{\gamma_{S} \in \FSSp{S}} D(\gamma_{S} \Vert \rho_{S}).
\end{align}
It is easy to verify that this is a faithful measure, i.e. $\revrelent{\rho_{S}}\geq0$ with equality if and only if $\rho_{S}\in\FSSp{S}$ and that $\revrelent{\rho_{S}}<\infty$ when $\rho_{S}$ is full rank.

\begin{lemma}
  The reversed relative entropy of resource is additive, i.e.
  \begin{align}
	\revrelent{\rho_{S}\otimes\omega_{S'}} = \revrelent{\rho_{S}} + \revrelent{\omega_{S'}}.
  \end{align}
\end{lemma}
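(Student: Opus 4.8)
The plan is to prove the two inequalities $\revrelent{\rho_{S}\otimes\omega_{S'}} \le \revrelent{\rho_{S}} + \revrelent{\omega_{S'}}$ and $\revrelent{\rho_{S}\otimes\omega_{S'}} \ge \revrelent{\rho_{S}} + \revrelent{\omega_{S'}}$ separately. Both directions rely on the composition Assumptions (A1) and (A2), combined with elementary identities for the relative entropy.

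For subadditivity ($\le$), I would take arbitrary free states $\gamma_{S}\in\FSSp{S}$ and $\gamma_{S'}\in\FSSp{S'}$. By Assumption (A1) their product $\gamma_{S}\otimes\gamma_{S'}$ is a free state on $SS'$, hence an admissible candidate in the infimum defining $\revrelent{\rho_{S}\otimes\omega_{S'}}$. Using the additivity of the relative entropy on tensor-product arguments,
\[
    D(\gamma_{S}\otimes\gamma_{S'}\Vert\rho_{S}\otimes\omega_{S'}) = D(\gamma_{S}\Vert\rho_{S}) + D(\gamma_{S'}\Vert\omega_{S'}),
\]
so $\revrelent{\rho_{S}\otimes\omega_{S'}}$ is bounded above by the right-hand side. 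Minimizing over $\gamma_{S}$ and $\gamma_{S'}$ independently yields the desired upper bound.

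For superadditivity ($\ge$), I would invoke the chain rule for the relative entropy against a product reference. For any $\gamma_{SS'}\in\FSSp{SS'}$ with marginals $\gamma_{S}=\Tr_{S'}[\gamma_{SS'}]$ and $\gamma_{S'}=\Tr_{S}[\gamma_{SS'}]$,
\[
    D(\gamma_{SS'}\Vert\rho_{S}\otimes\omega_{S'}) = D(\gamma_{S}\Vert\rho_{S}) + D(\gamma_{S'}\Vert\omega_{S'}) + I(S{:}S')_{\gamma_{SS'}},
\]
where $I(S{:}S')_{\gamma_{SS'}}\ge 0$ is the quantum mutual information; this identity follows immediately from $\log(\rho_{S}\otimes\omega_{S'}) = \log\rho_{S}\otimes\1 + \1\otimes\log\omega_{S'}$ and the definition of mutual information. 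Dropping the nonnegative mutual-information term and then invoking Assumption (A2)—which guarantees that the marginals $\gamma_{S}$ and $\gamma_{S'}$ are themselves free—I can bound $D(\gamma_{S}\Vert\rho_{S})\ge\revrelent{\rho_{S}}$ and $D(\gamma_{S'}\Vert\omega_{S'})\ge\revrelent{\omega_{S'}}$. Taking the infimum over $\gamma_{SS'}\in\FSSp{SS'}$ then gives $\revrelent{\rho_{S}\otimes\omega_{S'}}\ge\revrelent{\rho_{S}}+\revrelent{\omega_{S'}}$, and combining with subadditivity proves the equality.

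The key point, and the only real subtlety, is the superadditivity step: without Assumption (A2) one could not lower-bound the marginal relative entropies by $\revrelent{\cdot}$, and the argument would break. The remaining care concerns $+\infty$ values, occurring when the support of the free state is not contained in that of $\rho_{S}$ or $\omega_{S'}$; this is consistent under the usual conventions for the extended reals and is moot whenever $\rho_{S},\omega_{S'}$ are full rank, so that both sides are finite, as already noted after the definition of $\revrelent{\cdot}$.
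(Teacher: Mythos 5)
Your proposal is correct and follows essentially the same route as the paper: both arguments rest on the decomposition $D(\gamma_{SS'}\Vert\rho_{S}\otimes\omega_{S'}) = D(\Tr_{S'}[\gamma_{SS'}]\Vert\rho_{S}) + D(\Tr_{S}[\gamma_{SS'}]\Vert\omega_{S'}) + I(S{:}S')_{\gamma_{SS'}} \geq D(\Tr_{S'}[\gamma_{SS'}]\Vert\rho_{S}) + D(\Tr_{S}[\gamma_{SS'}]\Vert\omega_{S'})$ (with equality exactly for uncorrelated states) together with Assumptions (A1) and (A2) to keep the relevant states free. The paper merely compresses your two separate inequalities into a single step, observing that the infimum over $\gamma_{SS'}\in\FSSp{SS'}$ may be restricted to uncorrelated free states because replacing any free $\gamma_{SS'}$ by the free product of its marginals never increases the objective.
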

\begin{proof}
  Recall the following property of the quantum relative entropy: $D(\gamma_{SS'} \Vert \rho_{S}\otimes\omega_{S'}) \geq D(\Tr_{S'}[\gamma_{SS'}] \Vert \rho_{S}) + D(\Tr_{S}[\gamma_{SS'}] \Vert \omega_{S'}) $, with the equality if and only if $\gamma_{SS'}$ is an uncorrelated state. 
  Since for any $\gamma_{SS'} \in \FSSp{SS'}$, the uncorrelated state $\Tr_{S'}[\gamma_{SS'}]\otimes\Tr_{S}[\gamma_{SS'}]\in\FSSp{SS'}$, the infimum of $ D(\gamma_{SS'} \Vert \rho_{S}\otimes\omega_{S'})$ over $\gamma_{SS'} \in \FSSp{SS'}$ is always obtained when $\gamma_{SS'}$ is uncorrelated, and we obtain the claim.
\end{proof}

Now, suppose a free state $\gamma_{S}\in\FSSp{S}$ can be transformed into another state $\sigma_{S'}$ via strict robust catalysis with a full rank catalyst $\tau_{C}$. 
Then there exists a channel $\Lambda\in\FCSp{SC}{S'C}$, such that 
\begin{align}
	\Lambda(\gamma_{S}\otimes\tau_{C}) = \sigma_{S'}\otimes\tau_{C}.
\end{align}
From the additivity and the monotonicity, 
\begin{align}
	\revrelent{\tau_{C}} = \revrelent{\gamma_{S}\otimes\tau_{C}} \geq \revrelent{ \sigma_{S'}\otimes\tau_{C}} = \revrelent{\sigma_{S'}} + \revrelent{\tau_{C}}.
\end{align}
By assumption, $\tau_{C}$ is full-rank, so $\revrelent{\tau_{C}}$ is finite, which implies that $0\geq \revrelent{\sigma_{S'}}$, or equivalently $\sigma_{S'}\in\FSSp{S'}$ from the faithfulness.

\section{Robust battery-assisted transformation}

Let us investigate the robustness of battery-assisted transformations, introduced in~\cite{arxiv_Ganardi_2024,Alhambra_2019}.
In this framework, there is a distinguished measure of resource $R$.
A transformation from $\rho_{S}$ to $\sigma_{S'}$ is said to be possible with a battery if there exists a battery state $\tau_{C}$ and a channel $\Lambda\in\FCSp{SC}{S'C'}$ such that $\Lambda(\rho_{S} \otimes \tau_{C}) = \sigma_{S'} \otimes \tau'_{C'}$ with $R(\tau'_{C'}) \geq R(\tau_{C})$.
Let us define a robust analogue as follows:
a $(\rho,\epsilon)$-robust transformation is possible if $\Lambda(\rho_{S} \otimes \tau_{C}) = \sigma_{S'} \otimes \tau'_{C'}$ with $R(\tau'_{C'}) \geq R(\tau_{C})$, and for all states $\rho'_{S}$, such that $\Vert \rho'_{S} - \rho_{S} \rVert_{1}\leq \epsilon$, we have 
\begin{align}
	R\left(\Tr_{S}\left[\Lambda(\rho'_{S}\otimes\tau_{C})\right]\right) \geq R(\tau_{C})
\end{align}
In addition, we can allow arbitrarily small error in the target state as long as the resource in the battery is uniformly bounded.

Ref.~\cite{arxiv_Ganardi_2024} showed that if $R$ is a finite and additive monotone, then $\rho_{S}$ can be transformed into $\sigma_{S'}$ with a battery if and only if $R(\rho_{S}) \geq R(\sigma_{S'})$.
We now demonstrate that such transformations are robust when $R$ is convex and continuous.
If $R(\rho_{S}) > R(\sigma_{S'})$, then by continuity, there exists an $\epsilon > 0$ such that for all $\rho'_{S}$ satisfying $\Vert \rho'_{S} - \rho_{S} \rVert_{1}\leq \epsilon$, we have $R(\rho'_{S}) \geq R(\sigma_{S'})$.
Following the proof in Ref.~\cite{arxiv_Ganardi_2024}, we can show that the transformation is robust.
If $R(\rho_{S}) = R(\sigma_{S'})$, convexity ensures that for any $\delta > 0$, there exists a state $\sigma^{\delta}_{S'}$, such that $\Vert \sigma^{\delta}_{S'} - \sigma_{S'} \rVert_{1}\leq \delta$ and $R(\sigma^{\delta}_{S'}) \leq (1-\delta) R(\sigma_{S'}) = (1-\delta) R(\rho_{S}) < R(\rho_{S})$.
Consequently, for any sequence $\Bqty{\delta_n} \to 0$, there exists a robust battery-assisted transformation from $\rho_{S}$ to $\sigma^{\delta_n}_{S'}$, and $\sigma^{\delta_n}_{S'} \to \sigma_{S'}$.
Furthermore, the resource stored in the battery can be assumed to be uniformly bounded by $R(\sigma_{S'})$.
Therefore, we conclude that there exists a robust battery-assisted transformation from $\rho_{S}$ to $\sigma_{S'}$.


\begin{thebibliography}{111}%
	\makeatletter
	\providecommand \@ifxundefined [1]{%
		\@ifx{#1\undefined}
	}%
	\providecommand \@ifnum [1]{%
		\ifnum #1\expandafter \@firstoftwo
		\else \expandafter \@secondoftwo
		\fi
	}%
	\providecommand \@ifx [1]{%
		\ifx #1\expandafter \@firstoftwo
		\else \expandafter \@secondoftwo
		\fi
	}%
	\providecommand \natexlab [1]{#1}%
	\providecommand \enquote  [1]{``#1''}%
	\providecommand \bibnamefont  [1]{#1}%
	\providecommand \bibfnamefont [1]{#1}%
	\providecommand \citenamefont [1]{#1}%
	\providecommand \href@noop [0]{\@secondoftwo}%
	\providecommand \href [0]{\begingroup \@sanitize@url \@href}%
	\providecommand \@href[1]{\@@startlink{#1}\@@href}%
	\providecommand \@@href[1]{\endgroup#1\@@endlink}%
	\providecommand \@sanitize@url [0]{\catcode `\\12\catcode `\$12\catcode
		`\&12\catcode `\#12\catcode `\^12\catcode `\_12\catcode `\%12\relax}%
	\providecommand \@@startlink[1]{}%
	\providecommand \@@endlink[0]{}%
	\providecommand \url  [0]{\begingroup\@sanitize@url \@url }%
	\providecommand \@url [1]{\endgroup\@href {#1}{\urlprefix }}%
	\providecommand \urlprefix  [0]{URL }%
	\providecommand \Eprint [0]{\href }%
	\providecommand \doibase [0]{https://doi.org/}%
	\providecommand \selectlanguage [0]{\@gobble}%
	\providecommand \bibinfo  [0]{\@secondoftwo}%
	\providecommand \bibfield  [0]{\@secondoftwo}%
	\providecommand \translation [1]{[#1]}%
	\providecommand \BibitemOpen [0]{}%
	\providecommand \bibitemStop [0]{}%
	\providecommand \bibitemNoStop [0]{.\EOS\space}%
	\providecommand \EOS [0]{\spacefactor3000\relax}%
	\providecommand \BibitemShut  [1]{\csname bibitem#1\endcsname}%
	\let\auto@bib@innerbib\@empty
	\bibitem [{\citenamefont {Jonathan}\ and\ \citenamefont
		{Plenio}(1999)}]{jonathan1999entanglement}%
	\BibitemOpen
	\bibfield  {author} {\bibinfo {author} {\bibfnamefont {D.}~\bibnamefont
			{Jonathan}}\ and\ \bibinfo {author} {\bibfnamefont {M.~B.}\ \bibnamefont
			{Plenio}},\ }\bibfield  {title} {\bibinfo {title} {Entanglement-assisted
			local manipulation of pure quantum states},\ }\href
	{https://doi.org/10.1103/PhysRevLett.83.3566} {\bibfield  {journal} {\bibinfo
			{journal} {Phys. Rev. Lett.}\ }\textbf {\bibinfo {volume} {83}},\ \bibinfo
		{pages} {3566} (\bibinfo {year} {1999})}\BibitemShut {NoStop}%
	\bibitem [{\citenamefont {Lipka-Bartosik}\ \emph {et~al.}(2024)\citenamefont
		{Lipka-Bartosik}, \citenamefont {Wilming},\ and\ \citenamefont
		{Ng}}]{LipkaBartosik2023CatReview}%
	\BibitemOpen
	\bibfield  {author} {\bibinfo {author} {\bibfnamefont {P.}~\bibnamefont
			{Lipka-Bartosik}}, \bibinfo {author} {\bibfnamefont {H.}~\bibnamefont
			{Wilming}},\ and\ \bibinfo {author} {\bibfnamefont {N.~H.~Y.}\ \bibnamefont
			{Ng}},\ }\bibfield  {title} {\bibinfo {title} {Catalysis in quantum
			information theory},\ }\href {https://doi.org/10.1103/RevModPhys.96.025005}
	{\bibfield  {journal} {\bibinfo  {journal} {Rev. Mod. Phys.}\ }\textbf
		{\bibinfo {volume} {96}},\ \bibinfo {pages} {025005} (\bibinfo {year}
		{2024})}\BibitemShut {NoStop}%
	\bibitem [{\citenamefont {Datta}\ \emph {et~al.}(2023)\citenamefont {Datta},
		\citenamefont {Kondra}, \citenamefont {Miller},\ and\ \citenamefont
		{Streltsov}}]{Datta2023CatReview}%
	\BibitemOpen
	\bibfield  {author} {\bibinfo {author} {\bibfnamefont {C.}~\bibnamefont
			{Datta}}, \bibinfo {author} {\bibfnamefont {T.~V.}\ \bibnamefont {Kondra}},
		\bibinfo {author} {\bibfnamefont {M.}~\bibnamefont {Miller}},\ and\ \bibinfo
		{author} {\bibfnamefont {A.}~\bibnamefont {Streltsov}},\ }\bibfield  {title}
	{\bibinfo {title} {Catalysis of entanglement and other quantum resources},\
	}\href {https://doi.org/10.1088/1361-6633/acfbec} {\bibfield  {journal}
		{\bibinfo  {journal} {Rep. Prog. Phys.}\ }\textbf {\bibinfo {volume} {86}},\
		\bibinfo {pages} {116002} (\bibinfo {year} {2023})}\BibitemShut {NoStop}%
	\bibitem [{\citenamefont {Buhrman}\ \emph {et~al.}(2014)\citenamefont
		{Buhrman}, \citenamefont {Cleve}, \citenamefont {Kouck\'{y}}, \citenamefont
		{Loff},\ and\ \citenamefont {Speelman}}]{Buhrman2014CatalyticSpace}%
	\BibitemOpen
	\bibfield  {author} {\bibinfo {author} {\bibfnamefont {H.}~\bibnamefont
			{Buhrman}}, \bibinfo {author} {\bibfnamefont {R.}~\bibnamefont {Cleve}},
		\bibinfo {author} {\bibfnamefont {M.}~\bibnamefont {Kouck\'{y}}}, \bibinfo
		{author} {\bibfnamefont {B.}~\bibnamefont {Loff}},\ and\ \bibinfo {author}
		{\bibfnamefont {F.}~\bibnamefont {Speelman}},\ }\bibfield  {title} {\bibinfo
		{title} {Computing with a full memory: catalytic space},\ }in\ \href
	{https://doi.org/10.1145/2591796.2591874} {\emph {\bibinfo {booktitle}
			{Proceedings of the Forty-Sixth Annual ACM Symposium on Theory of
				Computing}}},\ \bibinfo {series and number} {STOC '14}\ (\bibinfo
	{publisher} {Association for Computing Machinery},\ \bibinfo {address} {New
		York, NY, USA},\ \bibinfo {year} {2014})\ pp.\ \bibinfo {pages}
	{857--866}\BibitemShut {NoStop}%
	\bibitem [{\citenamefont {Buhrman}\ \emph {et~al.}(2018)\citenamefont
		{Buhrman}, \citenamefont {Kouck{\'y}}, \citenamefont {Loff},\ and\
		\citenamefont {Speelman}}]{Buhrman2018CatSpace2}%
	\BibitemOpen
	\bibfield  {author} {\bibinfo {author} {\bibfnamefont {H.}~\bibnamefont
			{Buhrman}}, \bibinfo {author} {\bibfnamefont {M.}~\bibnamefont {Kouck{\'y}}},
		\bibinfo {author} {\bibfnamefont {B.}~\bibnamefont {Loff}},\ and\ \bibinfo
		{author} {\bibfnamefont {F.}~\bibnamefont {Speelman}},\ }\bibfield  {title}
	{\bibinfo {title} {Catalytic space: Non-determinism and hierarchy},\ }\href
	{https://doi.org/10.1007/s00224-017-9784-7} {\bibfield  {journal} {\bibinfo
			{journal} {Theory Comput. Syst.}\ }\textbf {\bibinfo {volume} {62}},\
		\bibinfo {pages} {116} (\bibinfo {year} {2018})}\BibitemShut {NoStop}%
	\bibitem [{\citenamefont {Cook}\ \emph {et~al.}(2025)\citenamefont {Cook},
		\citenamefont {Li}, \citenamefont {Mertz},\ and\ \citenamefont
		{Pyne}}]{Cook2025CatalyticTree3}%
	\BibitemOpen
	\bibfield  {author} {\bibinfo {author} {\bibfnamefont {J.}~\bibnamefont
			{Cook}}, \bibinfo {author} {\bibfnamefont {J.}~\bibnamefont {Li}}, \bibinfo
		{author} {\bibfnamefont {I.}~\bibnamefont {Mertz}},\ and\ \bibinfo {author}
		{\bibfnamefont {E.}~\bibnamefont {Pyne}},\ }\bibfield  {title} {\bibinfo
		{title} {The structure of catalytic space: Capturing randomness and time via
			compression},\ }in\ \href {https://doi.org/10.1145/3717823.3718112} {\emph
		{\bibinfo {booktitle} {Proceedings of the 57th Annual ACM Symposium on Theory
				of Computing}}},\ \bibinfo {series and number} {STOC '25}\ (\bibinfo
	{publisher} {Association for Computing Machinery},\ \bibinfo {address} {New
		York, NY, USA},\ \bibinfo {year} {2025})\ pp.\ \bibinfo {pages}
	{554--564}\BibitemShut {NoStop}%
	\bibitem [{\citenamefont {Cook}\ and\ \citenamefont
		{Mertz}(2020)}]{Cook2020CatalyticTree}%
	\BibitemOpen
	\bibfield  {author} {\bibinfo {author} {\bibfnamefont {J.}~\bibnamefont
			{Cook}}\ and\ \bibinfo {author} {\bibfnamefont {I.}~\bibnamefont {Mertz}},\
	}\bibfield  {title} {\bibinfo {title} {Catalytic approaches to the tree
			evaluation problem},\ }in\ \href {https://doi.org/10.1145/3357713.3384316}
	{\emph {\bibinfo {booktitle} {Proceedings of the 52nd Annual ACM SIGACT
				Symposium on Theory of Computing}}},\ \bibinfo {series and number} {STOC
		2020}\ (\bibinfo  {publisher} {Association for Computing Machinery},\
	\bibinfo {address} {New York, NY, USA},\ \bibinfo {year} {2020})\ pp.\
	\bibinfo {pages} {752--760}\BibitemShut {NoStop}%
	\bibitem [{\citenamefont {Cook}\ and\ \citenamefont
		{Mertz}(2024)}]{Cook2024CatalyticTree2}%
	\BibitemOpen
	\bibfield  {author} {\bibinfo {author} {\bibfnamefont {J.}~\bibnamefont
			{Cook}}\ and\ \bibinfo {author} {\bibfnamefont {I.}~\bibnamefont {Mertz}},\
	}\bibfield  {title} {\bibinfo {title} {Tree evaluation is in space {$O$} (log
			$n$ · log log $n$)},\ }in\ \href {https://doi.org/10.1145/3618260.3649664}
	{\emph {\bibinfo {booktitle} {Proceedings of the 56th Annual ACM Symposium on
				Theory of Computing}}},\ \bibinfo {series and number} {STOC 2024}\ (\bibinfo
	{publisher} {Association for Computing Machinery},\ \bibinfo {address} {New
		York, NY, USA},\ \bibinfo {year} {2024})\ pp.\ \bibinfo {pages}
	{1268--1278}\BibitemShut {NoStop}%
	\bibitem [{\citenamefont {Williams}(2025)}]{Williams2025SimulatingTime}%
	\BibitemOpen
	\bibfield  {author} {\bibinfo {author} {\bibfnamefont {R.~R.}\ \bibnamefont
			{Williams}},\ }\href {https://arxiv.org/abs/2502.17779} {\bibinfo {title}
		{Simulating time with square-root space}} (\bibinfo {year} {2025}),\ \Eprint
	{https://arxiv.org/abs/2502.17779} {arXiv:2502.17779 [cs.CC]} \BibitemShut
	{NoStop}%
	\bibitem [{\citenamefont {Amy}\ \emph {et~al.}(2023)\citenamefont {Amy},
		\citenamefont {Crawford}, \citenamefont {Glaudell}, \citenamefont {Macasieb},
		\citenamefont {Mendelson},\ and\ \citenamefont
		{Ross}}]{Amy2023CatalyticEmbedding}%
	\BibitemOpen
	\bibfield  {author} {\bibinfo {author} {\bibfnamefont {M.}~\bibnamefont
			{Amy}}, \bibinfo {author} {\bibfnamefont {M.}~\bibnamefont {Crawford}},
		\bibinfo {author} {\bibfnamefont {A.~N.}\ \bibnamefont {Glaudell}}, \bibinfo
		{author} {\bibfnamefont {M.~L.}\ \bibnamefont {Macasieb}}, \bibinfo {author}
		{\bibfnamefont {S.~S.}\ \bibnamefont {Mendelson}},\ and\ \bibinfo {author}
		{\bibfnamefont {N.~J.}\ \bibnamefont {Ross}},\ }\href
	{https://arxiv.org/abs/2305.07720} {\bibinfo {title} {Catalytic embeddings of
			quantum circuits}} (\bibinfo {year} {2023}),\ \Eprint
	{https://arxiv.org/abs/2305.07720} {arXiv:2305.07720 [quant-ph]} \BibitemShut
	{NoStop}%
	\bibitem [{\citenamefont {Takeuchi}(2024)}]{Takeuchi2024MBQCCat}%
	\BibitemOpen
	\bibfield  {author} {\bibinfo {author} {\bibfnamefont {Y.}~\bibnamefont
			{Takeuchi}},\ }\bibfield  {title} {\bibinfo {title} {Catalytic transformation
			from computationally universal to strictly universal measurement-based
			quantum computation},\ }\href
	{https://doi.org/10.1103/PhysRevLett.133.050601} {\bibfield  {journal}
		{\bibinfo  {journal} {Phys. Rev. Lett.}\ }\textbf {\bibinfo {volume} {133}},\
		\bibinfo {pages} {050601} (\bibinfo {year} {2024})}\BibitemShut {NoStop}%
	\bibitem [{\citenamefont {Kissinger}\ \emph {et~al.}(2024)\citenamefont
		{Kissinger}, \citenamefont {Ross},\ and\ \citenamefont {van~de
			Wetering}}]{Kissinger2024CatalysingCompleteness}%
	\BibitemOpen
	\bibfield  {author} {\bibinfo {author} {\bibfnamefont {A.}~\bibnamefont
			{Kissinger}}, \bibinfo {author} {\bibfnamefont {N.~J.}\ \bibnamefont
			{Ross}},\ and\ \bibinfo {author} {\bibfnamefont {J.}~\bibnamefont {van~de
				Wetering}},\ }\href {https://arxiv.org/abs/2404.09915} {\bibinfo {title}
		{Catalysing completeness and universality}} (\bibinfo {year} {2024}),\
	\Eprint {https://arxiv.org/abs/2404.09915} {arXiv:2404.09915 [quant-ph]}
	\BibitemShut {NoStop}%
	\bibitem [{\citenamefont {Buhrman}\ \emph {et~al.}(2025)\citenamefont
		{Buhrman}, \citenamefont {Folkertsma}, \citenamefont {Mertz}, \citenamefont
		{Speelman}, \citenamefont {Strelchuk}, \citenamefont {Subramanian},\ and\
		\citenamefont {Tupker}}]{Buhrman2025QuantumCatalyticSpace}%
	\BibitemOpen
	\bibfield  {author} {\bibinfo {author} {\bibfnamefont {H.}~\bibnamefont
			{Buhrman}}, \bibinfo {author} {\bibfnamefont {M.}~\bibnamefont {Folkertsma}},
		\bibinfo {author} {\bibfnamefont {I.}~\bibnamefont {Mertz}}, \bibinfo
		{author} {\bibfnamefont {F.}~\bibnamefont {Speelman}}, \bibinfo {author}
		{\bibfnamefont {S.}~\bibnamefont {Strelchuk}}, \bibinfo {author}
		{\bibfnamefont {S.}~\bibnamefont {Subramanian}},\ and\ \bibinfo {author}
		{\bibfnamefont {Q.}~\bibnamefont {Tupker}},\ }\href
	{https://arxiv.org/abs/2506.16324} {\bibinfo {title} {Quantum catalytic
			space}} (\bibinfo {year} {2025}),\ \Eprint {https://arxiv.org/abs/2506.16324}
	{arXiv:2506.16324 [quant-ph]} \BibitemShut {NoStop}%
	\bibitem [{\citenamefont {Khattar}\ and\ \citenamefont
		{Gidney}(2025)}]{Khattar2025ConditionallyClean}%
	\BibitemOpen
	\bibfield  {author} {\bibinfo {author} {\bibfnamefont {T.}~\bibnamefont
			{Khattar}}\ and\ \bibinfo {author} {\bibfnamefont {C.}~\bibnamefont
			{Gidney}},\ }\bibfield  {title} {\bibinfo {title} {Rise of conditionally
			clean ancillae for efficient quantum circuit constructions},\ }\href
	{https://doi.org/10.22331/q-2025-05-21-1752} {\bibfield  {journal} {\bibinfo
			{journal} {{Quantum}}\ }\textbf {\bibinfo {volume} {9}},\ \bibinfo {pages}
		{1752} (\bibinfo {year} {2025})}\BibitemShut {NoStop}%
	\bibitem [{\citenamefont {Kim}(2025)}]{Kim2025CatalyticzRotation}%
	\BibitemOpen
	\bibfield  {author} {\bibinfo {author} {\bibfnamefont {I.~H.}\ \bibnamefont
			{Kim}},\ }\href {https://arxiv.org/abs/2506.15147} {\bibinfo {title}
		{Catalytic $z$-rotations in constant $t$-depth}} (\bibinfo {year} {2025}),\
	\Eprint {https://arxiv.org/abs/2506.15147} {arXiv:2506.15147 [quant-ph]}
	\BibitemShut {NoStop}%
	\bibitem [{\citenamefont {Gidney}(2025)}]{Gidney2025Adder}%
	\BibitemOpen
	\bibfield  {author} {\bibinfo {author} {\bibfnamefont {C.}~\bibnamefont
			{Gidney}},\ }\href {https://arxiv.org/abs/2507.23079} {\bibinfo {title} {A
			classical-quantum adder with constant workspace and linear gates}} (\bibinfo
	{year} {2025}),\ \Eprint {https://arxiv.org/abs/2507.23079} {arXiv:2507.23079
		[quant-ph]} \BibitemShut {NoStop}%
	\bibitem [{\citenamefont {Cirac}\ and\ \citenamefont
		{Zoller}(1994)}]{Cirac1994_Catalysis}%
	\BibitemOpen
	\bibfield  {author} {\bibinfo {author} {\bibfnamefont {J.~I.}\ \bibnamefont
			{Cirac}}\ and\ \bibinfo {author} {\bibfnamefont {P.}~\bibnamefont {Zoller}},\
	}\bibfield  {title} {\bibinfo {title} {Preparation of macroscopic
			superpositions in many-atom systems},\ }\href
	{https://doi.org/10.1103/PhysRevA.50.R2799} {\bibfield  {journal} {\bibinfo
			{journal} {Phys. Rev. A}\ }\textbf {\bibinfo {volume} {50}},\ \bibinfo
		{pages} {R2799} (\bibinfo {year} {1994})}\BibitemShut {NoStop}%
	\bibitem [{\citenamefont {Cirac}\ and\ \citenamefont
		{Zoller}(1995)}]{Cirac1995CiracZollerGate}%
	\BibitemOpen
	\bibfield  {author} {\bibinfo {author} {\bibfnamefont {J.~I.}\ \bibnamefont
			{Cirac}}\ and\ \bibinfo {author} {\bibfnamefont {P.}~\bibnamefont {Zoller}},\
	}\bibfield  {title} {\bibinfo {title} {Quantum computations with cold trapped
			ions},\ }\href {https://doi.org/10.1103/PhysRevLett.74.4091} {\bibfield
		{journal} {\bibinfo  {journal} {Phys. Rev. Lett.}\ }\textbf {\bibinfo
			{volume} {74}},\ \bibinfo {pages} {4091} (\bibinfo {year}
		{1995})}\BibitemShut {NoStop}%
	\bibitem [{\citenamefont {Ghosh}\ \emph {et~al.}(2017)\citenamefont {Ghosh},
		\citenamefont {Latune}, \citenamefont {Davidovich},\ and\ \citenamefont
		{Kurizki}}]{Ghosh2017CatalysisEngine}%
	\BibitemOpen
	\bibfield  {author} {\bibinfo {author} {\bibfnamefont {A.}~\bibnamefont
			{Ghosh}}, \bibinfo {author} {\bibfnamefont {C.~L.}\ \bibnamefont {Latune}},
		\bibinfo {author} {\bibfnamefont {L.}~\bibnamefont {Davidovich}},\ and\
		\bibinfo {author} {\bibfnamefont {G.}~\bibnamefont {Kurizki}},\ }\bibfield
	{title} {\bibinfo {title} {Catalysis of heat-to-work conversion in quantum
			machines},\ }\href {https://doi.org/10.1073/pnas.1711381114} {\bibfield
		{journal} {\bibinfo  {journal} {Proc. Natl. Acad. Sci. U.S.A.}\ }\textbf
		{\bibinfo {volume} {114}},\ \bibinfo {pages} {12156} (\bibinfo {year}
		{2017})}\BibitemShut {NoStop}%
	\bibitem [{\citenamefont {Henao}\ and\ \citenamefont
		{Uzdin}(2021)}]{Henao2021catalytic}%
	\BibitemOpen
	\bibfield  {author} {\bibinfo {author} {\bibfnamefont {I.}~\bibnamefont
			{Henao}}\ and\ \bibinfo {author} {\bibfnamefont {R.}~\bibnamefont {Uzdin}},\
	}\bibfield  {title} {\bibinfo {title} {Catalytic transformations with
			finite-size environments: applications to cooling and thermometry},\ }\href
	{https://doi.org/10.22331/q-2021-09-21-547} {\bibfield  {journal} {\bibinfo
			{journal} {{Quantum}}\ }\textbf {\bibinfo {volume} {5}},\ \bibinfo {pages}
		{547} (\bibinfo {year} {2021})}\BibitemShut {NoStop}%
	\bibitem [{\citenamefont {Henao}\ and\ \citenamefont
		{Uzdin}(2023)}]{Henao2023Catalytic}%
	\BibitemOpen
	\bibfield  {author} {\bibinfo {author} {\bibfnamefont {I.}~\bibnamefont
			{Henao}}\ and\ \bibinfo {author} {\bibfnamefont {R.}~\bibnamefont {Uzdin}},\
	}\bibfield  {title} {\bibinfo {title} {Catalytic leverage of correlations and
			mitigation of dissipation in information erasure},\ }\href
	{https://doi.org/10.1103/PhysRevLett.130.020403} {\bibfield  {journal}
		{\bibinfo  {journal} {Phys. Rev. Lett.}\ }\textbf {\bibinfo {volume} {130}},\
		\bibinfo {pages} {020403} (\bibinfo {year} {2023})}\BibitemShut {NoStop}%
	\bibitem [{\citenamefont {Biswas}\ \emph {et~al.}(2024)\citenamefont {Biswas},
		\citenamefont {\L{}obejko}, \citenamefont {Mazurek},\ and\ \citenamefont
		{Horodecki}}]{Biswas2024Catalytic}%
	\BibitemOpen
	\bibfield  {author} {\bibinfo {author} {\bibfnamefont {T.}~\bibnamefont
			{Biswas}}, \bibinfo {author} {\bibfnamefont {M.}~\bibnamefont {\L{}obejko}},
		\bibinfo {author} {\bibfnamefont {P.}~\bibnamefont {Mazurek}},\ and\ \bibinfo
		{author} {\bibfnamefont {M.}~\bibnamefont {Horodecki}},\ }\bibfield  {title}
	{\bibinfo {title} {Catalytic enhancement in the performance of the
			microscopic two-stroke heat engine},\ }\href
	{https://doi.org/10.1103/PhysRevE.110.044120} {\bibfield  {journal} {\bibinfo
			{journal} {Phys. Rev. E}\ }\textbf {\bibinfo {volume} {110}},\ \bibinfo
		{pages} {044120} (\bibinfo {year} {2024})}\BibitemShut {NoStop}%
	\bibitem [{\citenamefont {\L{}obejko}\ \emph {et~al.}(2024)\citenamefont
		{\L{}obejko}, \citenamefont {Biswas}, \citenamefont {Mazurek},\ and\
		\citenamefont {Horodecki}}]{Lobejko2024Catalytic}%
	\BibitemOpen
	\bibfield  {author} {\bibinfo {author} {\bibfnamefont {M.}~\bibnamefont
			{\L{}obejko}}, \bibinfo {author} {\bibfnamefont {T.}~\bibnamefont {Biswas}},
		\bibinfo {author} {\bibfnamefont {P.}~\bibnamefont {Mazurek}},\ and\ \bibinfo
		{author} {\bibfnamefont {M.}~\bibnamefont {Horodecki}},\ }\bibfield  {title}
	{\bibinfo {title} {Catalytic advantage in otto-like two-stroke quantum
			engines},\ }\href {https://doi.org/10.1103/PhysRevLett.132.260403} {\bibfield
		{journal} {\bibinfo  {journal} {Phys. Rev. Lett.}\ }\textbf {\bibinfo
			{volume} {132}},\ \bibinfo {pages} {260403} (\bibinfo {year}
		{2024})}\BibitemShut {NoStop}%
	\bibitem [{\citenamefont {Chitambar}\ and\ \citenamefont
		{Gour}(2019)}]{Chitambar2019QRTReview}%
	\BibitemOpen
	\bibfield  {author} {\bibinfo {author} {\bibfnamefont {E.}~\bibnamefont
			{Chitambar}}\ and\ \bibinfo {author} {\bibfnamefont {G.}~\bibnamefont
			{Gour}},\ }\bibfield  {title} {\bibinfo {title} {Quantum resource theories},\
	}\href {https://doi.org/10.1103/RevModPhys.91.025001} {\bibfield  {journal}
		{\bibinfo  {journal} {Rev. Mod. Phys.}\ }\textbf {\bibinfo {volume} {91}},\
		\bibinfo {pages} {025001} (\bibinfo {year} {2019})}\BibitemShut {NoStop}%
	\bibitem [{\citenamefont {Horodecki}\ \emph {et~al.}(2009)\citenamefont
		{Horodecki}, \citenamefont {Horodecki}, \citenamefont {Horodecki},\ and\
		\citenamefont {Horodecki}}]{Horodecki2009EntanglementReview}%
	\BibitemOpen
	\bibfield  {author} {\bibinfo {author} {\bibfnamefont {R.}~\bibnamefont
			{Horodecki}}, \bibinfo {author} {\bibfnamefont {P.}~\bibnamefont
			{Horodecki}}, \bibinfo {author} {\bibfnamefont {M.}~\bibnamefont
			{Horodecki}},\ and\ \bibinfo {author} {\bibfnamefont {K.}~\bibnamefont
			{Horodecki}},\ }\bibfield  {title} {\bibinfo {title} {Quantum entanglement},\
	}\href {https://doi.org/10.1103/RevModPhys.81.865} {\bibfield  {journal}
		{\bibinfo  {journal} {Rev. Mod. Phys.}\ }\textbf {\bibinfo {volume} {81}},\
		\bibinfo {pages} {865} (\bibinfo {year} {2009})}\BibitemShut {NoStop}%
	\bibitem [{\citenamefont {Streltsov}\ \emph {et~al.}(2017)\citenamefont
		{Streltsov}, \citenamefont {Adesso},\ and\ \citenamefont
		{Plenio}}]{Streltsov2017CoherenceReview}%
	\BibitemOpen
	\bibfield  {author} {\bibinfo {author} {\bibfnamefont {A.}~\bibnamefont
			{Streltsov}}, \bibinfo {author} {\bibfnamefont {G.}~\bibnamefont {Adesso}},\
		and\ \bibinfo {author} {\bibfnamefont {M.~B.}\ \bibnamefont {Plenio}},\
	}\bibfield  {title} {\bibinfo {title} {Colloquium: Quantum coherence as a
			resource},\ }\href {https://doi.org/10.1103/RevModPhys.89.041003} {\bibfield
		{journal} {\bibinfo  {journal} {Rev. Mod. Phys.}\ }\textbf {\bibinfo {volume}
			{89}},\ \bibinfo {pages} {041003} (\bibinfo {year} {2017})}\BibitemShut
	{NoStop}%
	\bibitem [{\citenamefont {Bravyi}\ and\ \citenamefont
		{Kitaev}(2005)}]{Bravyi2005Magic}%
	\BibitemOpen
	\bibfield  {author} {\bibinfo {author} {\bibfnamefont {S.}~\bibnamefont
			{Bravyi}}\ and\ \bibinfo {author} {\bibfnamefont {A.}~\bibnamefont
			{Kitaev}},\ }\bibfield  {title} {\bibinfo {title} {Universal quantum
			computation with ideal clifford gates and noisy ancillas},\ }\href
	{https://doi.org/10.1103/PhysRevA.71.022316} {\bibfield  {journal} {\bibinfo
			{journal} {Phys. Rev. A}\ }\textbf {\bibinfo {volume} {71}},\ \bibinfo
		{pages} {022316} (\bibinfo {year} {2005})}\BibitemShut {NoStop}%
	\bibitem [{\citenamefont {Turgut}(2007)}]{Turgut2007Trumping}%
	\BibitemOpen
	\bibfield  {author} {\bibinfo {author} {\bibfnamefont {S.}~\bibnamefont
			{Turgut}},\ }\bibfield  {title} {\bibinfo {title} {Catalytic transformations
			for bipartite pure states},\ }\href
	{https://doi.org/10.1088/1751-8113/40/40/012} {\bibfield  {journal} {\bibinfo
			{journal} {J. Phys. A: Math. Theor.}\ }\textbf {\bibinfo {volume} {40}},\
		\bibinfo {pages} {12185} (\bibinfo {year} {2007})}\BibitemShut {NoStop}%
	\bibitem [{\citenamefont {Aubrun}\ and\ \citenamefont
		{Nechita}(2008)}]{Auburn2008Trumping}%
	\BibitemOpen
	\bibfield  {author} {\bibinfo {author} {\bibfnamefont {G.}~\bibnamefont
			{Aubrun}}\ and\ \bibinfo {author} {\bibfnamefont {I.}~\bibnamefont
			{Nechita}},\ }\bibfield  {title} {\bibinfo {title} {Catalytic majorization
			and $\ell_p$ norms},\ }\href {https://doi.org/10.1007/s00220-007-0382-4}
	{\bibfield  {journal} {\bibinfo  {journal} {Commun. Math. Phys.}\ }\textbf
		{\bibinfo {volume} {278}},\ \bibinfo {pages} {133} (\bibinfo {year}
		{2008})}\BibitemShut {NoStop}%
	\bibitem [{\citenamefont {Klimesh}(2007)}]{Klimesh2007Trumping}%
	\BibitemOpen
	\bibfield  {author} {\bibinfo {author} {\bibfnamefont {M.}~\bibnamefont
			{Klimesh}},\ }\href {https://arxiv.org/abs/0709.3680} {\bibinfo {title}
		{Inequalities that collectively completely characterize the catalytic
			majorization relation}} (\bibinfo {year} {2007}),\ \Eprint
	{https://arxiv.org/abs/0709.3680} {arXiv:0709.3680 [quant-ph]} \BibitemShut
	{NoStop}%
	\bibitem [{\citenamefont {Brand{\~a}o}\ \emph {et~al.}(2015)\citenamefont
		{Brand{\~a}o}, \citenamefont {Horodecki}, \citenamefont {Ng}, \citenamefont
		{Oppenheim},\ and\ \citenamefont {Wehner}}]{Brandao2015_2ndlaws}%
	\BibitemOpen
	\bibfield  {author} {\bibinfo {author} {\bibfnamefont {F.}~\bibnamefont
			{Brand{\~a}o}}, \bibinfo {author} {\bibfnamefont {M.}~\bibnamefont
			{Horodecki}}, \bibinfo {author} {\bibfnamefont {N.}~\bibnamefont {Ng}},
		\bibinfo {author} {\bibfnamefont {J.}~\bibnamefont {Oppenheim}},\ and\
		\bibinfo {author} {\bibfnamefont {S.}~\bibnamefont {Wehner}},\ }\bibfield
	{title} {\bibinfo {title} {The second laws of quantum thermodynamics},\
	}\href {https://doi.org/10.1073/pnas.1411728112} {\bibfield  {journal}
		{\bibinfo  {journal} {Proc. Natl. Acad. Sci. U.S.A.}\ }\textbf {\bibinfo
			{volume} {112}},\ \bibinfo {pages} {3275} (\bibinfo {year}
		{2015})}\BibitemShut {NoStop}%
	\bibitem [{\citenamefont {Duan}\ \emph {et~al.}(2005)\citenamefont {Duan},
		\citenamefont {Feng}, \citenamefont {Li},\ and\ \citenamefont
		{Ying}}]{Duan2005_DuanState}%
	\BibitemOpen
	\bibfield  {author} {\bibinfo {author} {\bibfnamefont {R.}~\bibnamefont
			{Duan}}, \bibinfo {author} {\bibfnamefont {Y.}~\bibnamefont {Feng}}, \bibinfo
		{author} {\bibfnamefont {X.}~\bibnamefont {Li}},\ and\ \bibinfo {author}
		{\bibfnamefont {M.}~\bibnamefont {Ying}},\ }\bibfield  {title} {\bibinfo
		{title} {Multiple-copy entanglement transformation and entanglement
			catalysis},\ }\href {https://doi.org/10.1103/PhysRevA.71.042319} {\bibfield
		{journal} {\bibinfo  {journal} {Phys. Rev. A}\ }\textbf {\bibinfo {volume}
			{71}},\ \bibinfo {pages} {042319} (\bibinfo {year} {2005})}\BibitemShut
	{NoStop}%
	\bibitem [{\citenamefont {Wilming}(2021)}]{Wilming2021ReversibleCatalysis}%
	\BibitemOpen
	\bibfield  {author} {\bibinfo {author} {\bibfnamefont {H.}~\bibnamefont
			{Wilming}},\ }\bibfield  {title} {\bibinfo {title} {Entropy and reversible
			catalysis},\ }\href {https://doi.org/10.1103/PhysRevLett.127.260402}
	{\bibfield  {journal} {\bibinfo  {journal} {Phys. Rev. Lett.}\ }\textbf
		{\bibinfo {volume} {127}},\ \bibinfo {pages} {260402} (\bibinfo {year}
		{2021})}\BibitemShut {NoStop}%
	\bibitem [{\citenamefont {Shiraishi}\ and\ \citenamefont
		{Sagawa}(2021)}]{Shiraishi2021GP}%
	\BibitemOpen
	\bibfield  {author} {\bibinfo {author} {\bibfnamefont {N.}~\bibnamefont
			{Shiraishi}}\ and\ \bibinfo {author} {\bibfnamefont {T.}~\bibnamefont
			{Sagawa}},\ }\bibfield  {title} {\bibinfo {title} {Quantum thermodynamics of
			correlated-catalytic state conversion at small scale},\ }\href
	{https://doi.org/10.1103/PhysRevLett.126.150502} {\bibfield  {journal}
		{\bibinfo  {journal} {Phys. Rev. Lett.}\ }\textbf {\bibinfo {volume} {126}},\
		\bibinfo {pages} {150502} (\bibinfo {year} {2021})}\BibitemShut {NoStop}%
	\bibitem [{\citenamefont {Kondra}\ \emph {et~al.}(2021)\citenamefont {Kondra},
		\citenamefont {Datta},\ and\ \citenamefont {Streltsov}}]{Kondra_2021}%
	\BibitemOpen
	\bibfield  {author} {\bibinfo {author} {\bibfnamefont {T.~V.}\ \bibnamefont
			{Kondra}}, \bibinfo {author} {\bibfnamefont {C.}~\bibnamefont {Datta}},\ and\
		\bibinfo {author} {\bibfnamefont {A.}~\bibnamefont {Streltsov}},\ }\bibfield
	{title} {\bibinfo {title} {Catalytic transformations of pure entangled
			states},\ }\href {https://doi.org/10.1103/PhysRevLett.127.150503} {\bibfield
		{journal} {\bibinfo  {journal} {Phys. Rev. Lett.}\ }\textbf {\bibinfo
			{volume} {127}},\ \bibinfo {pages} {150503} (\bibinfo {year}
		{2021})}\BibitemShut {NoStop}%
	\bibitem [{\citenamefont {Stephen}\ \emph {et~al.}(2025)\citenamefont
		{Stephen}, \citenamefont {Nandkishore},\ and\ \citenamefont
		{Zhang}}]{Stephen2025ManybodyCat}%
	\BibitemOpen
	\bibfield  {author} {\bibinfo {author} {\bibfnamefont {D.~T.}\ \bibnamefont
			{Stephen}}, \bibinfo {author} {\bibfnamefont {R.}~\bibnamefont
			{Nandkishore}},\ and\ \bibinfo {author} {\bibfnamefont {J.-H.}\ \bibnamefont
			{Zhang}},\ }\bibfield  {title} {\bibinfo {title} {Many-body quantum catalysts
			for transforming between phases of matter},\ }\href
	{https://doi.org/10.1103/PRXQuantum.6.020353} {\bibfield  {journal} {\bibinfo
			{journal} {PRX Quantum}\ }\textbf {\bibinfo {volume} {6}},\ \bibinfo {pages}
		{020353} (\bibinfo {year} {2025})}\BibitemShut {NoStop}%
	\bibitem [{\citenamefont {Ding}\ \emph {et~al.}(2021)\citenamefont {Ding},
		\citenamefont {Hu},\ and\ \citenamefont {Fan}}]{Ding_2021}%
	\BibitemOpen
	\bibfield  {author} {\bibinfo {author} {\bibfnamefont {F.}~\bibnamefont
			{Ding}}, \bibinfo {author} {\bibfnamefont {X.}~\bibnamefont {Hu}},\ and\
		\bibinfo {author} {\bibfnamefont {H.}~\bibnamefont {Fan}},\ }\bibfield
	{title} {\bibinfo {title} {Amplifying asymmetry with correlating catalysts},\
	}\href {https://doi.org/10.1103/PhysRevA.103.022403} {\bibfield  {journal}
		{\bibinfo  {journal} {Phys. Rev. A}\ }\textbf {\bibinfo {volume} {103}},\
		\bibinfo {pages} {022403} (\bibinfo {year} {2021})}\BibitemShut {NoStop}%
	\bibitem [{\citenamefont {Takagi}\ and\ \citenamefont
		{Shiraishi}(2022)}]{Takagi2022CorrCat}%
	\BibitemOpen
	\bibfield  {author} {\bibinfo {author} {\bibfnamefont {R.}~\bibnamefont
			{Takagi}}\ and\ \bibinfo {author} {\bibfnamefont {N.}~\bibnamefont
			{Shiraishi}},\ }\bibfield  {title} {\bibinfo {title} {Correlation in
			catalysts enables arbitrary manipulation of quantum coherence},\ }\href
	{https://doi.org/10.1103/PhysRevLett.128.240501} {\bibfield  {journal}
		{\bibinfo  {journal} {Phys. Rev. Lett.}\ }\textbf {\bibinfo {volume} {128}},\
		\bibinfo {pages} {240501} (\bibinfo {year} {2022})}\BibitemShut {NoStop}%
	\bibitem [{\citenamefont {Shiraishi}\ and\ \citenamefont
		{Takagi}(2024)}]{Shiraishi_2024}%
	\BibitemOpen
	\bibfield  {author} {\bibinfo {author} {\bibfnamefont {N.}~\bibnamefont
			{Shiraishi}}\ and\ \bibinfo {author} {\bibfnamefont {R.}~\bibnamefont
			{Takagi}},\ }\bibfield  {title} {\bibinfo {title} {Arbitrary amplification of
			quantum coherence in asymptotic and catalytic transformation},\ }\href
	{https://doi.org/10.1103/PhysRevLett.132.180202} {\bibfield  {journal}
		{\bibinfo  {journal} {Phys. Rev. Lett.}\ }\textbf {\bibinfo {volume} {132}},\
		\bibinfo {pages} {180202} (\bibinfo {year} {2024})}\BibitemShut {NoStop}%
	\bibitem [{\citenamefont {Kondra}\ \emph {et~al.}(2024)\citenamefont {Kondra},
		\citenamefont {Ganardi},\ and\ \citenamefont {Streltsov}}]{Kondra_2024}%
	\BibitemOpen
	\bibfield  {author} {\bibinfo {author} {\bibfnamefont {T.~V.}\ \bibnamefont
			{Kondra}}, \bibinfo {author} {\bibfnamefont {R.}~\bibnamefont {Ganardi}},\
		and\ \bibinfo {author} {\bibfnamefont {A.}~\bibnamefont {Streltsov}},\
	}\bibfield  {title} {\bibinfo {title} {Coherence manipulation in asymmetry
			and thermodynamics},\ }\href {https://doi.org/10.1103/PhysRevLett.132.200201}
	{\bibfield  {journal} {\bibinfo  {journal} {Phys. Rev. Lett.}\ }\textbf
		{\bibinfo {volume} {132}},\ \bibinfo {pages} {200201} (\bibinfo {year}
		{2024})}\BibitemShut {NoStop}%
	\bibitem [{\citenamefont {Zhang}\ \emph
		{et~al.}(2024{\natexlab{a}})\citenamefont {Zhang}, \citenamefont {Hu},
		\citenamefont {Ding}, \citenamefont {Hu}, \citenamefont {Guo}, \citenamefont
		{Liu}, \citenamefont {Huang}, \citenamefont {Li},\ and\ \citenamefont
		{Guo}}]{Zhang_2024}%
	\BibitemOpen
	\bibfield  {author} {\bibinfo {author} {\bibfnamefont {C.}~\bibnamefont
			{Zhang}}, \bibinfo {author} {\bibfnamefont {X.-M.}\ \bibnamefont {Hu}},
		\bibinfo {author} {\bibfnamefont {F.}~\bibnamefont {Ding}}, \bibinfo {author}
		{\bibfnamefont {X.-Y.}\ \bibnamefont {Hu}}, \bibinfo {author} {\bibfnamefont
			{Y.}~\bibnamefont {Guo}}, \bibinfo {author} {\bibfnamefont {B.-H.}\
			\bibnamefont {Liu}}, \bibinfo {author} {\bibfnamefont {Y.-F.}\ \bibnamefont
			{Huang}}, \bibinfo {author} {\bibfnamefont {C.-F.}\ \bibnamefont {Li}},\ and\
		\bibinfo {author} {\bibfnamefont {G.-C.}\ \bibnamefont {Guo}},\ }\bibfield
	{title} {\bibinfo {title} {Experimental catalytic amplification of
			asymmetry},\ }\href {https://doi.org/10.1103/PhysRevLett.133.140201}
	{\bibfield  {journal} {\bibinfo  {journal} {Phys. Rev. Lett.}\ }\textbf
		{\bibinfo {volume} {133}},\ \bibinfo {pages} {140201} (\bibinfo {year}
		{2024}{\natexlab{a}})}\BibitemShut {NoStop}%
	\bibitem [{\citenamefont {van Dam}\ and\ \citenamefont
		{Hayden}(2003)}]{PhysRevA.67.060302}%
	\BibitemOpen
	\bibfield  {author} {\bibinfo {author} {\bibfnamefont {W.}~\bibnamefont {van
				Dam}}\ and\ \bibinfo {author} {\bibfnamefont {P.}~\bibnamefont {Hayden}},\
	}\bibfield  {title} {\bibinfo {title} {Universal entanglement transformations
			without communication},\ }\href {https://doi.org/10.1103/PhysRevA.67.060302}
	{\bibfield  {journal} {\bibinfo  {journal} {Phys. Rev. A}\ }\textbf {\bibinfo
			{volume} {67}},\ \bibinfo {pages} {060302} (\bibinfo {year}
		{2003})}\BibitemShut {NoStop}%
	\bibitem [{\citenamefont {Leung}\ and\ \citenamefont
		{Wang}(2014)}]{leung2014characteristics}%
	\BibitemOpen
	\bibfield  {author} {\bibinfo {author} {\bibfnamefont {D.}~\bibnamefont
			{Leung}}\ and\ \bibinfo {author} {\bibfnamefont {B.}~\bibnamefont {Wang}},\
	}\bibfield  {title} {\bibinfo {title} {Characteristics of universal
			embezzling families},\ }\href {https://doi.org/10.1103/PhysRevA.90.042331}
	{\bibfield  {journal} {\bibinfo  {journal} {Phys. Rev. A}\ }\textbf {\bibinfo
			{volume} {90}},\ \bibinfo {pages} {042331} (\bibinfo {year}
		{2014})}\BibitemShut {NoStop}%
	\bibitem [{\citenamefont {Ng}\ \emph {et~al.}(2015)\citenamefont {Ng},
		\citenamefont {Man{\v c}inska}, \citenamefont {Cirstoiu}, \citenamefont
		{Eisert},\ and\ \citenamefont {Wehner}}]{ng2015limits}%
	\BibitemOpen
	\bibfield  {author} {\bibinfo {author} {\bibfnamefont {N.~H.~Y.}\
			\bibnamefont {Ng}}, \bibinfo {author} {\bibfnamefont {L.}~\bibnamefont
			{Man{\v c}inska}}, \bibinfo {author} {\bibfnamefont {C.}~\bibnamefont
			{Cirstoiu}}, \bibinfo {author} {\bibfnamefont {J.}~\bibnamefont {Eisert}},\
		and\ \bibinfo {author} {\bibfnamefont {S.}~\bibnamefont {Wehner}},\
	}\bibfield  {title} {\bibinfo {title} {Limits to catalysis in quantum
			thermodynamics},\ }\href {https://doi.org/10.1088/1367-2630/17/8/085004}
	{\bibfield  {journal} {\bibinfo  {journal} {New J. Phys.}\ }\textbf {\bibinfo
			{volume} {17}},\ \bibinfo {pages} {085004} (\bibinfo {year}
		{2015})}\BibitemShut {NoStop}%
	\bibitem [{\citenamefont {Vidal}\ and\ \citenamefont
		{Cirac}(2002)}]{Vidal_catchannel}%
	\BibitemOpen
	\bibfield  {author} {\bibinfo {author} {\bibfnamefont {G.}~\bibnamefont
			{Vidal}}\ and\ \bibinfo {author} {\bibfnamefont {J.~I.}\ \bibnamefont
			{Cirac}},\ }\bibfield  {title} {\bibinfo {title} {Catalysis in nonlocal
			quantum operations},\ }\href {https://doi.org/10.1103/PhysRevLett.88.167903}
	{\bibfield  {journal} {\bibinfo  {journal} {Phys. Rev. Lett.}\ }\textbf
		{\bibinfo {volume} {88}},\ \bibinfo {pages} {167903} (\bibinfo {year}
		{2002})}\BibitemShut {NoStop}%
	\bibitem [{\citenamefont {Boes}\ \emph {et~al.}(2018)\citenamefont {Boes},
		\citenamefont {Wilming}, \citenamefont {Gallego},\ and\ \citenamefont
		{Eisert}}]{Boes2018_randomness}%
	\BibitemOpen
	\bibfield  {author} {\bibinfo {author} {\bibfnamefont {P.}~\bibnamefont
			{Boes}}, \bibinfo {author} {\bibfnamefont {H.}~\bibnamefont {Wilming}},
		\bibinfo {author} {\bibfnamefont {R.}~\bibnamefont {Gallego}},\ and\ \bibinfo
		{author} {\bibfnamefont {J.}~\bibnamefont {Eisert}},\ }\bibfield  {title}
	{\bibinfo {title} {Catalytic quantum randomness},\ }\href
	{https://doi.org/10.1103/PhysRevX.8.041016} {\bibfield  {journal} {\bibinfo
			{journal} {Phys. Rev. X}\ }\textbf {\bibinfo {volume} {8}},\ \bibinfo {pages}
		{041016} (\bibinfo {year} {2018})}\BibitemShut {NoStop}%
	\bibitem [{\citenamefont {Lie}\ and\ \citenamefont
		{Jeong}(2021{\natexlab{a}})}]{Lie2021Generecity}%
	\BibitemOpen
	\bibfield  {author} {\bibinfo {author} {\bibfnamefont {S.~H.}\ \bibnamefont
			{Lie}}\ and\ \bibinfo {author} {\bibfnamefont {H.}~\bibnamefont {Jeong}},\
	}\bibfield  {title} {\bibinfo {title} {Randomness for quantum channels:
			Genericity of catalysis and quantum advantage of uniformness},\ }\href
	{https://doi.org/10.1103/PhysRevResearch.3.013218} {\bibfield  {journal}
		{\bibinfo  {journal} {Phys. Rev. Res.}\ }\textbf {\bibinfo {volume} {3}},\
		\bibinfo {pages} {013218} (\bibinfo {year} {2021}{\natexlab{a}})}\BibitemShut
	{NoStop}%
	\bibitem [{\citenamefont {Lie}\ and\ \citenamefont
		{Jeong}(2021{\natexlab{b}})}]{Lie2021CatalyticRandomness}%
	\BibitemOpen
	\bibfield  {author} {\bibinfo {author} {\bibfnamefont {S.~H.}\ \bibnamefont
			{Lie}}\ and\ \bibinfo {author} {\bibfnamefont {H.}~\bibnamefont {Jeong}},\
	}\bibfield  {title} {\bibinfo {title} {Catalytic quantum randomness as a
			correlational resource},\ }\href
	{https://doi.org/10.1103/PhysRevResearch.3.043089} {\bibfield  {journal}
		{\bibinfo  {journal} {Phys. Rev. Res.}\ }\textbf {\bibinfo {volume} {3}},\
		\bibinfo {pages} {043089} (\bibinfo {year} {2021}{\natexlab{b}})}\BibitemShut
	{NoStop}%
	\bibitem [{\citenamefont {Devetak}\ \emph {et~al.}(2008)\citenamefont
		{Devetak}, \citenamefont {Harrow},\ and\ \citenamefont
		{Winter}}]{Devetak2008Dynamical}%
	\BibitemOpen
	\bibfield  {author} {\bibinfo {author} {\bibfnamefont {I.}~\bibnamefont
			{Devetak}}, \bibinfo {author} {\bibfnamefont {A.~W.}\ \bibnamefont
			{Harrow}},\ and\ \bibinfo {author} {\bibfnamefont {A.~J.}\ \bibnamefont
			{Winter}},\ }\bibfield  {title} {\bibinfo {title} {A resource framework for
			quantum shannon theory},\ }\href {https://doi.org/10.1109/TIT.2008.928980}
	{\bibfield  {journal} {\bibinfo  {journal} {IEEE Trans. Inf. Theory}\
		}\textbf {\bibinfo {volume} {54}},\ \bibinfo {pages} {4587} (\bibinfo {year}
		{2008})}\BibitemShut {NoStop}%
	\bibitem [{\citenamefont {Chiribella}\ \emph {et~al.}(2008)\citenamefont
		{Chiribella}, \citenamefont {D'Ariano},\ and\ \citenamefont
		{Perinotti}}]{Chiribella2008Supermap}%
	\BibitemOpen
	\bibfield  {author} {\bibinfo {author} {\bibfnamefont {G.}~\bibnamefont
			{Chiribella}}, \bibinfo {author} {\bibfnamefont {G.~M.}\ \bibnamefont
			{D'Ariano}},\ and\ \bibinfo {author} {\bibfnamefont {P.}~\bibnamefont
			{Perinotti}},\ }\bibfield  {title} {\bibinfo {title} {Transforming quantum
			operations: Quantum supermaps},\ }\href
	{https://doi.org/10.1209/0295-5075/83/30004} {\bibfield  {journal} {\bibinfo
			{journal} {EPL}\ }\textbf {\bibinfo {volume} {83}},\ \bibinfo {pages} {30004}
		(\bibinfo {year} {2008})}\BibitemShut {NoStop}%
	\bibitem [{\citenamefont {Rosset}\ \emph {et~al.}(2018)\citenamefont {Rosset},
		\citenamefont {Buscemi},\ and\ \citenamefont {Liang}}]{Rosset2018Dynamical}%
	\BibitemOpen
	\bibfield  {author} {\bibinfo {author} {\bibfnamefont {D.}~\bibnamefont
			{Rosset}}, \bibinfo {author} {\bibfnamefont {F.}~\bibnamefont {Buscemi}},\
		and\ \bibinfo {author} {\bibfnamefont {Y.-C.}\ \bibnamefont {Liang}},\
	}\bibfield  {title} {\bibinfo {title} {Resource theory of quantum memories
			and their faithful verification with minimal assumptions},\ }\href
	{https://doi.org/10.1103/PhysRevX.8.021033} {\bibfield  {journal} {\bibinfo
			{journal} {Phys. Rev. X}\ }\textbf {\bibinfo {volume} {8}},\ \bibinfo {pages}
		{021033} (\bibinfo {year} {2018})}\BibitemShut {NoStop}%
	\bibitem [{\citenamefont {Liu}\ and\ \citenamefont
		{Winter}(2019)}]{Liu2019Dynamical}%
	\BibitemOpen
	\bibfield  {author} {\bibinfo {author} {\bibfnamefont {Z.-W.}\ \bibnamefont
			{Liu}}\ and\ \bibinfo {author} {\bibfnamefont {A.}~\bibnamefont {Winter}},\
	}\href {https://arxiv.org/abs/1904.04201} {\bibinfo {title} {Resource
			theories of quantum channels and the universal role of resource erasure}}
	(\bibinfo {year} {2019}),\ \Eprint {https://arxiv.org/abs/1904.04201}
	{arXiv:1904.04201 [quant-ph]} \BibitemShut {NoStop}%
	\bibitem [{\citenamefont {Gour}\ and\ \citenamefont
		{Winter}(2019)}]{Gour2019Dynamical}%
	\BibitemOpen
	\bibfield  {author} {\bibinfo {author} {\bibfnamefont {G.}~\bibnamefont
			{Gour}}\ and\ \bibinfo {author} {\bibfnamefont {A.}~\bibnamefont {Winter}},\
	}\bibfield  {title} {\bibinfo {title} {How to quantify a dynamical quantum
			resource},\ }\href {https://doi.org/10.1103/PhysRevLett.123.150401}
	{\bibfield  {journal} {\bibinfo  {journal} {Phys. Rev. Lett.}\ }\textbf
		{\bibinfo {volume} {123}},\ \bibinfo {pages} {150401} (\bibinfo {year}
		{2019})}\BibitemShut {NoStop}%
	\bibitem [{\citenamefont {Takagi}\ \emph {et~al.}(2020)\citenamefont {Takagi},
		\citenamefont {Wang},\ and\ \citenamefont {Hayashi}}]{Takagi2020Dynamical}%
	\BibitemOpen
	\bibfield  {author} {\bibinfo {author} {\bibfnamefont {R.}~\bibnamefont
			{Takagi}}, \bibinfo {author} {\bibfnamefont {K.}~\bibnamefont {Wang}},\ and\
		\bibinfo {author} {\bibfnamefont {M.}~\bibnamefont {Hayashi}},\ }\bibfield
	{title} {\bibinfo {title} {Application of the resource theory of channels to
			communication scenarios},\ }\href
	{https://doi.org/10.1103/PhysRevLett.124.120502} {\bibfield  {journal}
		{\bibinfo  {journal} {Phys. Rev. Lett.}\ }\textbf {\bibinfo {volume} {124}},\
		\bibinfo {pages} {120502} (\bibinfo {year} {2020})}\BibitemShut {NoStop}%
	\bibitem [{\citenamefont {Gour}\ and\ \citenamefont
		{Scandolo}(2020)}]{Gour2020Dynamical}%
	\BibitemOpen
	\bibfield  {author} {\bibinfo {author} {\bibfnamefont {G.}~\bibnamefont
			{Gour}}\ and\ \bibinfo {author} {\bibfnamefont {C.~M.}\ \bibnamefont
			{Scandolo}},\ }\href {https://arxiv.org/abs/2101.01552} {\bibinfo {title}
		{Dynamical resources}} (\bibinfo {year} {2020}),\ \Eprint
	{https://arxiv.org/abs/2101.01552} {arXiv:2101.01552 [quant-ph]} \BibitemShut
	{NoStop}%
	\bibitem [{\citenamefont {Ng}\ and\ \citenamefont
		{Woods}(2018)}]{Ng18_Qthermo_book}%
	\BibitemOpen
	\bibfield  {author} {\bibinfo {author} {\bibfnamefont {N.~H.~Y.}\
			\bibnamefont {Ng}}\ and\ \bibinfo {author} {\bibfnamefont {M.~P.}\
			\bibnamefont {Woods}},\ }\bibinfo {title} {Resource theory of quantum
		thermodynamics: Thermal operations and second laws},\ in\ \href
	{https://doi.org/10.1007/978-3-319-99046-0_26} {\emph {\bibinfo {booktitle}
			{Thermodynamics in the Quantum Regime: Fundamental Aspects and New
				Directions}}},\ \bibinfo {editor} {edited by\ \bibinfo {editor}
		{\bibfnamefont {F.}~\bibnamefont {Binder}}, \bibinfo {editor} {\bibfnamefont
			{L.~A.}\ \bibnamefont {Correa}}, \bibinfo {editor} {\bibfnamefont
			{C.}~\bibnamefont {Gogolin}}, \bibinfo {editor} {\bibfnamefont
			{J.}~\bibnamefont {Anders}},\ and\ \bibinfo {editor} {\bibfnamefont
			{G.}~\bibnamefont {Adesso}}}\ (\bibinfo  {publisher} {Springer International
		Publishing},\ \bibinfo {address} {Cham},\ \bibinfo {year} {2018})\ pp.\
	\bibinfo {pages} {625--650}\BibitemShut {NoStop}%
	\bibitem [{\citenamefont {Lostaglio}(2019)}]{Lostaglio19_review}%
	\BibitemOpen
	\bibfield  {author} {\bibinfo {author} {\bibfnamefont {M.}~\bibnamefont
			{Lostaglio}},\ }\bibfield  {title} {\bibinfo {title} {An introductory review
			of the resource theory approach to thermodynamics},\ }\href
	{https://doi.org/10.1088/1361-6633/ab46e5} {\bibfield  {journal} {\bibinfo
			{journal} {Rep. Prog. Phys.}\ }\textbf {\bibinfo {volume} {82}},\ \bibinfo
		{pages} {114001} (\bibinfo {year} {2019})}\BibitemShut {NoStop}%
	\bibitem [{\citenamefont {Lu}\ \emph {et~al.}(2013)\citenamefont {Lu},
		\citenamefont {Sun}, \citenamefont {Wang}, \citenamefont {Luo},\ and\
		\citenamefont {Oh}}]{Lu2013FisherBroad}%
	\BibitemOpen
	\bibfield  {author} {\bibinfo {author} {\bibfnamefont {X.-M.}\ \bibnamefont
			{Lu}}, \bibinfo {author} {\bibfnamefont {Z.}~\bibnamefont {Sun}}, \bibinfo
		{author} {\bibfnamefont {X.}~\bibnamefont {Wang}}, \bibinfo {author}
		{\bibfnamefont {S.}~\bibnamefont {Luo}},\ and\ \bibinfo {author}
		{\bibfnamefont {C.~H.}\ \bibnamefont {Oh}},\ }\bibfield  {title} {\bibinfo
		{title} {Broadcasting quantum fisher information},\ }\href
	{https://doi.org/10.1103/PhysRevA.87.050302} {\bibfield  {journal} {\bibinfo
			{journal} {Phys. Rev. A}\ }\textbf {\bibinfo {volume} {87}},\ \bibinfo
		{pages} {050302} (\bibinfo {year} {2013})}\BibitemShut {NoStop}%
	\bibitem [{\citenamefont {Marvian}\ and\ \citenamefont
		{Spekkens}(2019)}]{Marvian2019Broadcasting}%
	\BibitemOpen
	\bibfield  {author} {\bibinfo {author} {\bibfnamefont {I.}~\bibnamefont
			{Marvian}}\ and\ \bibinfo {author} {\bibfnamefont {R.~W.}\ \bibnamefont
			{Spekkens}},\ }\bibfield  {title} {\bibinfo {title} {No-broadcasting theorem
			for quantum asymmetry and coherence and a trade-off relation for approximate
			broadcasting},\ }\href {https://doi.org/10.1103/PhysRevLett.123.020404}
	{\bibfield  {journal} {\bibinfo  {journal} {Phys. Rev. Lett.}\ }\textbf
		{\bibinfo {volume} {123}},\ \bibinfo {pages} {020404} (\bibinfo {year}
		{2019})}\BibitemShut {NoStop}%
	\bibitem [{\citenamefont {Lostaglio}\ and\ \citenamefont
		{M\"uller}(2019)}]{Lostaglio2019Broadcasting}%
	\BibitemOpen
	\bibfield  {author} {\bibinfo {author} {\bibfnamefont {M.}~\bibnamefont
			{Lostaglio}}\ and\ \bibinfo {author} {\bibfnamefont {M.~P.}\ \bibnamefont
			{M\"uller}},\ }\bibfield  {title} {\bibinfo {title} {Coherence and asymmetry
			cannot be broadcast},\ }\href
	{https://doi.org/10.1103/PhysRevLett.123.020403} {\bibfield  {journal}
		{\bibinfo  {journal} {Phys. Rev. Lett.}\ }\textbf {\bibinfo {volume} {123}},\
		\bibinfo {pages} {020403} (\bibinfo {year} {2019})}\BibitemShut {NoStop}%
	\bibitem [{\citenamefont {Yang}\ \emph {et~al.}(2021)\citenamefont {Yang},
		\citenamefont {Guo}, \citenamefont {Zhang},\ and\ \citenamefont
		{Cao}}]{Yang_2021Broad}%
	\BibitemOpen
	\bibfield  {author} {\bibinfo {author} {\bibfnamefont {C.}~\bibnamefont
			{Yang}}, \bibinfo {author} {\bibfnamefont {Z.}~\bibnamefont {Guo}}, \bibinfo
		{author} {\bibfnamefont {C.}~\bibnamefont {Zhang}},\ and\ \bibinfo {author}
		{\bibfnamefont {H.}~\bibnamefont {Cao}},\ }\bibfield  {title} {\bibinfo
		{title} {Broadcasting coherence via incoherent operations},\ }\href
	{https://doi.org/10.1080/03081087.2021.1957759} {\bibfield  {journal}
		{\bibinfo  {journal} {Linear Multilinear Algebra}\ }\textbf {\bibinfo
			{volume} {70}},\ \bibinfo {pages} {6469} (\bibinfo {year}
		{2021})}\BibitemShut {NoStop}%
	\bibitem [{\citenamefont {Zhang}\ \emph
		{et~al.}(2024{\natexlab{b}})\citenamefont {Zhang}, \citenamefont {Feng},\
		and\ \citenamefont {Luo}}]{Zhang2024_magicnobroad}%
	\BibitemOpen
	\bibfield  {author} {\bibinfo {author} {\bibfnamefont {Z.}~\bibnamefont
			{Zhang}}, \bibinfo {author} {\bibfnamefont {L.}~\bibnamefont {Feng}},\ and\
		\bibinfo {author} {\bibfnamefont {S.}~\bibnamefont {Luo}},\ }\bibfield
	{title} {\bibinfo {title} {No-broadcasting of magic states},\ }\href
	{https://doi.org/10.1103/PhysRevA.110.012462} {\bibfield  {journal} {\bibinfo
			{journal} {Phys. Rev. A}\ }\textbf {\bibinfo {volume} {110}},\ \bibinfo
		{pages} {012462} (\bibinfo {year} {2024}{\natexlab{b}})}\BibitemShut
	{NoStop}%
	\bibitem [{\citenamefont {Barnum}\ \emph {et~al.}(1996)\citenamefont {Barnum},
		\citenamefont {Caves}, \citenamefont {Fuchs}, \citenamefont {Jozsa},\ and\
		\citenamefont {Schumacher}}]{Barnum1996Broad}%
	\BibitemOpen
	\bibfield  {author} {\bibinfo {author} {\bibfnamefont {H.}~\bibnamefont
			{Barnum}}, \bibinfo {author} {\bibfnamefont {C.~M.}\ \bibnamefont {Caves}},
		\bibinfo {author} {\bibfnamefont {C.~A.}\ \bibnamefont {Fuchs}}, \bibinfo
		{author} {\bibfnamefont {R.}~\bibnamefont {Jozsa}},\ and\ \bibinfo {author}
		{\bibfnamefont {B.}~\bibnamefont {Schumacher}},\ }\bibfield  {title}
	{\bibinfo {title} {Noncommuting mixed states cannot be broadcast},\ }\href
	{https://doi.org/10.1103/PhysRevLett.76.2818} {\bibfield  {journal} {\bibinfo
			{journal} {Phys. Rev. Lett.}\ }\textbf {\bibinfo {volume} {76}},\ \bibinfo
		{pages} {2818} (\bibinfo {year} {1996})}\BibitemShut {NoStop}%
	\bibitem [{\citenamefont {D'Ariano}\ \emph {et~al.}(2005)\citenamefont
		{D'Ariano}, \citenamefont {Macchiavello},\ and\ \citenamefont
		{Perinotti}}]{superbroad1}%
	\BibitemOpen
	\bibfield  {author} {\bibinfo {author} {\bibfnamefont {G.~M.}\ \bibnamefont
			{D'Ariano}}, \bibinfo {author} {\bibfnamefont {C.}~\bibnamefont
			{Macchiavello}},\ and\ \bibinfo {author} {\bibfnamefont {P.}~\bibnamefont
			{Perinotti}},\ }\bibfield  {title} {\bibinfo {title} {Superbroadcasting of
			mixed states},\ }\href {https://doi.org/10.1103/PhysRevLett.95.060503}
	{\bibfield  {journal} {\bibinfo  {journal} {Phys. Rev. Lett.}\ }\textbf
		{\bibinfo {volume} {95}},\ \bibinfo {pages} {060503} (\bibinfo {year}
		{2005})}\BibitemShut {NoStop}%
	\bibitem [{\citenamefont {Buscemi}\ \emph {et~al.}(2006)\citenamefont
		{Buscemi}, \citenamefont {D'Ariano}, \citenamefont {Macchiavello},\ and\
		\citenamefont {Perinotti}}]{superbroad2}%
	\BibitemOpen
	\bibfield  {author} {\bibinfo {author} {\bibfnamefont {F.}~\bibnamefont
			{Buscemi}}, \bibinfo {author} {\bibfnamefont {G.~M.}\ \bibnamefont
			{D'Ariano}}, \bibinfo {author} {\bibfnamefont {C.}~\bibnamefont
			{Macchiavello}},\ and\ \bibinfo {author} {\bibfnamefont {P.}~\bibnamefont
			{Perinotti}},\ }\bibfield  {title} {\bibinfo {title} {Universal and
			phase-covariant superbroadcasting for mixed qubit states},\ }\href
	{https://doi.org/10.1103/PhysRevA.74.042309} {\bibfield  {journal} {\bibinfo
			{journal} {Phys. Rev. A}\ }\textbf {\bibinfo {volume} {74}},\ \bibinfo
		{pages} {042309} (\bibinfo {year} {2006})}\BibitemShut {NoStop}%
	\bibitem [{\citenamefont {Barnum}\ \emph {et~al.}(2007)\citenamefont {Barnum},
		\citenamefont {Barrett}, \citenamefont {Leifer},\ and\ \citenamefont
		{Wilce}}]{Barnum2007Broad}%
	\BibitemOpen
	\bibfield  {author} {\bibinfo {author} {\bibfnamefont {H.}~\bibnamefont
			{Barnum}}, \bibinfo {author} {\bibfnamefont {J.}~\bibnamefont {Barrett}},
		\bibinfo {author} {\bibfnamefont {M.}~\bibnamefont {Leifer}},\ and\ \bibinfo
		{author} {\bibfnamefont {A.}~\bibnamefont {Wilce}},\ }\bibfield  {title}
	{\bibinfo {title} {Generalized no-broadcasting theorem},\ }\href
	{https://doi.org/10.1103/PhysRevLett.99.240501} {\bibfield  {journal}
		{\bibinfo  {journal} {Phys. Rev. Lett.}\ }\textbf {\bibinfo {volume} {99}},\
		\bibinfo {pages} {240501} (\bibinfo {year} {2007})}\BibitemShut {NoStop}%
	\bibitem [{\citenamefont {Piani}\ \emph {et~al.}(2008)\citenamefont {Piani},
		\citenamefont {Horodecki},\ and\ \citenamefont {Horodecki}}]{Piani2008Broad}%
	\BibitemOpen
	\bibfield  {author} {\bibinfo {author} {\bibfnamefont {M.}~\bibnamefont
			{Piani}}, \bibinfo {author} {\bibfnamefont {P.}~\bibnamefont {Horodecki}},\
		and\ \bibinfo {author} {\bibfnamefont {R.}~\bibnamefont {Horodecki}},\
	}\bibfield  {title} {\bibinfo {title} {No-local-broadcasting theorem for
			multipartite quantum correlations},\ }\href
	{https://doi.org/10.1103/PhysRevLett.100.090502} {\bibfield  {journal}
		{\bibinfo  {journal} {Phys. Rev. Lett.}\ }\textbf {\bibinfo {volume} {100}},\
		\bibinfo {pages} {090502} (\bibinfo {year} {2008})}\BibitemShut {NoStop}%
	\bibitem [{\citenamefont {Parzygnat}\ \emph {et~al.}(2024)\citenamefont
		{Parzygnat}, \citenamefont {Fullwood}, \citenamefont {Buscemi},\ and\
		\citenamefont {Chiribella}}]{Parzygnat2024VirtualBroad}%
	\BibitemOpen
	\bibfield  {author} {\bibinfo {author} {\bibfnamefont {A.~J.}\ \bibnamefont
			{Parzygnat}}, \bibinfo {author} {\bibfnamefont {J.}~\bibnamefont {Fullwood}},
		\bibinfo {author} {\bibfnamefont {F.}~\bibnamefont {Buscemi}},\ and\ \bibinfo
		{author} {\bibfnamefont {G.}~\bibnamefont {Chiribella}},\ }\bibfield  {title}
	{\bibinfo {title} {Virtual quantum broadcasting},\ }\href
	{https://doi.org/10.1103/PhysRevLett.132.110203} {\bibfield  {journal}
		{\bibinfo  {journal} {Phys. Rev. Lett.}\ }\textbf {\bibinfo {volume} {132}},\
		\bibinfo {pages} {110203} (\bibinfo {year} {2024})}\BibitemShut {NoStop}%
	\bibitem [{\citenamefont {Lipka-Bartosik}\ and\ \citenamefont
		{Skrzypczyk}(2021{\natexlab{a}})}]{LipkaBarosik2021_universal}%
	\BibitemOpen
	\bibfield  {author} {\bibinfo {author} {\bibfnamefont {P.}~\bibnamefont
			{Lipka-Bartosik}}\ and\ \bibinfo {author} {\bibfnamefont {P.}~\bibnamefont
			{Skrzypczyk}},\ }\bibfield  {title} {\bibinfo {title} {All states are
			universal catalysts in quantum thermodynamics},\ }\href
	{https://doi.org/10.1103/PhysRevX.11.011061} {\bibfield  {journal} {\bibinfo
			{journal} {Phys. Rev. X}\ }\textbf {\bibinfo {volume} {11}},\ \bibinfo
		{pages} {011061} (\bibinfo {year} {2021}{\natexlab{a}})}\BibitemShut
	{NoStop}%
	\bibitem [{\citenamefont {Datta}\ \emph {et~al.}(2024)\citenamefont {Datta},
		\citenamefont {Kondra}, \citenamefont {Miller},\ and\ \citenamefont
		{Streltsov}}]{Datta2024_universal}%
	\BibitemOpen
	\bibfield  {author} {\bibinfo {author} {\bibfnamefont {C.}~\bibnamefont
			{Datta}}, \bibinfo {author} {\bibfnamefont {T.~V.}\ \bibnamefont {Kondra}},
		\bibinfo {author} {\bibfnamefont {M.}~\bibnamefont {Miller}},\ and\ \bibinfo
		{author} {\bibfnamefont {A.}~\bibnamefont {Streltsov}},\ }\bibfield  {title}
	{\bibinfo {title} {Entanglement catalysis for quantum states and noisy
			channels},\ }\href {https://doi.org/10.22331/q-2024-03-20-1290} {\bibfield
		{journal} {\bibinfo  {journal} {{Quantum}}\ }\textbf {\bibinfo {volume}
			{8}},\ \bibinfo {pages} {1290} (\bibinfo {year} {2024})}\BibitemShut
	{NoStop}%
	\bibitem [{\citenamefont {Korzekwa}\ and\ \citenamefont
		{Lostaglio}(2022)}]{Korzekwa2022_optimizing}%
	\BibitemOpen
	\bibfield  {author} {\bibinfo {author} {\bibfnamefont {K.}~\bibnamefont
			{Korzekwa}}\ and\ \bibinfo {author} {\bibfnamefont {M.}~\bibnamefont
			{Lostaglio}},\ }\bibfield  {title} {\bibinfo {title} {Optimizing
			thermalization},\ }\href {https://doi.org/10.1103/PhysRevLett.129.040602}
	{\bibfield  {journal} {\bibinfo  {journal} {Phys. Rev. Lett.}\ }\textbf
		{\bibinfo {volume} {129}},\ \bibinfo {pages} {040602} (\bibinfo {year}
		{2022})}\BibitemShut {NoStop}%
	\bibitem [{\citenamefont {Czartowski}\ \emph {et~al.}(2023)\citenamefont
		{Czartowski}, \citenamefont {de~Oliveira~Junior},\ and\ \citenamefont
		{Korzekwa}}]{Czartowski2023_thermalrecall}%
	\BibitemOpen
	\bibfield  {author} {\bibinfo {author} {\bibfnamefont {J.}~\bibnamefont
			{Czartowski}}, \bibinfo {author} {\bibfnamefont {A.}~\bibnamefont
			{de~Oliveira~Junior}},\ and\ \bibinfo {author} {\bibfnamefont
			{K.}~\bibnamefont {Korzekwa}},\ }\bibfield  {title} {\bibinfo {title}
		{Thermal recall: Memory-assisted markovian thermal processes},\ }\href
	{https://doi.org/10.1103/PRXQuantum.4.040304} {\bibfield  {journal} {\bibinfo
			{journal} {PRX Quantum}\ }\textbf {\bibinfo {volume} {4}},\ \bibinfo {pages}
		{040304} (\bibinfo {year} {2023})}\BibitemShut {NoStop}%
	\bibitem [{\citenamefont {Son}\ and\ \citenamefont
		{Ng}(2024{\natexlab{a}})}]{Son2024_CETO}%
	\BibitemOpen
	\bibfield  {author} {\bibinfo {author} {\bibfnamefont {J.}~\bibnamefont
			{Son}}\ and\ \bibinfo {author} {\bibfnamefont {N.~H.~Y.}\ \bibnamefont
			{Ng}},\ }\bibfield  {title} {\bibinfo {title} {Catalysis in action via
			elementary thermal operations},\ }\href
	{https://doi.org/10.1088/1367-2630/ad2413} {\bibfield  {journal} {\bibinfo
			{journal} {New J. Phys.}\ }\textbf {\bibinfo {volume} {26}},\ \bibinfo
		{pages} {033029} (\bibinfo {year} {2024}{\natexlab{a}})}\BibitemShut
	{NoStop}%
	\bibitem [{\citenamefont {Son}\ and\ \citenamefont
		{Ng}(2024{\natexlab{b}})}]{Son2024hierarchy}%
	\BibitemOpen
	\bibfield  {author} {\bibinfo {author} {\bibfnamefont {J.}~\bibnamefont
			{Son}}\ and\ \bibinfo {author} {\bibfnamefont {N.~H.~Y.}\ \bibnamefont
			{Ng}},\ }\bibfield  {title} {\bibinfo {title} {A hierarchy of thermal
			processes collapses under catalysis},\ }\href
	{https://doi.org/10.1088/2058-9565/ad7ef5} {\bibfield  {journal} {\bibinfo
			{journal} {Quantum Sci. Technol.}\ }\textbf {\bibinfo {volume} {10}},\
		\bibinfo {pages} {015011} (\bibinfo {year} {2024}{\natexlab{b}})}\BibitemShut
	{NoStop}%
	\bibitem [{\citenamefont {Vedral}\ \emph {et~al.}(1997)\citenamefont {Vedral},
		\citenamefont {Plenio}, \citenamefont {Rippin},\ and\ \citenamefont
		{Knight}}]{Vedral1997SEP}%
	\BibitemOpen
	\bibfield  {author} {\bibinfo {author} {\bibfnamefont {V.}~\bibnamefont
			{Vedral}}, \bibinfo {author} {\bibfnamefont {M.~B.}\ \bibnamefont {Plenio}},
		\bibinfo {author} {\bibfnamefont {M.~A.}\ \bibnamefont {Rippin}},\ and\
		\bibinfo {author} {\bibfnamefont {P.~L.}\ \bibnamefont {Knight}},\ }\bibfield
	{title} {\bibinfo {title} {Quantifying entanglement},\ }\href
	{https://doi.org/10.1103/PhysRevLett.78.2275} {\bibfield  {journal} {\bibinfo
			{journal} {Phys. Rev. Lett.}\ }\textbf {\bibinfo {volume} {78}},\ \bibinfo
		{pages} {2275} (\bibinfo {year} {1997})}\BibitemShut {NoStop}%
	\bibitem [{\citenamefont {Rains}(1998)}]{Rains1998SEP}%
	\BibitemOpen
	\bibfield  {author} {\bibinfo {author} {\bibfnamefont {E.~M.}\ \bibnamefont
			{Rains}},\ }\href {https://arxiv.org/abs/quant-ph/9707002} {\bibinfo {title}
		{Entanglement purification via separable superoperators}} (\bibinfo {year}
	{1998}),\ \Eprint {https://arxiv.org/abs/quant-ph/9707002}
	{arXiv:quant-ph/9707002 [quant-ph]} \BibitemShut {NoStop}%
	\bibitem [{\citenamefont {Faist}\ \emph {et~al.}(2015)\citenamefont {Faist},
		\citenamefont {Oppenheim},\ and\ \citenamefont {Renner}}]{Faist2015GP}%
	\BibitemOpen
	\bibfield  {author} {\bibinfo {author} {\bibfnamefont {P.}~\bibnamefont
			{Faist}}, \bibinfo {author} {\bibfnamefont {J.}~\bibnamefont {Oppenheim}},\
		and\ \bibinfo {author} {\bibfnamefont {R.}~\bibnamefont {Renner}},\
	}\bibfield  {title} {\bibinfo {title} {Gibbs-preserving maps outperform
			thermal operations in the quantum regime},\ }\href
	{https://doi.org/10.1088/1367-2630/17/4/043003} {\bibfield  {journal}
		{\bibinfo  {journal} {New J. Phys.}\ }\textbf {\bibinfo {volume} {17}},\
		\bibinfo {pages} {043003} (\bibinfo {year} {2015})}\BibitemShut {NoStop}%
	\bibitem [{\citenamefont {Keyl}\ and\ \citenamefont
		{Werner}(1999)}]{Keyl1999Cov}%
	\BibitemOpen
	\bibfield  {author} {\bibinfo {author} {\bibfnamefont {M.}~\bibnamefont
			{Keyl}}\ and\ \bibinfo {author} {\bibfnamefont {R.~F.}\ \bibnamefont
			{Werner}},\ }\bibfield  {title} {\bibinfo {title} {Optimal cloning of pure
			states, testing single clones},\ }\href {https://doi.org/10.1063/1.532887}
	{\bibfield  {journal} {\bibinfo  {journal} {J. Math. Phys.}\ }\textbf
		{\bibinfo {volume} {40}},\ \bibinfo {pages} {3283} (\bibinfo {year}
		{1999})}\BibitemShut {NoStop}%
	\bibitem [{\citenamefont {Gour}\ and\ \citenamefont
		{Spekkens}(2008)}]{Gour2008Cov}%
	\BibitemOpen
	\bibfield  {author} {\bibinfo {author} {\bibfnamefont {G.}~\bibnamefont
			{Gour}}\ and\ \bibinfo {author} {\bibfnamefont {R.~W.}\ \bibnamefont
			{Spekkens}},\ }\bibfield  {title} {\bibinfo {title} {The resource theory of
			quantum reference frames: manipulations and monotones},\ }\href
	{https://doi.org/10.1088/1367-2630/10/3/033023} {\bibfield  {journal}
		{\bibinfo  {journal} {New J. Phys.}\ }\textbf {\bibinfo {volume} {10}},\
		\bibinfo {pages} {033023} (\bibinfo {year} {2008})}\BibitemShut {NoStop}%
	\bibitem [{\citenamefont {Israel}(1976)}]{Israel1976Thermofield}%
	\BibitemOpen
	\bibfield  {author} {\bibinfo {author} {\bibfnamefont {W.}~\bibnamefont
			{Israel}},\ }\bibfield  {title} {\bibinfo {title} {Thermo-field dynamics of
			black holes},\ }\href
	{https://doi.org/https://doi.org/10.1016/0375-9601(76)90178-X} {\bibfield
		{journal} {\bibinfo  {journal} {Phys. Lett. A}\ }\textbf {\bibinfo {volume}
			{57}},\ \bibinfo {pages} {107} (\bibinfo {year} {1976})}\BibitemShut
	{NoStop}%
	\bibitem [{\citenamefont {Pinske}\ and\ \citenamefont
		{M\o{}lmer}(2024)}]{Pinske2024Censorship}%
	\BibitemOpen
	\bibfield  {author} {\bibinfo {author} {\bibfnamefont {J.}~\bibnamefont
			{Pinske}}\ and\ \bibinfo {author} {\bibfnamefont {K.}~\bibnamefont
			{M\o{}lmer}},\ }\bibfield  {title} {\bibinfo {title} {Censorship of quantum
			resources in quantum networks},\ }\href
	{https://doi.org/10.1103/PhysRevA.110.062404} {\bibfield  {journal} {\bibinfo
			{journal} {Phys. Rev. A}\ }\textbf {\bibinfo {volume} {110}},\ \bibinfo
		{pages} {062404} (\bibinfo {year} {2024})}\BibitemShut {NoStop}%
	\bibitem [{\citenamefont {Datta}(2009)}]{Datta2009_maxrel}%
	\BibitemOpen
	\bibfield  {author} {\bibinfo {author} {\bibfnamefont {N.}~\bibnamefont
			{Datta}},\ }\bibfield  {title} {\bibinfo {title} {Min- and max-relative
			entropies and a new entanglement monotone},\ }\href
	{https://doi.org/10.1109/TIT.2009.2018325} {\bibfield  {journal} {\bibinfo
			{journal} {IEEE Trans. Inf. Theory}\ }\textbf {\bibinfo {volume} {55}},\
		\bibinfo {pages} {2816} (\bibinfo {year} {2009})}\BibitemShut {NoStop}%
	\bibitem [{\citenamefont {Takagi}\ \emph {et~al.}(2017)\citenamefont {Takagi},
		\citenamefont {Yoder},\ and\ \citenamefont {Chuang}}]{Takagi2017Imaginarity}%
	\BibitemOpen
	\bibfield  {author} {\bibinfo {author} {\bibfnamefont {R.}~\bibnamefont
			{Takagi}}, \bibinfo {author} {\bibfnamefont {T.~J.}\ \bibnamefont {Yoder}},\
		and\ \bibinfo {author} {\bibfnamefont {I.~L.}\ \bibnamefont {Chuang}},\
	}\bibfield  {title} {\bibinfo {title} {Error rates and resource overheads of
			encoded three-qubit gates},\ }\href
	{https://doi.org/10.1103/PhysRevA.96.042302} {\bibfield  {journal} {\bibinfo
			{journal} {Phys. Rev. A}\ }\textbf {\bibinfo {volume} {96}},\ \bibinfo
		{pages} {042302} (\bibinfo {year} {2017})}\BibitemShut {NoStop}%
	\bibitem [{\citenamefont {Zhang}\ and\ \citenamefont {Li}(2024)}]{Zhang_2024a}%
	\BibitemOpen
	\bibfield  {author} {\bibinfo {author} {\bibfnamefont {L.}~\bibnamefont
			{Zhang}}\ and\ \bibinfo {author} {\bibfnamefont {N.}~\bibnamefont {Li}},\
	}\bibfield  {title} {\bibinfo {title} {Can imaginarity be broadcast via real
			operations?},\ }\href {https://doi.org/10.1088/1572-9494/ad6de5} {\bibfield
		{journal} {\bibinfo  {journal} {Commun. Theor. Phys.}\ }\textbf {\bibinfo
			{volume} {76}},\ \bibinfo {pages} {115104} (\bibinfo {year}
		{2024})}\BibitemShut {NoStop}%
	\bibitem [{\citenamefont {Lipka-Bartosik}\ and\ \citenamefont
		{Skrzypczyk}(2021{\natexlab{b}})}]{Lipka-Bartosik2021Teleportation}%
	\BibitemOpen
	\bibfield  {author} {\bibinfo {author} {\bibfnamefont {P.}~\bibnamefont
			{Lipka-Bartosik}}\ and\ \bibinfo {author} {\bibfnamefont {P.}~\bibnamefont
			{Skrzypczyk}},\ }\bibfield  {title} {\bibinfo {title} {Catalytic quantum
			teleportation},\ }\href {https://doi.org/10.1103/PhysRevLett.127.080502}
	{\bibfield  {journal} {\bibinfo  {journal} {Phys. Rev. Lett.}\ }\textbf
		{\bibinfo {volume} {127}},\ \bibinfo {pages} {080502} (\bibinfo {year}
		{2021}{\natexlab{b}})}\BibitemShut {NoStop}%
	\bibitem [{\citenamefont {Kuroiwa}\ and\ \citenamefont
		{Yamasaki}(2020)}]{Kuroiwa2020catreplication}%
	\BibitemOpen
	\bibfield  {author} {\bibinfo {author} {\bibfnamefont {K.}~\bibnamefont
			{Kuroiwa}}\ and\ \bibinfo {author} {\bibfnamefont {H.}~\bibnamefont
			{Yamasaki}},\ }\bibfield  {title} {\bibinfo {title} {General {Q}uantum
			{R}esource {T}heories: {D}istillation, {F}ormation and {C}onsistent
			{R}esource {M}easures},\ }\href {https://doi.org/10.22331/q-2020-11-01-355}
	{\bibfield  {journal} {\bibinfo  {journal} {{Quantum}}\ }\textbf {\bibinfo
			{volume} {4}},\ \bibinfo {pages} {355} (\bibinfo {year} {2020})}\BibitemShut
	{NoStop}%
	\bibitem [{\citenamefont {Wilming}\ \emph {et~al.}(2017)\citenamefont
		{Wilming}, \citenamefont {Gallego},\ and\ \citenamefont
		{Eisert}}]{Wilming2017FreeE}%
	\BibitemOpen
	\bibfield  {author} {\bibinfo {author} {\bibfnamefont {H.}~\bibnamefont
			{Wilming}}, \bibinfo {author} {\bibfnamefont {R.}~\bibnamefont {Gallego}},\
		and\ \bibinfo {author} {\bibfnamefont {J.}~\bibnamefont {Eisert}},\
	}\bibfield  {title} {\bibinfo {title} {Axiomatic characterization of the
			quantum relative entropy and free energy},\ }\href
	{https://doi.org/10.3390/e19060241} {\bibfield  {journal} {\bibinfo
			{journal} {Entropy}\ }\textbf {\bibinfo {volume} {19}},\ \bibinfo {pages}
		{241} (\bibinfo {year} {2017})}\BibitemShut {NoStop}%
	\bibitem [{\citenamefont {Wu}\ \emph {et~al.}(2021)\citenamefont {Wu},
		\citenamefont {Kondra}, \citenamefont {Rana}, \citenamefont {Scandolo},
		\citenamefont {Xiang}, \citenamefont {Li}, \citenamefont {Guo},\ and\
		\citenamefont {Streltsov}}]{Wu2021Imaginarity}%
	\BibitemOpen
	\bibfield  {author} {\bibinfo {author} {\bibfnamefont {K.-D.}\ \bibnamefont
			{Wu}}, \bibinfo {author} {\bibfnamefont {T.~V.}\ \bibnamefont {Kondra}},
		\bibinfo {author} {\bibfnamefont {S.}~\bibnamefont {Rana}}, \bibinfo {author}
		{\bibfnamefont {C.~M.}\ \bibnamefont {Scandolo}}, \bibinfo {author}
		{\bibfnamefont {G.-Y.}\ \bibnamefont {Xiang}}, \bibinfo {author}
		{\bibfnamefont {C.-F.}\ \bibnamefont {Li}}, \bibinfo {author} {\bibfnamefont
			{G.-C.}\ \bibnamefont {Guo}},\ and\ \bibinfo {author} {\bibfnamefont
			{A.}~\bibnamefont {Streltsov}},\ }\bibfield  {title} {\bibinfo {title}
		{Resource theory of imaginarity: Quantification and state conversion},\
	}\href {https://doi.org/10.1103/PhysRevA.103.032401} {\bibfield  {journal}
		{\bibinfo  {journal} {Phys. Rev. A}\ }\textbf {\bibinfo {volume} {103}},\
		\bibinfo {pages} {032401} (\bibinfo {year} {2021})}\BibitemShut {NoStop}%
	\bibitem [{\citenamefont {Xi}\ \emph {et~al.}(2015)\citenamefont {Xi},
		\citenamefont {Li},\ and\ \citenamefont {Fan}}]{Xi2015CoherenceSuperAdd}%
	\BibitemOpen
	\bibfield  {author} {\bibinfo {author} {\bibfnamefont {Z.}~\bibnamefont
			{Xi}}, \bibinfo {author} {\bibfnamefont {Y.}~\bibnamefont {Li}},\ and\
		\bibinfo {author} {\bibfnamefont {H.}~\bibnamefont {Fan}},\ }\bibfield
	{title} {\bibinfo {title} {Quantum coherence and correlations in quantum
			system},\ }\href {https://doi.org/10.1038/srep10922} {\bibfield  {journal}
		{\bibinfo  {journal} {Sci. Rep.}\ }\textbf {\bibinfo {volume} {5}},\ \bibinfo
		{pages} {10922} (\bibinfo {year} {2015})}\BibitemShut {NoStop}%
	\bibitem [{\citenamefont {Marvian}\ and\ \citenamefont
		{Spekkens}(2013)}]{Marvian2013GCov}%
	\BibitemOpen
	\bibfield  {author} {\bibinfo {author} {\bibfnamefont {I.}~\bibnamefont
			{Marvian}}\ and\ \bibinfo {author} {\bibfnamefont {R.~W.}\ \bibnamefont
			{Spekkens}},\ }\bibfield  {title} {\bibinfo {title} {The theory of
			manipulations of pure state asymmetry: I. basic tools, equivalence classes
			and single copy transformations},\ }\href
	{https://doi.org/10.1088/1367-2630/15/3/033001} {\bibfield  {journal}
		{\bibinfo  {journal} {New J. Phys.}\ }\textbf {\bibinfo {volume} {15}},\
		\bibinfo {pages} {033001} (\bibinfo {year} {2013})}\BibitemShut {NoStop}%
	\bibitem [{\citenamefont {Christandl}\ and\ \citenamefont
		{Winter}(2004)}]{Christandl2004Squashed}%
	\BibitemOpen
	\bibfield  {author} {\bibinfo {author} {\bibfnamefont {M.}~\bibnamefont
			{Christandl}}\ and\ \bibinfo {author} {\bibfnamefont {A.}~\bibnamefont
			{Winter}},\ }\bibfield  {title} {\bibinfo {title} {``{S}quashed
			entanglement'': An additive entanglement measure},\ }\href
	{https://doi.org/https://doi.org/10.1063/1.1643788} {\bibfield  {journal}
		{\bibinfo  {journal} {J. Math. Phys.}\ }\textbf {\bibinfo {volume} {45}},\
		\bibinfo {pages} {829} (\bibinfo {year} {2004})}\BibitemShut {NoStop}%
	\bibitem [{\citenamefont {Piani}(2009)}]{Piani2009MRelEnt}%
	\BibitemOpen
	\bibfield  {author} {\bibinfo {author} {\bibfnamefont {M.}~\bibnamefont
			{Piani}},\ }\bibfield  {title} {\bibinfo {title} {Relative entropy of
			entanglement and restricted measurements},\ }\href
	{https://doi.org/10.1103/PhysRevLett.103.160504} {\bibfield  {journal}
		{\bibinfo  {journal} {Phys. Rev. Lett.}\ }\textbf {\bibinfo {volume} {103}},\
		\bibinfo {pages} {160504} (\bibinfo {year} {2009})}\BibitemShut {NoStop}%
	\bibitem [{\citenamefont {Ganardi}\ \emph {et~al.}(2024)\citenamefont
		{Ganardi}, \citenamefont {Kondra},\ and\ \citenamefont
		{Streltsov}}]{arxiv_Ganardi_2023a}%
	\BibitemOpen
	\bibfield  {author} {\bibinfo {author} {\bibfnamefont {R.}~\bibnamefont
			{Ganardi}}, \bibinfo {author} {\bibfnamefont {T.~V.}\ \bibnamefont
			{Kondra}},\ and\ \bibinfo {author} {\bibfnamefont {A.}~\bibnamefont
			{Streltsov}},\ }\bibfield  {title} {\bibinfo {title} {Catalytic and
			asymptotic equivalence for quantum entanglement},\ }\href
	{https://doi.org/10.1103/PhysRevLett.133.250201} {\bibfield  {journal}
		{\bibinfo  {journal} {Phys. Rev. Lett.}\ }\textbf {\bibinfo {volume} {133}},\
		\bibinfo {pages} {250201} (\bibinfo {year} {2024})}\BibitemShut {NoStop}%
	\bibitem [{\citenamefont {Ferrari}\ \emph {et~al.}(2023)\citenamefont
		{Ferrari}, \citenamefont {Lami}, \citenamefont {Theurer},\ and\ \citenamefont
		{Plenio}}]{Ferrari2023CVResources}%
	\BibitemOpen
	\bibfield  {author} {\bibinfo {author} {\bibfnamefont {G.}~\bibnamefont
			{Ferrari}}, \bibinfo {author} {\bibfnamefont {L.}~\bibnamefont {Lami}},
		\bibinfo {author} {\bibfnamefont {T.}~\bibnamefont {Theurer}},\ and\ \bibinfo
		{author} {\bibfnamefont {M.~B.}\ \bibnamefont {Plenio}},\ }\bibfield  {title}
	{\bibinfo {title} {Asymptotic state transformations of continuous variable
			resources},\ }\href {https://doi.org/10.1007/s00220-022-04523-6} {\bibfield
		{journal} {\bibinfo  {journal} {Commun. Math. Phys.}\ }\textbf {\bibinfo
			{volume} {398}},\ \bibinfo {pages} {291} (\bibinfo {year}
		{2023})}\BibitemShut {NoStop}%
	\bibitem [{\citenamefont {Lie}\ \emph {et~al.}(2025)\citenamefont {Lie},
		\citenamefont {Son}, \citenamefont {Boes}, \citenamefont {Ng},\ and\
		\citenamefont {Wilming}}]{Lie2025Thermal}%
	\BibitemOpen
	\bibfield  {author} {\bibinfo {author} {\bibfnamefont {S.~H.}\ \bibnamefont
			{Lie}}, \bibinfo {author} {\bibfnamefont {J.}~\bibnamefont {Son}}, \bibinfo
		{author} {\bibfnamefont {P.}~\bibnamefont {Boes}}, \bibinfo {author}
		{\bibfnamefont {N.~H.~Y.}\ \bibnamefont {Ng}},\ and\ \bibinfo {author}
		{\bibfnamefont {H.}~\bibnamefont {Wilming}},\ }\href
	{https://arxiv.org/abs/2507.16637} {\bibinfo {title} {Thermal operations from
			informational equilibrium}} (\bibinfo {year} {2025}),\ \Eprint
	{https://arxiv.org/abs/2507.16637} {arXiv:2507.16637 [quant-ph]} \BibitemShut
	{NoStop}%
	\bibitem [{\citenamefont {Karvonen}(2021)}]{Karvonen2021Nonlocality}%
	\BibitemOpen
	\bibfield  {author} {\bibinfo {author} {\bibfnamefont {M.}~\bibnamefont
			{Karvonen}},\ }\bibfield  {title} {\bibinfo {title} {Neither contextuality
			nor nonlocality admits catalysts},\ }\href
	{https://doi.org/10.1103/PhysRevLett.127.160402} {\bibfield  {journal}
		{\bibinfo  {journal} {Phys. Rev. Lett.}\ }\textbf {\bibinfo {volume} {127}},\
		\bibinfo {pages} {160402} (\bibinfo {year} {2021})}\BibitemShut {NoStop}%
	\bibitem [{\citenamefont {Aubrun}\ \emph {et~al.}(2021)\citenamefont {Aubrun},
		\citenamefont {Lami}, \citenamefont {Palazuelos},\ and\ \citenamefont
		{Pl{\'a}vala}}]{Aubrun_2021}%
	\BibitemOpen
	\bibfield  {author} {\bibinfo {author} {\bibfnamefont {G.}~\bibnamefont
			{Aubrun}}, \bibinfo {author} {\bibfnamefont {L.}~\bibnamefont {Lami}},
		\bibinfo {author} {\bibfnamefont {C.}~\bibnamefont {Palazuelos}},\ and\
		\bibinfo {author} {\bibfnamefont {M.}~\bibnamefont {Pl{\'a}vala}},\
	}\bibfield  {title} {\bibinfo {title} {Entangleability of cones},\ }\href
	{https://doi.org/10.1007/s00039-021-00565-5} {\bibfield  {journal} {\bibinfo
			{journal} {Geometric and Functional Analysis}\ }\textbf {\bibinfo {volume}
			{31}},\ \bibinfo {pages} {181} (\bibinfo {year} {2021})}\BibitemShut
	{NoStop}%
	\bibitem [{\citenamefont {van Dobben~de Bruyn}(2022)}]{deBruyn_2022}%
	\BibitemOpen
	\bibfield  {author} {\bibinfo {author} {\bibfnamefont {J.}~\bibnamefont {van
				Dobben~de Bruyn}},\ }\href {https://arxiv.org/abs/2009.11843} {\bibinfo
		{title} {Tensor products of convex cones}} (\bibinfo {year} {2022}),\ \Eprint
	{https://arxiv.org/abs/2009.11843} {arXiv:2009.11843 [math.FA]} \BibitemShut
	{NoStop}%
	\bibitem [{\citenamefont {Aubrun}\ \emph {et~al.}(2022)\citenamefont {Aubrun},
		\citenamefont {Lami}, \citenamefont {Palazuelos},\ and\ \citenamefont
		{Pl\'avala}}]{aubrun2022entanglement}%
	\BibitemOpen
	\bibfield  {author} {\bibinfo {author} {\bibfnamefont {G.}~\bibnamefont
			{Aubrun}}, \bibinfo {author} {\bibfnamefont {L.}~\bibnamefont {Lami}},
		\bibinfo {author} {\bibfnamefont {C.}~\bibnamefont {Palazuelos}},\ and\
		\bibinfo {author} {\bibfnamefont {M.}~\bibnamefont {Pl\'avala}},\ }\bibfield
	{title} {\bibinfo {title} {Entanglement and superposition are equivalent
			concepts in any physical theory},\ }\href
	{https://doi.org/10.1103/PhysRevLett.128.160402} {\bibfield  {journal}
		{\bibinfo  {journal} {Phys. Rev. Lett.}\ }\textbf {\bibinfo {volume} {128}},\
		\bibinfo {pages} {160402} (\bibinfo {year} {2022})}\BibitemShut {NoStop}%
	\bibitem [{\citenamefont {Janotta}\ and\ \citenamefont
		{Hinrichsen}(2014)}]{janotta2014generalized}%
	\BibitemOpen
	\bibfield  {author} {\bibinfo {author} {\bibfnamefont {P.}~\bibnamefont
			{Janotta}}\ and\ \bibinfo {author} {\bibfnamefont {H.}~\bibnamefont
			{Hinrichsen}},\ }\bibfield  {title} {\bibinfo {title} {Generalized
			probability theories: what determines the structure of quantum theory?},\
	}\href {https://doi.org/https://doi.org/10.1088/1751-8113/47/32/323001}
	{\bibfield  {journal} {\bibinfo  {journal} {J. Phys. A: Math. Theor.}\
		}\textbf {\bibinfo {volume} {47}},\ \bibinfo {pages} {323001} (\bibinfo
		{year} {2014})}\BibitemShut {NoStop}%
	\bibitem [{\citenamefont {Pl{\'a}vala}(2023)}]{plavala2023general}%
	\BibitemOpen
	\bibfield  {author} {\bibinfo {author} {\bibfnamefont {M.}~\bibnamefont
			{Pl{\'a}vala}},\ }\bibfield  {title} {\bibinfo {title} {General probabilistic
			theories: An introduction},\ }\href
	{https://doi.org/https://doi.org/10.1016/j.physrep.2023.09.001} {\bibfield
		{journal} {\bibinfo  {journal} {Phys. Rep.}\ }\textbf {\bibinfo {volume}
			{1033}},\ \bibinfo {pages} {1} (\bibinfo {year} {2023})}\BibitemShut
	{NoStop}%
	\bibitem [{\citenamefont {Boyd}\ and\ \citenamefont
		{Vandenberghe}(2004)}]{boyd2004convex}%
	\BibitemOpen
	\bibfield  {author} {\bibinfo {author} {\bibfnamefont {S.}~\bibnamefont
			{Boyd}}\ and\ \bibinfo {author} {\bibfnamefont {L.}~\bibnamefont
			{Vandenberghe}},\ }\href@noop {} {\emph {\bibinfo {title} {Convex
				Optimization}}}\ (\bibinfo  {publisher} {Cambridge University Press,
		Cambridge},\ \bibinfo {year} {2004})\BibitemShut {NoStop}%
	\bibitem [{\citenamefont {Aliprantis}\ and\ \citenamefont
		{Tourky}(2007)}]{aliprantis2007cones}%
	\BibitemOpen
	\bibfield  {author} {\bibinfo {author} {\bibfnamefont {C.~D.}\ \bibnamefont
			{Aliprantis}}\ and\ \bibinfo {author} {\bibfnamefont {R.}~\bibnamefont
			{Tourky}},\ }\href@noop {} {\emph {\bibinfo {title} {Cones and Duality}}},\
	Graduate Studies in Mathematics\ (\bibinfo  {publisher} {American
		Mathematical Society, Providence, RI},\ \bibinfo {year} {2007})\BibitemShut
	{NoStop}%
	\bibitem [{\citenamefont {Boes}\ \emph {et~al.}(2020)\citenamefont {Boes},
		\citenamefont {Gallego}, \citenamefont {Ng}, \citenamefont {Eisert},\ and\
		\citenamefont {Wilming}}]{boes2020passing}%
	\BibitemOpen
	\bibfield  {author} {\bibinfo {author} {\bibfnamefont {P.}~\bibnamefont
			{Boes}}, \bibinfo {author} {\bibfnamefont {R.}~\bibnamefont {Gallego}},
		\bibinfo {author} {\bibfnamefont {N.~H.~Y.}\ \bibnamefont {Ng}}, \bibinfo
		{author} {\bibfnamefont {J.}~\bibnamefont {Eisert}},\ and\ \bibinfo {author}
		{\bibfnamefont {H.}~\bibnamefont {Wilming}},\ }\bibfield  {title} {\bibinfo
		{title} {By-passing fluctuation theorems},\ }\href
	{https://doi.org/10.22331/q-2020-02-20-231} {\bibfield  {journal} {\bibinfo
			{journal} {{Quantum}}\ }\textbf {\bibinfo {volume} {4}},\ \bibinfo {pages}
		{231} (\bibinfo {year} {2020})}\BibitemShut {NoStop}%
	\bibitem [{\citenamefont {Gour}(2017)}]{Gour2017Affine}%
	\BibitemOpen
	\bibfield  {author} {\bibinfo {author} {\bibfnamefont {G.}~\bibnamefont
			{Gour}},\ }\bibfield  {title} {\bibinfo {title} {Quantum resource theories in
			the single-shot regime},\ }\href {https://doi.org/10.1103/PhysRevA.95.062314}
	{\bibfield  {journal} {\bibinfo  {journal} {Phys. Rev. A}\ }\textbf {\bibinfo
			{volume} {95}},\ \bibinfo {pages} {062314} (\bibinfo {year}
		{2017})}\BibitemShut {NoStop}%
	\bibitem [{\citenamefont {\ifmmode \acute{C}\else
			\'{C}\fi{}wikli\ifmmode~\acute{n}\else \'{n}\fi{}ski}\ \emph
		{et~al.}(2015)\citenamefont {\ifmmode \acute{C}\else
			\'{C}\fi{}wikli\ifmmode~\acute{n}\else \'{n}\fi{}ski}, \citenamefont
		{Studzi\ifmmode~\acute{n}\else \'{n}\fi{}ski}, \citenamefont {Horodecki},\
		and\ \citenamefont {Oppenheim}}]{Cwiklinski2015GPC}%
	\BibitemOpen
	\bibfield  {author} {\bibinfo {author} {\bibfnamefont {P.}~\bibnamefont
			{\ifmmode \acute{C}\else \'{C}\fi{}wikli\ifmmode~\acute{n}\else
				\'{n}\fi{}ski}}, \bibinfo {author} {\bibfnamefont {M.}~\bibnamefont
			{Studzi\ifmmode~\acute{n}\else \'{n}\fi{}ski}}, \bibinfo {author}
		{\bibfnamefont {M.}~\bibnamefont {Horodecki}},\ and\ \bibinfo {author}
		{\bibfnamefont {J.}~\bibnamefont {Oppenheim}},\ }\bibfield  {title} {\bibinfo
		{title} {Limitations on the evolution of quantum coherences: Towards fully
			quantum second laws of thermodynamics},\ }\href
	{https://doi.org/10.1103/PhysRevLett.115.210403} {\bibfield  {journal}
		{\bibinfo  {journal} {Phys. Rev. Lett.}\ }\textbf {\bibinfo {volume} {115}},\
		\bibinfo {pages} {210403} (\bibinfo {year} {2015})}\BibitemShut {NoStop}%
	\bibitem [{\citenamefont {Gour}\ \emph {et~al.}(2018)\citenamefont {Gour},
		\citenamefont {Jennings}, \citenamefont {Buscemi}, \citenamefont {Duan},\
		and\ \citenamefont {Marvian}}]{Gour2018GPC}%
	\BibitemOpen
	\bibfield  {author} {\bibinfo {author} {\bibfnamefont {G.}~\bibnamefont
			{Gour}}, \bibinfo {author} {\bibfnamefont {D.}~\bibnamefont {Jennings}},
		\bibinfo {author} {\bibfnamefont {F.}~\bibnamefont {Buscemi}}, \bibinfo
		{author} {\bibfnamefont {R.}~\bibnamefont {Duan}},\ and\ \bibinfo {author}
		{\bibfnamefont {I.}~\bibnamefont {Marvian}},\ }\bibfield  {title} {\bibinfo
		{title} {Quantum majorization and a complete set of entropic conditions for
			quantum thermodynamics},\ }\href {https://doi.org/10.1038/s41467-018-06261-7}
	{\bibfield  {journal} {\bibinfo  {journal} {Nat. Commun.}\ }\textbf {\bibinfo
			{volume} {9}},\ \bibinfo {pages} {5352} (\bibinfo {year} {2018})}\BibitemShut
	{NoStop}%
	\bibitem [{\citenamefont {Horov{\'a}}\ \emph {et~al.}(2022)\citenamefont
		{Horov{\'a}}, \citenamefont {St{\'a}rek}, \citenamefont {Mi{\v c}uda},
		\citenamefont {Kol{\'a}{\v r}}, \citenamefont {Fiur{\'a}{\v s}ek},\ and\
		\citenamefont {Filip}}]{Horova_2022}%
	\BibitemOpen
	\bibfield  {author} {\bibinfo {author} {\bibfnamefont {N.}~\bibnamefont
			{Horov{\'a}}}, \bibinfo {author} {\bibfnamefont {R.}~\bibnamefont
			{St{\'a}rek}}, \bibinfo {author} {\bibfnamefont {M.}~\bibnamefont {Mi{\v
					c}uda}}, \bibinfo {author} {\bibfnamefont {M.}~\bibnamefont {Kol{\'a}{\v
					r}}}, \bibinfo {author} {\bibfnamefont {J.}~\bibnamefont {Fiur{\'a}{\v
					s}ek}},\ and\ \bibinfo {author} {\bibfnamefont {R.}~\bibnamefont {Filip}},\
	}\bibfield  {title} {\bibinfo {title} {Deterministic controlled enhancement
			of local quantum coherence},\ }\href
	{https://doi.org/10.1038/s41598-022-26450-1} {\bibfield  {journal} {\bibinfo
			{journal} {Sci. Rep.}\ }\textbf {\bibinfo {volume} {12}},\ \bibinfo {pages}
		{22455} (\bibinfo {year} {2022})}\BibitemShut {NoStop}%
	\bibitem [{\citenamefont {Eisert}\ \emph {et~al.}(2003)\citenamefont {Eisert},
		\citenamefont {Audenaert},\ and\ \citenamefont {Plenio}}]{Eisert_2003}%
	\BibitemOpen
	\bibfield  {author} {\bibinfo {author} {\bibfnamefont {J.}~\bibnamefont
			{Eisert}}, \bibinfo {author} {\bibfnamefont {K.}~\bibnamefont {Audenaert}},\
		and\ \bibinfo {author} {\bibfnamefont {M.~B.}\ \bibnamefont {Plenio}},\
	}\bibfield  {title} {\bibinfo {title} {Remarks on entanglement measures and
			non-local state distinguishability},\ }\href
	{https://doi.org/10.1088/0305-4470/36/20/316} {\bibfield  {journal} {\bibinfo
			{journal} {Journal of Physics A: Mathematical and General}\ }\textbf
		{\bibinfo {volume} {36}},\ \bibinfo {pages} {5605} (\bibinfo {year}
		{2003})}\BibitemShut {NoStop}%
	\bibitem [{\citenamefont {Ganardi}\ \emph {et~al.}(2025)\citenamefont
		{Ganardi}, \citenamefont {Kondra}, \citenamefont {Ng},\ and\ \citenamefont
		{Streltsov}}]{arxiv_Ganardi_2024}%
	\BibitemOpen
	\bibfield  {author} {\bibinfo {author} {\bibfnamefont {R.}~\bibnamefont
			{Ganardi}}, \bibinfo {author} {\bibfnamefont {T.~V.}\ \bibnamefont {Kondra}},
		\bibinfo {author} {\bibfnamefont {N.~H.~Y.}\ \bibnamefont {Ng}},\ and\
		\bibinfo {author} {\bibfnamefont {A.}~\bibnamefont {Streltsov}},\ }\bibfield
	{title} {\bibinfo {title} {Second law of entanglement manipulation with an
			entanglement battery},\ }\href {https://doi.org/10.1103/kl56-p2vb} {\bibfield
		{journal} {\bibinfo  {journal} {Phys. Rev. Lett.}\ }\textbf {\bibinfo
			{volume} {135}},\ \bibinfo {pages} {010202} (\bibinfo {year}
		{2025})}\BibitemShut {NoStop}%
	\bibitem [{\citenamefont {Alhambra}\ \emph {et~al.}(2019)\citenamefont
		{Alhambra}, \citenamefont {Masanes}, \citenamefont {Oppenheim},\ and\
		\citenamefont {Perry}}]{Alhambra_2019}%
	\BibitemOpen
	\bibfield  {author} {\bibinfo {author} {\bibfnamefont {A.~M.}\ \bibnamefont
			{Alhambra}}, \bibinfo {author} {\bibfnamefont {L.}~\bibnamefont {Masanes}},
		\bibinfo {author} {\bibfnamefont {J.}~\bibnamefont {Oppenheim}},\ and\
		\bibinfo {author} {\bibfnamefont {C.}~\bibnamefont {Perry}},\ }\bibfield
	{title} {\bibinfo {title} {Entanglement fluctuation theorems},\ }\href
	{https://doi.org/10.1103/PhysRevA.100.012317} {\bibfield  {journal} {\bibinfo
			{journal} {Phys. Rev. A}\ }\textbf {\bibinfo {volume} {100}},\ \bibinfo
		{pages} {012317} (\bibinfo {year} {2019})}\BibitemShut {NoStop}%
\end{thebibliography}
\end{document}